\def\DRAFT{0}   
\newcommand{\pssi}{\par\smallskip\indent}
\newcommand{\pssn}{\par\smallskip\noindent}
\newcommand{\pmsi}{\par\medskip\indent}
\newcommand{\pmsn}{\par\medskip\noindent}
\newcommand{\pbsn}{\par\bigskip\noindent}
\newcommand{\pnsi}{\par\indent}
\newcommand{\pnsn}{\par\noindent}
\newcommand{\mylabel}[1]{\label{#1}\if\DRAFT1\marginnote{\footnotesize\color{red}#1}\fi}
\newcommand{\prlabel}[1]{\if\DRAFT1\marginnote{\footnotesize\color{red}#1}\fi}
\newcommand{\emdef}[1]{\textsf{\emph{#1}}}
\newcommand{\PD}{{\mathsf{PD}}}
\DeclareMathOperator{\APD}{PD}
\newcommand{\lang}{\mathcal{L}}
\newcommand{\apd}{{\hat a}_{\APD}}
\newcommand{\quo}[2]{{#1}^{-1} {#2}}
\newcommand{\eqmod}{\doteq}
\DeclareMathOperator{\lf}{\textsf{n}}
\newcommand{\regexp}{\textbf{r}}
\newcommand{\regexpc}{\textbf{v}}
\newcommand{\letter}{x}
\newcommand{\lettera}{y}
\DeclareMathOperator{\REspec}{\reg \sspec[\Sigma]}
\DeclareMathOperator{\const}{\textsf{c}}
\DeclareMathOperator{\SoS}{\textsf{SS}}
\DeclareMathOperator{\ws}{\textsf{s}}
\newcommand{\concat}{\cdot}
\newcommand{\disj}{+}
\newcommand{\speca}{F}
\newcommand{\specb}{G}
\newcommand{\specc}{C}
\newcommand{\specd}{A}
\newcommand{\tuple}[2]{#1/#2}
\newcommand{\pair}[2]{\left( #1, #2 \right)}
\newcommand{\sseq}[0]{\subseteq}
\newcommand{\sequ}[1]{\langle#1\rangle}  
\newcommand{\pto}[0]{\dashrightarrow} 
\newcommand{\undef}[0]{\bot}
\newcommand{\lop}[0]{\odot}
\newcommand{\tcomp}[0]{\circledcirc} 
\DeclareMathOperator{\REG}{REG}
\DeclareMathOperator{\sspec}{SSP}
\DeclareMathOperator{\srel}{PSP}
\DeclareMathOperator{\editops}{ED}
\DeclareMathOperator{\srelinvar}{PSP^{invar}}
\DeclareMathOperator{\wo}{wo}
\DeclareMathOperator{\al}{alph}
\DeclareMathOperator{\spleft}{left}
\DeclareMathOperator{\spright}{right}
\DeclareMathOperator{\sprset}{rset}
\DeclareMathOperator{\Labels}{Labels}
\DeclareMathOperator{\aarr}{ARR}    
\DeclareMathOperator{\outof}{notIn}
\DeclareMathOperator{\mon}{mon}
\newcommand{\garr}{\aarr_\Gamma}    
\newcommand{\None}{\ensuremath{\mathtt{None}}\xspace}
\newcommand{\False}{\ensuremath{\mathtt{False}}\xspace}
\newcommand{\nonEmptyW}[0]{\ensuremath{\mathtt{nonEmptyW}}}
\newcommand{\emptyP}[0]{\ensuremath{\mathtt{emptyP}}}
\newcommand{\identityP}[0]{\ensuremath{\mathtt{identityP}}}
\newcommand{\functionalityP}[0]{\ensuremath{\mathtt{functionalityP}}}
\newcommand{\cI}{\ensuremath{\mathcal{I}}}
\newcommand{\cali}{\ensuremath{\mathcal{I}}}
\newcommand{\calr}{\ensuremath{\mathcal{R}}}
\newcommand{\call}{\ensuremath{\mathcal{L}}}
\newcommand{\N}[0]{\mathbb N}
\newcommand{\srep}[1]{\underline{#1}} 
\newcommand{\eset}{\emptyset}    
\newcommand{\ew}{\varepsilon}    
\newcommand{\ews}{\bm e}         
\newcommand{\bmoplus}{\bm\oplus}
\newcommand{\erel}{\bm\oslash}    
\newcommand{\sany}{\bm\forall}
\newcommand{\sone}{\bm\exists}
\newcommand{\snone}{{/\!\!\!\sone}}
\newcommand{\sdiff}[2]{#1/#2{\bm\neq}}
\newcommand{\ssame}[1]{#1/{\bm=}}
\newcommand{\spp}{\mathsf{p}}
\newcommand{\sppinv}{\spp^{-1}}
\newcommand{\rinp}[0]{\downarrow} 
\newcommand{\rout}[0]{\uparrow} 
\newcommand{\nel}[1]{\ew_{\mon#1}} 
\newcommand{\lof}[1]{\mathcal{L}(#1)}   
\newcommand{\rel}[1]{\mathcal{R}(#1)}   
\newcommand{\beh}[1]{\mathcal{I}(#1)}
\newcommand{\card}[1]{|#1|}
\newcommand{\sz}[1]{|#1|}
\newcommand{\szabc}[1]{|#1|}
\newcommand{\szw}[1]{\|#1\|}    
\newcommand{\reg}[1]{\REG #1}
\newcommand{\rer}[0]{\mathbf r}
\newcommand{\res}[0]{\mathbf s}
\newcommand{\auta}[0]{\hat a}
\newcommand{\autb}[0]{\hat b}
\newcommand{\autg}[0]{\hat g}
\newcommand{\auth}[0]{\hat h}
\newcommand{\tri}[0]{\hat i}
\newcommand{\trs}[0]{\hat s}
\newcommand{\trt}[0]{\hat t}
\newcommand{\trsinv}[0]{\hat s^{-1}}
\newcommand{\trtinv}[0]{\hat t^{-1}}
\newcommand{\tru}[0]{\hat u}
\newcommand{\trv}[0]{\hat v}
\begin{document}

\title{Regular Expressions and Transducers over Alphabet-invariant and User-defined Labels\thanks{Research supported by NSERC (Canada) and by FCT project UID/MAT/00144/2013 (Portugal).}}

\author{Stavros Konstantinidis\inst{1} \and Nelma Moreira\inst{2} \and Rogerio Reis\inst{2} \and Joshua Young\inst{1}}

\institute{
Saint Mary's University, Halifax, Nova Scotia, Canada,\\
\email{s.konstantinidis@smu.ca}, 
\email{jyo04@hotmail.com}
\and
CMUP \& DCC, Faculdade de Ci{\^e}ncias da Universidade do Porto,
Rua do Campo Alegre, 4169-007 Porto Portugal
\email{\{nam,rvr\}@dcc.fc.up.pt}}

\maketitle

\begin{abstract}
We are interested in regular expressions and transducers that represent word relations in an alphabet-invariant way---for example, the set of all word pairs u,v where v is a prefix of u independently of what the alphabet is. Current software systems of formal language objects do not have a mechanism to define such objects. We define transducers in which transition labels involve what we call set specifications, some of which are alphabet invariant. In fact, we give a more broad definition of automata-type objects, called labelled graphs, where each transition label can be any string, as long as that string represents a subset of a certain monoid. Then, the behaviour of the labelled graph is a subset of that monoid. We do the same for regular expressions. We obtain extensions of  a few classic algorithmic constructions on ordinary regular expressions and transducers at the broad level of labelled graphs and in such a way that the computational efficiency of the extended constructions is not sacrificed. For regular expressions with set specs we obtain the corresponding partial derivative automata. For transducers with set specs we obtain further algorithms that can be applied to questions about independent regular languages, in particular the witness version of the independent property satisfaction question. 
\keywords{Alphabet-invariant transducers, regular expressions, partial derivatives, algorithms, monoids}
\end{abstract}

\section{Introduction}\mylabel{SEC:Introduction}
We are interested in 2D regular expressions and transducers over alphabets whose cardinality is not fixed, or whose alphabet is even unknown. In particular, assume that the alphabet is 
\[\Gamma=\{0,1,\ldots,n-1\}\] 
and consider the 2D regular expression
\[ \big(0/0+\cdots+(n-1)/(n-1)\big)^*\big(0/\ews+\cdots+(n-1)/\ews\big)^*, \]
where $\ews$ is the symbol for the empty string. This 2D regular expression has $O(n)$ symbols and describes the prefix relation, that is, all word pairs $(u,v)$ such that $v$ is a prefix of $u$. Similarly, consider the transducer in Fig.~\ref{FIG:sym:motiv}, which has $O(n^2)$ transitions. Current software systems of formal language objects require users to enter all these transitions in order to define and process the transducer. We want to be able to use special labels in transducers such as those in the transducer $\trt_{\rm sub2}$ in Fig.~\ref{FIG:sym:transd}. In that figure, the label $(\ssame{\sany})$ represents the set $\{(a,a)\mid a\in\Gamma\}$ and the label $(\sdiff{\sany}{\sany})$ represents the set $\{(a,a')\mid a,a'\in\Gamma,a\neq a'\}$ (these labels are called pairing specs). Moreover that transducer has only a fixed number of 5 transitions. Similarly, using these special labels, the above 2D regular expression can be written as
\[(\ssame{\sany})^*(\sany/\ews)^*.\]
Note that the new regular expression as well as the new transducer in Fig.~\ref{FIG:sym:transd} are \emdef{alphabet invariant} as they contain no symbol of the intended alphabet $\Gamma$---precise definitions are provided in the next sections.

	\begin{figure}\begin{center}
		\begin{tikzpicture}[>=stealth', shorten >=1pt, auto, node distance=4.50cm,initial text={}]
			\tikzset{sstate/.style={state,inner sep=3pt, minimum size=6pt}}
			\node [initial, sstate] (i) {$0$};
			\node [sstate, accepting] (s) [right of=i] {$1$};
			\node [sstate, accepting] (f) [right of=s] {$2$};
			\path [->]
			      (i) edge node [above] {$a/a'$} (s)
			      (i) edge node [below] {$(\forall a,a'\in\Gamma:a\neq a')$} (s)
			      (s) edge node [above] {$a/a'$} (f)
			      (s) edge node [below] {$(\forall a,a'\in\Gamma:a\neq a')$} (f)
			      (i) edge [loop above] node [above] {$a/a\>(\forall a\in\Gamma)$} ()
			      (s) edge [loop above] node [above] {$a/a\>(\forall a\in\Gamma)$} ()
			      (f) edge [loop above] node [above] {$a/a\>(\forall a\in\Gamma)$} ()
			;
		\end{tikzpicture}
		\pmsn\parbox{0.85\textwidth}{\caption{The  transducer realizes the relation  of all $(u,v)$ such that $u\neq v$ and the Hamming distance of $u,v$ is at most 2.}\label{FIG:sym:motiv}\prlabel{FIG:sym:motiv}}
	\end{center}\end{figure}
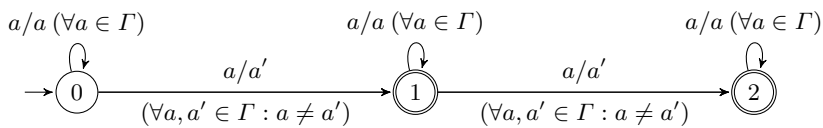

We also want to be able to define algorithms that work \textsl{directly} on regular expressions and transducers with special labels, without of course having to expand these labels to ordinary ones. Thus, for example, we would like to have an efficient algorithm that computes whether a pair $(u,v)$ of words is in the relation realized by the transducer in Fig.~\ref{FIG:sym:transd}, and an efficient algorithm to compute the composition of two transducers with special labels.
\pssi
We start off with the broad concept of a set $B$ of special labels, called \emdef{label set}, where each special label $\beta\in B$ is simply a string that represents a subset $\cali(\beta)$ of a monoid $M$. Then we define type $B$ automata  (called \emph{labelled graphs}) in which every transition label is in $B$. Similarly we consider type $B$ regular expressions whose base objects (again called labels) are elements of $B$ and  represent monoid subsets. Our first set of results apply to any user-defined set $B$ and  monoid $M$. Then, we consider further results specific to the cases of (i) 1D regular expressions and automata (monoid $M=\Gamma^*$), (ii) 2D regular expressions and transducers (monoid $M=\Gamma^*\times\Gamma^*$) with special labels (called \emph{set specs}). A labelled graph in this work can possibly be considered to be a compact version of an automaton\footnote{While the labels of the automaton can possibly represent sets in compressed format, we have no intention to define any specific compression method, as the syntax of the labels is left to the application.}  over the monoid $M$ in the sense of \cite{Sak:2015}. 
\pnsi
We emphasize that we do not attempt to define regular expressions and automata outside of monoids; rather we use  monoid-based regular expressions and automata as a foundation such that (i) one can define such objects with a priori unknown label sets $B$, as long as each of the labels represents a subset of a known monoid; (ii) many known algorithms and constructions on  monoid-based regular expressions and automata are extended to work directly and efficiently on the type $B$ objects. 
\pnsi
We also mention the framework of symbolic automata and transducers of \cite{VHLMB:2012,Vea:2013}. In that framework, a transition label is a logic predicate describing a set of domain elements (characters). The semantics of that framework is very broad and includes the semantics of label sets in this work. As such, the main algorithmic results in \cite{VHLMB:2012,Vea:2013} do not include time complexity estimates.  Moreover, outside of the logic predicates there is no provision to allow for user-defined labels and related algorithms working directly on these labels. 
\pnsi
The role of a label set is similar to that of an alphabet, or of the set of regular expressions: it enables users to represent sets of interest. While some of our results apply to regular expressions and labelled graphs over any user-defined label set, the particular case where the label set is the set of pairing specs allows us to rewrite ordinary transducers, like the one in Fig.~\ref{FIG:sym:motiv}, in a simpler form such that algorithms can work directly on these simpler transducers. In particular, we can employ simple transducers like the one in Fig.~\ref{FIG:sym:transd} to answer the satisfaction question in the theory of independent formal languages. While it seems that pairing specs work well with nondeterministic automata and transducers, this might not be true when dealing with deterministic ones. We discuss this issue further in the last section of the paper.

\if\DRAFT1\pbsn\textbf{\color{blue} Add more refs ?}\pbsn\fi

The paper is organized as follows. The next section makes some assumptions about \emdef{alphabets} $\Gamma$ of non-fixed size. These assumptions are needed in algorithms that process regular expressions and automata with labels involving $\Gamma$-symbols. Section~\ref{SEC:sym:sets} defines the set of \emdef{set specs}, a particular kind of a label set in which each element represents a subset of $\Gamma$ or the empty string, and presents basic algorithms on set specs. Section~\ref{SEC:pairing} defines the set of \emdef{pairing specs}, a particular kind of a label set that is used for transducer-type labelled graphs. Some of these pairing specs are \emdef{alphabet invariant}. Section~\ref{SEC:labelsets} discusses the general concept of a \emdef{label set}, with set specs and  pairing specs being two specific examples of label sets. Each label set $B$ has a behaviour $\cali$ and refers to a monoid, denoted by $\mon B$; that is, $\cali(\beta)$ is a subset of $\mon B$ for any label $\beta\in B$. Section~\ref{SEC:graphs} defines type $B$ labelled graphs $\autg$ and their behaviours $\cali(\autg)$. When $B$ is the set of pairing specs then $\autg$ is a transducer-type graph and realizes a word relation. Section~\ref{SEC:rational} establishes that the rational operations of union, catenation and Kleene star on ordinary automata and transducers work without complications on any labelled graphs. Section~\ref{SEC:sym:RE} defines regular expressions $\rer$ over any label set $B$ and their behaviour $\cali(\rer)$, and establishes the equivalence of type $B$ graphs and type $B$ regular expressions (see Theorem~\ref{TH:thompson} and Corollary~\ref{COR:state:elim}). Section~\ref{SEC:derivatives} considers the concept of linear form of a regular expression over a label set, which leads to the definition of its corresponding partial derivative graph. Then, for regular expressions over set specs 
   it presents the development of the corresponding finite automaton 
    that is equivalent to the regular expression (see Theorem~\ref{TH:apd}). Section~\ref{SEC:product} considers the possibility of defining `higher level' versions of product constructions that  work on automata/transducers over known monoids. To this end, we consider the concept of \emdef{polymorphic operation} `$\odot$' that is partially defined between two elements of some labels sets $B,B'$, returning an element of some label set $C$, and also partially defined on the elements of the monoids $\mon B$ and $\mon B'$, returning an element of the monoid $\mon C$. In this case, if $\odot$ is known to work on automata/transducers over $\mon B,\mon B'$ then it would also work on type $B,B'$ graphs (see Theorem~\ref{TH:product}). Section~\ref{SEC:sym:nfa} presents some basic algorithms on automata with set specs and transducers with set specs. Section~\ref{SEC:compose} defines the composition of two transducers with set specs such that the complexity of this operation is consistent with the case of ordinary transducers (see Theorem~\ref{TH:compose}). Section~\ref{SEC:functionality} considers the questions of whether a transducer with set specs realizes an identity and whether it realizes a function. It is shown that both questions can be answered with a time complexity consistent with that in the case of ordinary transducers (see Theorem~\ref{TH:identity} and Theorem~\ref{TH:funct}). Section~\ref{SEC:independence} shows that, like ordinary transducers, transducers with set specs that define independent language properties can be processed directly (without expanding them) and efficiently to answer the witness version of the property satisfaction question for regular languages (see Corollary~\ref{COR:sat} and Example~\ref{EX:sat}).  Finally, the last section contains a few concluding remarks and directions for future research.

\section{Terminology and Alphabets of Non-fixed Size}\mylabel{SEC:Terminology}
The set of positive integers is denoted by $\N$. Then, $\N_0=\N\cup\{0\}$. Let $S$ be a set. We denote the \emdef{cardinality} of $S$ by $\left| S \right|$ and the set of all subsets of $S$ by $2^{S}$. To indicate that $\phi$ is a \emdef{partial mapping} of a set $S$ into a set $T$ we shall use the notation
\[\phi: S\pto T \]
We shall  write $\phi(s)=\undef$ to indicate that $\phi$ is not defined on $s\in S$.
%
\pssi
An \emdef{alphabet space} $\Omega$ is an infinite and totally ordered set whose elements are called \emdef{symbols}. We shall assume that $\Omega$ is fixed and contains the digits $0,1,\ldots,9$, which are ordered as usual, as well as the \emdef{special symbols}
\[
\bm\sany\quad \bm\sone \quad \bm\snone \quad{\bm=} \quad {\bm\neq}\quad \bm/ \quad \ews \quad \bmoplus \quad \erel
\]  
We shall denote by `$<$' the total order of $\Omega$. As usual we use the term \emdef{string} or \emdef{word} to refer to any finite sequence of symbols. The \emdef{empty string} is denoted by $\ew$. For any string $w$ we say that $w$ is \emdef{sorted} if the symbols contained in $w$ occur in the left to right direction according to the total order of $\Omega$. For example, the word $012$ is sorted, but $021$ is not sorted. For any set of symbols $S$, we use the notation 
$$\wo(S)=\text{ the sorted word consisting of the symbols in } S.$$  For example, if $S=\{0,1,2\}$, then $\wo(S)=012$ and $\wo(\{2,0\})=02$.
\pssi
Let $g\in\Omega$ and $w$ be a string. The expression $|w|_g$ denotes the number of occurrences of $g$ in $w$, and the expression $\al w$ denotes the set $\{g\in\Omega: |w|_g>0\}$, that is, the set of symbols  that occur in $w$. For example, $$\al(1122010)=\{0,1,2\}.$$
\pssi
An \emdef{alphabet} is any finite nonempty subset of $\Omega$. In the following definitions we consider an alphabet $\Gamma$, called the alphabet of \emdef{reference}, and we assume that \emph{$\Gamma$ contains at least two symbols and no special symbols}. 

\pmsn\textbf{Algorithmic convention about alphabet symbols.} We shall consider algorithms on automata and transducers where the alphabets $\Gamma$ involved are not of fixed size and, therefore, $\card\Gamma\to\infty$; thus, the alphabet size $\card\Gamma$ is accounted for in time complexity estimates. Moreover, we assume that each $\Gamma$-symbol is of size $O(1)$.  This approach is also used in related literature (e.g., \cite{AllMoh:2003}), where it is assumed implicitly that the cost of comparing two $\Gamma$-symbols is $O(1)$. A similar assumption is made in graph algorithms where the  size of a graph $(V,E)$ is $\card V+\card E\to\infty$, but the size of each vertex is implicitly considered to be $O(1)$, \cite{Man:1989}. We note that there are proposals to represent the elements of $\Gamma$ using non-constant size objects---for instance, \cite{ADKN:2005} represents each $\Gamma$-symbol as a binary word of length $O(\log\card\Gamma)$.  
  
In the algorithms presented below, we need operations that require to access only a part of $\Gamma$ or some information about $\Gamma$ such as $|\Gamma|$. We assume that $\Gamma$ has been preprocessed such that the value of $|\Gamma|$ is available and is $O(\log|\Gamma|)$ bits long and  the \emdef{minimum symbol}   $\min\Gamma$ of $\Gamma$ is also available. In particular, we assume that we have available a \emph{sorted array} $\garr$ consisting of all $\Gamma$-symbols. While this is a convenient assumption, if in fact it is not applicable then one can make the array from $\Gamma$ in time $O\big(\card\Gamma\log\card\Gamma\big)$. Then, the minimum symbol of $\Gamma$ is simply $\garr[0]$.

Moreover, we have available an  \emdef{algorithm} $\outof(w)$, which returns a symbol in $\Gamma$ that is not in $\al w$, where $w$ is a \emph{sorted word} in $\Gamma^*$ with $0<|w|<|\Gamma|$. Next we explain that the desired algorithm 
\begin{equation}\mylabel{eq:costbound}
    \outof(w)\> \text{ can be made to work in time }\> O(|w|) 
\end{equation}
The algorithm $\outof(w)$ works by using an index $i$, initially $i=0$, and incrementing $i$ until $\garr[i]\neq w[i]$, in which case the algorithm returns $\garr[i]$.

\section{Set Specifications} \label{SEC:sym:sets}
Here we define expressions, called set specs, that are used to represent subsets of the alphabet $\Gamma$ or the empty string. These  can be used as labels in automata-type objects (labelled graphs) and regular expressions defined in subsequent sections. We also present some basic algorithms on set specs, which are needed for processing those regular expressions and labelled graphs.

\begin{definition}\mylabel{DEF:ssets}
A \emdef{set specification}, or \emdef{set spec} for short, is any string of one of the four forms
\[
 \ews \qquad \sany \qquad \sone w\qquad \snone w
\]
where $w$ is any sorted nonempty string containing no repeated symbols and no special symbols. The set of set specs is denoted by $\sspec$.  
\pnsi
Let $F,\sone u,\snone u,\sone v,\snone v$  be any set specs with $F\neq\ews$. We define the partial \emdef{operation} $\>\cap:\sspec\times\sspec\pto\sspec$  as follows.
\pssi $\ews\cap\ews=\ews$,\quad $\ews\cap F=F\cap\ews=\undef$
\pssi $\sany\cap F=F\cap\sany=F$
\pssi $\sone u\cap\sone v=\sone \wo\big(\al u\cap \al v\big)$,\quad if $\big(\al u\cap \al v\big)\neq\emptyset$
\pssi $\sone u\cap\sone v=\undef$,\quad if $\big(\al u\cap \al v\big)=\emptyset$
\pssi $\snone u\cap\snone v=\snone\wo\big(\al u\cup \al v\big)$
\pssi $\sone u\cap\snone v=\sone \wo\big(\al u\setminus \al v\big)$,\quad if $\big(\al u\setminus \al v\big)\neq\emptyset$
\pssi $\sone u\cap\snone v=\undef$,\quad if $\big(\al u\setminus \al v\big)=\emptyset$
\pssi $\snone u\cap\sone v=\sone v\cap\snone u$
\end{definition}

\begin{example}\mylabel{EX:ssets}
As any set spec $X$ is a string, it has a length $|X|$. We have that $\szabc{\sany}=1$ and $\szabc{\sone w}=1+\szabc w$. 
Also, 
\[
\sone035\cap\sone1358=\sone35, \quad
\snone035\cap\sone1358=\sone18, \quad
\snone035\cap\snone1358=\snone01358.
\]
\end{example}

\begin{lemma}\mylabel{LEM:ssets}
For any given set specs $G$ and $F$,  $G\cap F$ can be computed in time $O(\szabc G+\szabc F)$.	
\end{lemma}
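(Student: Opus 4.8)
The plan is to go case-by-case through the definition of the partial operation $\cap$ on set specs and verify that each case can be executed within the stated linear time bound $O(\szabc G + \szabc F)$, given the algorithmic conventions on $\Gamma$ from Section~\ref{SEC:Terminology}. First I would observe that the two arguments $G$ and $F$ each have one of the four syntactic forms $\ews$, $\sany$, $\sone w$, $\snone w$, and that determining which form a set spec has, and extracting the string $w$, takes time linear in its length (a single scan of the leading symbol suffices to classify, after which $w$ is the remaining suffix). The cases $\ews\cap\ews$, $\ews\cap F$, $F\cap\ews$, and $\sany\cap F$, $F\cap\sany$ are immediate: the result is either a fixed short string, $\undef$, or a pointer to one of the inputs, all in $O(1)$ or $O(\szabc G+\szabc F)$ time.

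The substantive cases are the four involving two quantified specs, namely $\sone u\cap\sone v$, $\snone u\cap\snone v$, $\sone u\cap\snone v$, and $\snone u\cap\sone v$. In each, $u$ and $v$ are sorted strings with no repeated symbols, so they are the sorted-word encodings of the sets $\al u$ and $\al v$. The key point is that computing $\wo(\al u\cap\al v)$, $\wo(\al u\cup\al v)$, or $\wo(\al u\setminus\al v)$ amounts to a standard merge of two sorted lists, which runs in time $O(|u|+|v|)$ and, because each symbol is of size $O(1)$ by the algorithmic convention, this is $O(\szabc u+\szabc v)=O(\szabc G+\szabc F)$. The merge also produces its output already sorted and duplicate-free, so the result is a well-formed set spec. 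One must additionally check the emptiness side conditions — e.g. $\sone u\cap\sone v=\undef$ when $\al u\cap\al v=\emptyset$, and $\sone u\cap\snone v=\undef$ when $\al u\setminus\al v=\emptyset$ — but these are detected for free during or immediately after the same merge (the output being empty), so no extra asymptotic cost is incurred. The case $\snone u\cap\sone v$ reduces to $\sone v\cap\snone u$ by definition, adding only a constant.

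I do not anticipate a genuine obstacle here; the lemma is essentially a bookkeeping statement that the definition was set up to be linear-time computable. The one point deserving a sentence of care is the appeal to the cost model: the bound $O(\szabc G+\szabc F)$ counts symbols, and the merge-based set operations touch each symbol of $u$ and of $v$ a constant number of times while performing $O(1)$-cost comparisons, which is exactly what the algorithmic convention on $\Gamma$-symbols licenses. I would therefore structure the proof as: (i) a remark that classification of each input's form and extraction of its body string is linear; (ii) dispatch of the trivial cases; (iii) the merge argument for the four quantified cases, noting that sortedness of $u,v$ is what makes the merge applicable and that the output is automatically a valid set spec; (iv) a closing remark that emptiness conditions are checked within the same scan. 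This yields the claimed $O(\szabc G+\szabc F)$ bound.
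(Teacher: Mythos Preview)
Your proposal is correct and takes essentially the same approach as the paper's own proof: dispatch the $\ews$ and $\sany$ cases trivially, then handle the four quantified cases by a two-pointer merge over the sorted strings $u,v$ to compute the relevant set operation (intersection, union, or difference) in time $O(|u|+|v|)$, with emptiness detected as a by-product. The paper spells out the index-advancing merge explicitly rather than invoking ``standard merge of sorted lists,'' but the argument is the same.
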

\begin{proof}
The required algorithm works as follows. If either of $G,F$ is $\ews$ then $G\cap F$ is computed according to Def.~\ref{DEF:ssets}. Else, if either of $G,F$ is $\sany$,  return the other one. Now suppose that $G=\sone u$ and $F=\sone v$. As $u,v$ are sorted, the sorted word $w$ consisting of their common symbols is computed by using two indices $i$ and $j$, initially pointing to the first symbols of $u$ and $v$, and then advancing them as follows: if $u[i]=v[j]$ then output $u[i]$ and increment both $i,j$ by 1; if $u[i]<v[j]$  then increment only $i$; else increment only $j$. So the output would be $\sone w$, if $|w|>0$, or $\undef$ if $|w|=0$. In either case, each symbol of $u$ and $v$ is not accessed more than once, so the process works in time $O(\szabc u+\szabc v)$. Now suppose that $G=\snone u$ and $F=\snone v$. Then one can use a process similar to the above to compute the sorted word $w$ consisting of the union of the symbols in $u,v$. So the output would be $\snone w$. Now suppose that $G=\sone u$ and $F=\snone v$. Again the process to compute the sorted word $w$ consisting of the symbols in $u$ that are not in $v$ involves two indices $i$ and $j$, initially pointing to the first symbols of $u$ and $v$, and then advancing them as follows: if $u[i]=v[j]$ then increment both $i,j$ by 1; if $u[i]<v[j]$  then output $u[i]$ and increment only $i$; else increment only $j$. So the output would be $\sone w$, if $|w|>0$, or $\undef$ if $|w|=0$. The last case about $G,F$ is symmetric to the last one. 
\end{proof}

\begin{definition}\mylabel{DEF:sset:lang}
	Let $\Gamma$ be an alphabet of reference and let $F$ be a set spec. We say that $F$ \emdef{respects} $\Gamma$, if the following restrictions hold when $F$ is of the form $\sone w$ or $\snone w$:
	\pmsi $w\in\Gamma^* \>\text{ and }\> 0<|w|<|\Gamma|.$
	\pmsn
	In this case, the \emdef{language} $\lof F$ of $F$ (\emdef{with respect} to $\Gamma$) is the subset of $\Gamma\cup\{\ew\}$ defined as follows:
\[
\lof{\ews}=\{\ew\},\qquad 
\lof{\sany }=\Gamma,\qquad 
\lof{\sone w}=\al w,\qquad 
\lof{\snone w}=\Gamma\setminus\al w.
	\] 
The set of set specs that respect $\Gamma$ is denoted as follows
\[
\sspec[\Gamma]=\{\alpha\in\sspec\mid \alpha \text{ respects } \Gamma\}.
\]
\end{definition}

\begin{remark}\mylabel{REM:sset:lang}
	In the above definition, the requirement $|w|<|\Gamma|$ implies that there is at least one $\Gamma$-symbol that does not occur in $w$. Thus, to represent $\Gamma$ we must use $\sany$ as opposed to the longer set spec $\sone \wo(\Gamma)$. 	
\end{remark}

\begin{lemma}\mylabel{LEM:ssets:lang}
	Let $\Gamma$ be an alphabet of reference and let $G,F$ be set specs respecting $\Gamma$. The following statements hold true.
	\begin{enumerate}
    \item $\lof F\neq\eset$; and $\>\>\lof F=\Gamma$ if and only if $F=\sany$.
	\item $\lof{G\cap F}=\lof{G}\cap\lof{F}$, if $G\cap F\neq\undef$. 
	\item If $F=\sone w$ or $F=\snone w$ then $|\lof F|\le|\Gamma|-1$. 
	\item $\szabc{F}\le|\Gamma|$. 
	\end{enumerate} 
\end{lemma}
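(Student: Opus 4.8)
The plan is to dispatch the four items separately, each by a short case analysis on the syntactic form of the set spec(s) involved --- recall that every set spec is exactly one of $\ews$, $\sany$, $\sone w$, $\snone w$ --- using throughout the defining identity $\al(\wo(S)) = S$ for finite symbol sets $S$, the inclusion $\al w \subseteq \Gamma$ when $w \in \Gamma^*$, the constraints bundled into ``$F$ respects $\Gamma$'' (namely $0 < |w| < |\Gamma|$ in the $\sone w$ and $\snone w$ cases), and the standing assumption $|\Gamma| \ge 2$.

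For item~1, I would observe that $\lof\ews = \{\ew\}$, $\lof\sany = \Gamma$, $\lof{\sone w} = \al w$ and $\lof{\snone w} = \Gamma\setminus\al w$ are each nonempty: $\al w \ne \eset$ since $w$ is nonempty, $\Gamma \ne \eset$ since it is an alphabet, and $\Gamma\setminus\al w \ne \eset$ by the constraint $|w| < |\Gamma|$, which is exactly Remark~\ref{REM:sset:lang}. The forward direction of ``$\lof F = \Gamma \iff F = \sany$'' is immediate from the definition; for the converse I would rule out the other three forms by the same computations: $|\lof\ews| = 1 < 2 \le |\Gamma|$, $|\lof{\sone w}| = |w| < |\Gamma|$, and $\lof{\snone w} \subsetneq \Gamma$ because $\al w \ne \eset$. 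Items~3 and~4 are the quantitative refinements of these same computations: $|\lof{\sone w}| = |w| \le |\Gamma|-1$ and $|\lof{\snone w}| = |\Gamma| - |w| \le |\Gamma| - 1$ since $1 \le |w| \le |\Gamma|-1$; and $|\ews| = |\sany| = 1 \le |\Gamma|$ while $|\sone w| = |\snone w| = 1 + |w| \le |\Gamma|$ using the length accounting of Example~\ref{EX:ssets}.

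The real content is item~2, which I would prove by running through the defining clauses of the partial operation $\cap$ in Definition~\ref{DEF:ssets} and verifying $\lof{G\cap F} = \lof G \cap \lof F$ for each clause that yields a defined value. The clause $\ews\cap\ews = \ews$ is trivial; $\sany\cap F = F$ (and its mirror) uses $\Gamma\cap\lof F = \lof F$, which holds because $F\ne\ews$ forces $\lof F \subseteq \Gamma$. The remaining clauses collapse to set-algebra identities once $\al(\wo(\cdot))$ is simplified and $\al u, \al v \subseteq \Gamma$ is used: $\al u\cap\al v$ for $\sone u\cap\sone v$; $(\Gamma\setminus\al u)\cap(\Gamma\setminus\al v) = \Gamma\setminus(\al u\cup\al v)$ for $\snone u\cap\snone v$; and $\al u \cap (\Gamma\setminus\al v) = \al u\setminus\al v$ for $\sone u\cap\snone v$, with $\snone u\cap\sone v$ covered by the symmetry clause.

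The one point requiring care --- and what I expect to be the only genuine obstacle --- is that $\lof{G\cap F}$ is meaningful only when $G\cap F$ again respects $\Gamma$. For the $\sone u\cap\sone v$ and $\sone u\cap\snone v$ clauses this is fine, since the output $\sone$-word involves only symbols appearing in $u$, so it is nonempty (by the relevant side condition) of length at most $|u| \le |\Gamma|-1$. For $\snone u\cap\snone v = \snone\wo(\al u\cup\al v)$ it can fail: $\al u\cup\al v$ may exhaust $\Gamma$ (already possible when $|\Gamma| = 2$), so the output does not literally respect $\Gamma$. I would close this gap either by adding to the hypothesis of item~2 that $G\cap F$ respects $\Gamma$, or, more cleanly, by reading the formula $\lof{\snone w} = \Gamma\setminus\al w$ also for words $w$ with $\al w = \Gamma$, so that $\lof{\snone\wo(\Gamma)} = \eset$, which is precisely $\lof G\cap\lof F$ in that degenerate case; under either reading the identity of item~2 holds, and the remaining verifications are entirely routine.
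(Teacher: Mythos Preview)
Your approach is essentially the same as the paper's: a case analysis on the syntactic form of the set specs, using the constraint $0<|w|<|\Gamma|$ from Definition~\ref{DEF:sset:lang} for items~1,~3,~4, and running through the clauses of Definition~\ref{DEF:ssets} for item~2. The paper's proof is in fact terser than yours---it works out only the $\sone u\cap\snone v$ case explicitly and declares the rest ``analogous''.

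Where you actually go beyond the paper is in the last paragraph: you correctly observe that $\snone u\cap\snone v=\snone\wo(\al u\cup\al v)$ need not respect $\Gamma$, since $\al u\cup\al v$ can equal $\Gamma$ (e.g.\ $\Gamma=\{0,1\}$, $u=0$, $v=1$), so that $\lof{G\cap F}$ is not literally defined by Definition~\ref{DEF:sset:lang}. The paper's own proof does not acknowledge this edge case at all; it simply asserts that ``the other cases can be shown analogously''. Your proposed fix---reading $\lof{\snone w}=\Gamma\setminus\al w$ even when $|w|\ge|\Gamma|$---is the natural one and makes the identity hold (both sides are then $\eset$), and is presumably what the authors intend implicitly. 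So your proof is at least as complete as the paper's, and arguably more careful on this point.
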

\begin{proof}
The first statement follows from the above definition and the following: If $F=\sone w$ then $(\al w)\notin\{\emptyset,\Gamma\}$, as $0<|w|<|\Gamma|$; and if  $F=\snone w$ then again $(\Gamma\setminus\al w)\notin\{\emptyset,\Gamma\}$. 
For the \underline{second} statement, we consider the definition of the operation`\,$\cap$\,' as well as the above definition. Clearly the statement holds, if  $G=F=\ews$, or if one of $G,F$ is $\sany$ and the other one is not $\ews$. Then, one considers the six cases of Definition~\ref{DEF:ssets} where $G,F$ contain $\sone$ or $\snone$. For example, if $G=\sone u$ and $F=\snone v$, we have that $\lof F = \Gamma\setminus\al v$, so  $\lof{G}\cap\lof{F}=\al u\setminus\al v$, which is equal to $\lof{G\cap F}$. The other cases can be shown analogously. 
The \underline{third} and \underline{fourth} statements follow from the restriction $0<|w|<|\Gamma|$ in Definition~\ref{DEF:sset:lang}. 
\end{proof}

The next lemma concerns simple algorithmic questions about set specs that are needed as basic tools in other algorithms further below.

\begin{lemma}\mylabel{LEM:ssets:algos}
	Let $\Gamma$ be an alphabet of reference and let $F\neq\ews$ be a set spec respecting $\Gamma$. The following statements hold true.
	\begin{enumerate}
    \item For given $g\in\Gamma$, testing whether $g\in\lof F$ can be done in time $O(\log\szabc F)$.
    \item For given $g\in\Gamma$, testing whether $\lof F\setminus \{g\}=\emptyset$ can be done in time $O(\szabc F)$.
    \item For any fixed $k\in\N$, testing whether $|\lof F|\ge k$ can be done in time $O(\szabc F+\log|\Gamma|)$, assuming the number $|\Gamma|$ is given as input along with $F$.
    \item Testing whether $|\lof F|=1$ and, in this case, computing the single element of $\lof F$ can be done in time $O(\szabc F)$.
    \item Computing an element of $\lof F$ can be done in time $O(\szabc F)$.
    \item If $\card{\lof F}\ge2$ then computing two different $\lof F$-elements can be done in time $O(\szabc{F})$.
	\end{enumerate} 
\end{lemma}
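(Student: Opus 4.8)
The plan is to handle the six statements by a straightforward case analysis on the form of $F$, namely $F=\sany$, $F=\sone w$, or $F=\snone w$, exploiting throughout that $w$ is sorted and repetition-free, that $0<|w|<|\Gamma|$, and that the sorted array $\garr$ and the value $|\Gamma|$ are available per the algorithmic conventions of Section~\ref{SEC:Terminology}. For statement~(1), if $F=\sany$ then $g\in\lof F$ trivially; if $F=\sone w$ then $g\in\lof F$ iff $g$ occurs in $w$, which is decided by binary search on the sorted string $w$ in time $O(\log|w|)=O(\log\szabc F)$; if $F=\snone w$ the answer is the negation of the same binary search. For statement~(5), computing some element of $\lof F$: if $F=\sany$ return $\garr[0]=\min\Gamma$; if $F=\sone w$ return $w[0]$; if $F=\snone w$ return $\outof(w)$, which by~(\ref{eq:costbound}) costs $O(|w|)=O(\szabc F)$ and is applicable since $0<|w|<|\Gamma|$ guarantees $w$ omits a $\Gamma$-symbol.

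For statement~(6), producing two distinct elements when $\card{\lof F}\ge2$: if $F=\sany$ return $\garr[0],\garr[1]$; if $F=\sone w$ then $|w|\ge2$ so return $w[0],w[1]$; if $F=\snone w$ then by Lemma~\ref{LEM:ssets:lang}(3) and $\card{\lof F}\ge2$ there are at least two omitted symbols, and a single linear scan comparing $\garr$ against $w$ with two indices (the same merge-style walk used in Lemma~\ref{LEM:ssets}) yields the first two indices $i$ with $\garr[i]\ne w[i']$; this is $O(|w|+1)=O(\szabc F)$. Statement~(4) is the special case of~(6)/(5): test whether $\card{\lof F}=1$. Here $F=\sany$ gives $\card{\lof F}=|\Gamma|\ge2$, so the answer is no; $F=\sone w$ gives $\card{\lof F}=1$ iff $|w|=1$, and then the element is $w[0]$; $F=\snone w$ gives $\card{\lof F}=|\Gamma|-|w|$, which equals $1$ iff $|w|=|\Gamma|-1$, checkable by comparing the length $|w|$ against the available value $|\Gamma|$ in time $O(\szabc F+\log|\Gamma|)\subseteq O(\szabc F)$ once we note $|w|<|\Gamma|$ so $\szabc w$ already dominates; if so, $\outof(w)$ returns the unique element in $O(|w|)$. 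Statement~(2) asks whether $\lof F\setminus\{g\}=\emptyset$, i.e. whether $\lof F\subseteq\{g\}$, i.e. whether $\card{\lof F}=1$ and that element is $g$ (the case $\lof F=\emptyset$ is impossible by Lemma~\ref{LEM:ssets:lang}(1)); so run the $O(\szabc F)$ test of statement~(4) and, if it succeeds, compare the produced element with $g$.

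For statement~(3), testing $|\lof F|\ge k$ for fixed $k$: if $F=\sany$ the answer is whether $|\Gamma|\ge k$, read off from the given value of $|\Gamma|$ in time $O(\log|\Gamma|)$; if $F=\sone w$ the answer is whether $|w|\ge k$, obtained from the length of $w$; if $F=\snone w$ the answer is whether $|\Gamma|-|w|\ge k$, again a comparison of the numbers $|w|$ and $|\Gamma|-k$ computable in $O(\szabc F+\log|\Gamma|)$ time. The only mildly delicate point is accounting: lengths of strings and the integer $|\Gamma|$ are $O(\log|\Gamma|)$ bits, so arithmetic on them is within the stated bounds, and wherever an $O(\szabc F)$ bound is claimed it already absorbs $O(\log|\Gamma|)$ because $\szabc{\sone w}=1+\szabc w\ge\szabc w\ge$ the bit-length of $|w|$, while for $F=\sany$ the separate $O(\log|\Gamma|)$ term is needed and is stated. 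There is no real obstacle here; the main thing to be careful about is invoking $\outof$ only when its precondition $0<|w|<|\Gamma|$ holds — which is exactly guaranteed by Definition~\ref{DEF:sset:lang} — and making sure each linear scan touches every symbol of $w$ only a constant number of times, as in the proof of Lemma~\ref{LEM:ssets}.
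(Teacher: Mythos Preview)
Your proposal is correct and follows essentially the same case analysis as the paper. The only notable deviations are in statement~(6), where the paper obtains the second element more modularly by computing $g_1\in\lof F$, forming $B=F\cap\snone g_1$, and then invoking statement~(5) again on $B$, rather than your direct merge-style scan; and in statement~(2), where the paper gives a direct characterization (``$F=\sone g$'' or ``$F=\snone w$ with $|w|=|\Gamma|-1$ and $|w|_g=0$'') instead of reducing to statement~(4)---but these are minor implementation choices, not substantive differences.
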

\begin{proof}
For the \underline{first} statement, we note that the condition to test is equivalent to one of ``$F=\sany$'', ``$F=\sone w$ and $|w|_g>0$'', ``$F=\snone w$ and $|w|_g=0$''; and that one can use binary search to test whether $g$ occurs in $w$.
For the \underline{second} statement, we note that the condition to test is equivalent to one of ``$F=\sone g$'', ``$F=\snone w$ and $|w|=|\Gamma|-1$ and $|w|_g=0$''. 
For the \underline{third} statement, we note that the condition to test is equivalent to one of $|\Gamma|\ge k$, $|w|\ge k$, $|\Gamma|-|w|\ge k$, depending on whether $F$ is one of $\sany,\sone w,\snone w$. The last case requires time $O(|F|)$ to compute $|w|$ and then $O(\log|\Gamma|+\log|w|)$ time for arithmetic operations, which is $O(\log|\Gamma|)$ as $|w|<|\Gamma|$. 
For the \underline{fourth} statement, we note that $|\lof F|=1$ is equivalent to whether ``$F=\sone g$ and $|g|=1$'' or ``$F=\snone w$ and $|w|=|\Gamma|-1$''. In the former case, the algorithm returns $g$. In the latter case, we use the algorithm $\outof(w)$ to get the desired symbol in $\Gamma\setminus\al w$. The latter case is the worse of the two,  and works in time $O(|F|+\log|\Gamma|)$ to compute $|w|$ and test whether $|w|=|\Gamma|-1$, plus time $O(|F|)$ to execute $\outof(w)$ (see the bound in \eqref{eq:costbound}). The total time is $O(|F|)$, as $|F|=|\Gamma|$. For the \underline{fifth} statement, if $F=\sany$ or $F=\sone w$ the algorithm simply returns $\garr[0]$ or $w[0]$, respectively. The worst case is when $F=\snone w$, where, as before, the algorithm uses $\outof(w)$ requiring time $(|F|)$. For the \underline{sixth} statement, the algorithm first finds any $g_1\in\lof F$, then computes the set spec $B=F\cap\snone g_1$ and then computes any $g_2\in\lof B$.
\end{proof}

\section{Pairing Specifications}\label{SEC:pairing}
Here we define  expressions for describing certain finite relations that are subsets of $(\Gamma\cup\{\ew\})\times(\Gamma\cup\{\ew\})$. First, we define their syntax and then their semantics. 

\begin{definition}\mylabel{DEF:spairs}
A \emdef{pairing specification}, or \emdef{pairing spec} for short, is a string of the form
\begin{equation}\label{eq:pairspec}
\ews/\ews\qquad 
\ews/G\qquad
F/\ews\qquad
F/G\qquad  
\ssame F \qquad 
\sdiff F G
\end{equation}
where $F,G$ are set specs with $F,G\neq\ews$. The set of pairing specs is denoted by $\srel$. The \emdef{inverse} $\sppinv$ of a pairing spec $\spp$ is defined as follows depending on the  possible forms of $\spp$ displayed in~\eqref{eq:pairspec}:
\pmsi	
$(\ews/\ews)^{-1}=(\ews/\ews),
\qquad (\ews/G)^{-1}=(G/\ews),\quad (F/\ews)^{-1}=(\ews/F),$
\pssi
$(F/G)^{-1}=(G/F),\quad(\ssame{F})^{-1}=(\ssame{F}),\quad
	(\sdiff{F}{G})^{-1}=(\sdiff{G}{F})$
\end{definition}

\begin{example}\mylabel{EX:spairs}
As a pairing spec $\spp$ is a string, it has a length $|\spp|$. We have that  $\szabc{\sany/\ews}=3$ and $\szabc{\sone u/\snone v}=3+\szabc u+\szabc v$. Also, $(\sany/\ews)^{-1}=(\ews/\sany)$ and $(\sdiff{\sone u}{\sany})^{-1}=(\sdiff{\sany}{\sone u})$.
\end{example}

\begin{definition}\mylabel{DEF:spairs:invar}
	A pairing spec is called  \emdef{alphabet invariant} if it contains no  set spec of the form $\sone w,\snone w$. The set of alphabet invariant pairing specs is denoted by $\srelinvar$. 
\end{definition}

\begin{definition}\mylabel{DEF:spairs:rel}
		Let $\Gamma$ be an alphabet of reference and let $\spp$ be a pairing spec. We say that $\spp$ \emdef{respects} $\Gamma$, if any set spec occurring in $\spp$ respects $\Gamma$.  The set of pairing specs that respect $\Gamma$ is denoted as follows
		\[\srel[\Gamma]=\{\spp\in\srel:\text{ $\spp$ respects $\Gamma$}\}.\]
		The \emdef{relation} $\rel{\spp}$ described by $\spp$ (with respect to $\Gamma$) is the subset of $\Gamma^*\times\Gamma^*$ defined as follows.
%
\pssi $\rel{\ews/\ews}=\{(\varepsilon,\varepsilon)\}$;
\pssi $\rel{\ews/G}=\{(\varepsilon,y)\mid y\in \lof G\}$;
\pssi $\rel{F/\ews}=\{(x,\varepsilon)\mid x\in \lof F\}$;
\pssi $\rel{F/G}=\{(x,y)\mid x\in \lof F,y\in \lof G\}$;
\pssi $\rel{\ssame F}=\{(x,x)\mid x\in \lof F\}$;
\pssi $\rel{\sdiff F G}=\{(x,y)\mid x\in \lof F,y\in\lof G,x\not=y\}$.
\end{definition}

\begin{remark}\mylabel{REM:alphainvar}
	All the alphabet invariant pairing specs are
	\[
	\ews/\ews \quad\quad \ews/\sany \quad\quad \sany/\ews \quad\quad \sany/\sany \quad\quad \sany/{\bm=} \quad\quad \sany/\sany{\bm\neq} 
	\]
	Any alphabet invariant pairing spec $\spp$ respects all alphabets of reference, as $\spp$ contains no set specs of the form $\sone w$ or $\snone w$.
\end{remark}

\begin{lemma}\mylabel{LEM:sym:inverse}
	Let $\spp\in\srel[\Gamma]$. The following statements hold true.
	\begin{enumerate}
	\item $\rel{\spp}=\emptyset$ if and only if 
	 $\spp$ is of the form $\sdiff{F}{G}$ and $\lof F=\lof G=\{g\}$ for some $g\in\Gamma$.
    \item $\rel{\sppinv}=\rel{\spp}^{-1}$.
    \item $\sppinv$ can be computed from $\spp$ in time $O(\szabc\spp)$.
    \end{enumerate}	
\end{lemma}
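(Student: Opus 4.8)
The plan is to handle the three statements in order, since each is essentially a case analysis over the six syntactic forms of a pairing spec listed in~\eqref{eq:pairspec}, using the semantics from Definition~\ref{DEF:spairs:rel} and the properties of set specs already established (especially Lemma~\ref{LEM:ssets:lang}, which guarantees $\lof F\neq\emptyset$ for any set spec $F$ respecting $\Gamma$, and that $\lof F=\Gamma$ iff $F=\sany$).

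For statement~1, I would argue that $\rel\spp$ is nonempty in all five forms other than $\sdiff F G$: in each of those cases the defining set is a product (or a diagonal) of nonempty sets $\lof F$, $\lof G$, or $\{\ew\}$, hence nonempty by Lemma~\ref{LEM:ssets:lang}(1). For the remaining form $\sdiff F G$, the relation $\{(x,y)\mid x\in\lof F, y\in\lof G, x\neq y\}$ is empty exactly when no such pair exists, i.e.\ when $\lof F$ and $\lof G$ are both singletons and equal; conversely if either has size $\ge 2$, or they are distinct singletons, or disjoint, one can pick $x\neq y$. This gives the stated characterization $\lof F=\lof G=\{g\}$.

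For statement~2, I would verify $\rel{\sppinv}=\rel\spp^{-1}$ form by form, reading off $\sppinv$ from Definition~\ref{DEF:spairs} and $\rel\cdot$ from Definition~\ref{DEF:spairs:rel}. Each check is immediate: e.g.\ $\rel{(\ews/G)^{-1}}=\rel{G/\ews}=\{(x,\ew)\mid x\in\lof G\}=\{(\ew,y)\mid y\in\lof G\}^{-1}=\rel{\ews/G}^{-1}$, and similarly $\rel{(\sdiff F G)^{-1}}=\rel{\sdiff G F}=\{(y,x)\mid y\in\lof G, x\in\lof F, x\neq y\}=\rel{\sdiff F G}^{-1}$; the diagonal cases $\ssame F$ are symmetric by construction. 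For statement~3, I would observe that computing $\sppinv$ only requires recognizing which of the six forms $\spp$ has, and then rearranging (or copying) the constituent set specs $F,G$ around the $\bm/$, $\bm=$, $\bm\neq$ symbols — no set-spec content is modified — so the cost is linear in the number of symbols of $\spp$, i.e.\ $O(\szabc\spp)$.

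None of the three parts presents a real obstacle; the only point demanding a little care is the "if and only if" in statement~1, specifically checking that the failure of emptiness in the $\sdiff F G$ case is governed precisely by both languages being the \emph{same} singleton — one must not forget the subcase where $\lof F$ and $\lof G$ are disjoint singletons (then a pair exists) versus equal singletons (then none does). Everything else is routine unwinding of the definitions.
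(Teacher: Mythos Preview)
Your proposal is correct and follows essentially the same approach as the paper: a case analysis over the six syntactic forms of a pairing spec, appealing to Definitions~\ref{DEF:spairs} and~\ref{DEF:spairs:rel} (and Lemma~\ref{LEM:ssets:lang} for nonemptiness of $\lof F$). The paper's own proof is extremely terse---it simply asserts that statement~1 follows once one notes the $\sdiff{F}{G}$ set is empty iff $\lof F=\lof G=\{g\}$, and that statements~2 and~3 follow directly from the definitions---so your version is in fact a more explicit rendering of the same argument.
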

\begin{proof}
	The \underline{first} statement follows from Definition~\ref{DEF:spairs:rel}  when we note that the set $\{(x,y)\mid x\in \lof F,y\in\lof G,x\not=y\}$ is empty if and only if $\lof F=\lof G=\{g\}$, for some $g\in\Gamma$. The \underline{last} two statements follow from Definitions \ref{DEF:spairs} and \ref{DEF:spairs:rel}. 
\end{proof}

\pnsn\textbf{Some notation on pairing specs.} Let $\spp$ be a pairing spec. Then the \emdef{left} part, $\spleft\spp$, of $\spp$ is the string on the left of the symbol `/', and the \emdef{right} part, $\spright\spp$, of $\spp$ is the string on the right of `/'. We have the following examples:
\[
\spleft(\sdiff{\sone w}{\sany}) = \sone w\quad \spright(\sdiff{\sone w}{\sany}) = \sany{\bm\neq}\quad
\spleft(\ssame\sany) = \sany\quad \spright(\ssame\sany) = {\bm=}
\]
While the expression $\lof{\spleft\spp}$  makes sense when $\spp$ respects the alphabet of reference, this is not the case for $\lof{\spright\spp}$. So we define $\sprset\spp$ to be as follows, depending on the structure of $\spp$ according to \eqref{eq:pairspec}
\pssi
$\sprset(\ews/\ews)=\ews,\qquad\sprset(\ews/G)=G,\quad\sprset(F/\ews)=\ews,$
\pssi
$\sprset(F/G)=G,\quad\sprset(\ssame F)=F,\quad\sprset(\sdiff F G)=G.$
\pssn The above notation implies 
\begin{equation}\label{EQ:leftright}
	\rel{\spp}\subseteq\lof{\spleft\spp}\times\lof{\sprset\spp}.
\end{equation}


\begin{lemma}\mylabel{LEM:sym:restrict}
	If $\spp\in\srel[\Gamma]$ then $\szabc{\spp}\le2|\Gamma|+2$.
\end{lemma}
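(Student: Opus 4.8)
The plan is to bound the length of a pairing spec $\spp\in\srel[\Gamma]$ by going through the six syntactic forms in~\eqref{eq:pairspec} and using the already-established bound $\szabc{F}\le|\Gamma|$ for any set spec $F$ respecting $\Gamma$ (Lemma~\ref{LEM:ssets:lang}, part~4), together with the constant-size contributions of the separator symbol $\bm/$ and the special symbols $\bm=$, $\bm\neq$, and $\ews$.

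First I would recall from Definition~\ref{DEF:spairs} that $\spp$ is one of $\ews/\ews$, $\ews/G$, $F/\ews$, $F/G$, $\ssame F$, or $\sdiff F G$, where $F,G$ are set specs respecting $\Gamma$ (since $\spp$ respects $\Gamma$) with $F,G\neq\ews$. In each form, the string $\spp$ consists of at most two set specs, plus the symbol $\bm/$, plus possibly one of the special symbols $\bm=$ or $\bm\neq$. Writing this out as in Example~\ref{EX:spairs}: $\szabc{\ews/\ews}=3$, $\szabc{\ews/G}=2+\szabc G$, $\szabc{F/\ews}=2+\szabc F$, $\szabc{F/G}=1+\szabc F+\szabc G$, $\szabc{\ssame F}=2+\szabc F$, and $\szabc{\sdiff F G}=2+\szabc F+\szabc G$. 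The dominant case is clearly $F/G$ or $\sdiff F G$, where two full set specs appear.

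Then I would apply $\szabc F\le|\Gamma|$ and $\szabc G\le|\Gamma|$ from Lemma~\ref{LEM:ssets:lang}(4), giving, in the worst case $\sdiff F G$, the bound $\szabc{\spp}\le 2+|\Gamma|+|\Gamma|=2|\Gamma|+2$. Every other form yields a strictly smaller (or equal) value, so $\szabc{\spp}\le 2|\Gamma|+2$ holds in all cases. A brief remark that the length counts each symbol as a single character, consistent with the conventions of Example~\ref{EX:ssets} and Example~\ref{EX:spairs}, closes the argument.

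I do not expect a genuine obstacle here; the proof is a routine case analysis. The only point requiring a little care is making sure the accounting of the separator and the trailing special symbol is exact rather than merely asymptotic, so that the stated constant $2|\Gamma|+2$ is actually attained (e.g.\ by $\spp=\sone{\wo(S)}/\sone{\wo(T)}{\bm\neq}$ with $|S|=|T|=|\Gamma|-1$, which has length $2|\Gamma|+2$ — wait, that is $3+(|\Gamma|-1)+(|\Gamma|-1)=2|\Gamma|+1$; one must instead note the bound $\szabc F\le|\Gamma|$ itself can be tight only when the underlying word has $|\Gamma|-1$ symbols, giving $\szabc F=|\Gamma|$, so the extremal pairing spec has length exactly $2|\Gamma|+2$). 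Getting this bookkeeping right is the whole content of the lemma.
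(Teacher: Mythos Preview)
Your proof is correct and follows exactly the route the paper takes: the paper's own proof is a one-liner, ``Follows from Lemma~\ref{LEM:ssets:lang},'' and your case analysis over the six forms in~\eqref{eq:pairspec} together with the bound $\szabc{F}\le|\Gamma|$ is precisely the intended expansion of that sentence. One small remark: the tightness discussion in your final parenthetical is not needed (the lemma asserts only an upper bound), and your hesitant recount there actually miscounts---the string $\sone w/\sone v\,\bm\neq$ with $|w|=|v|=|\Gamma|-1$ has four non-word symbols, not three, so its length is $4+2(|\Gamma|-1)=2|\Gamma|+2$ as you first said; you may simply drop that aside.
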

\begin{proof}
	Follows from Lemma~\ref{LEM:ssets:lang}. 
\end{proof}

\section{Label Sets and their Behaviours}\label{SEC:labelsets}\prlabel{SEC:labelsets}
We are interested in automata-type objects (labelled graphs) $\autg$ in which every transition label $\beta$ represents a set $\cali(\beta)$ of elements in some monoid $M$. The subsets $\cali(\beta)\sseq M$ are the behaviours  of the labels and they are used to define the behaviour of $\autg$ as a subset of $M$. We focus on sets of labels in this section---see next section for labelled graphs. We shall use the notation
\pssi $\ew_M$ for the neutral element of the monoid $M$.
\pssn
If $S,S'$ are any two subsets of $M$ then, as usual, we define 
\[
SS'=\{mm'\mid m\in S,\> m'\in S'\}\quad\text{ and }\quad S^i=S^{i-1}S\quad \text{ and }\quad S^*=\cup_{i=0}^\infty S^i.
\]
where $S^0=\{\ew_M\}$ and the monoid operation is denoted by simply concatenating elements. We shall only consider \emdef{finitely generated} monoids $M$ where each $m\in M$ has a \emdef{canonical} (string) representation $\underline m$. Then, we write
\[\underline M=\{\underline m\mid m\in M\}.\]
In the example below, we provide sample canonical representations for the two monoids of interest to this work.

\begin{example}\label{EX:stand:monoids}\prlabel{EX:stand:monoids}
    We shall consider two standard monoids. 
    \begin{enumerate}
    \item The free monoid $\Gamma^*$ (or $\Sigma^*$) whose neutral element is $\ew$. The canonical representation of a nonempty word $w$ is $w$ itself and that of $\ew$ is $\ews$: $\underline\ew=\ews$. 
    \item The monoid $\Sigma^*\times\Delta^*$ (or $\Gamma^*\times\Gamma^*$) whose neutral element is $(\ew,\ew)$. The canonical representation of a word pair $(u,v)$ is $\underline u/\underline v$. In particular, $\underline{(\ew,\ew)}=\ews/\ews$.
    \end{enumerate}
\end{example}

A \emdef{label set} $B$ is a nonempty set of nonempty strings (over $\Omega$). A \emdef{label behaviour} is a mapping 
$$\cali:B\to 2^M,$$ where $M$ is a monoid. Thus, the behaviour $\beh{\beta}$ of a label $\beta\in B$ is a subset of $M$. We shall  consider label sets $B$ with \emdef{fixed behaviours}, so we shall 
\pssi denote by $\>\mon B\>$  the \emdef{monoid of} $B$ via its fixed behaviour.

\pssn
\textbf{Notational Convention.}
We shall  make the \emdef{convention} that for any label sets $B_1,B_2$ with fixed behaviours $\cali_1,\cali_2$, we have:
\pssi if  
$\mon B_1=\mon B_2$ then  
$\cali_1(\beta)=\cali_2(\beta)$, for all $\beta\in B_1\cap B_2$.
\pssn  
With this convention we can simply use a single behaviour notation $\cali$ for all label sets with the same behaviour monoid, that is, we shall use $\cali$ for any $B_1, B_2$ with $\mon B_1=\mon B_2$. This convention is applied in the example below: we use $\call$ for the behaviour of both the label sets  $\Sigma_{\ews}$ and $\sspec[\Gamma]$.

\begin{example}\label{EX:fixedbeh}\prlabel{EX:fixedbeh}
We shall use some of the following label sets and their  fixed label behaviours.
\begin{enumerate}
	\item $\Sigma_{\ews}=\Sigma\cup\{\ews\}$ with behaviour $\call:\Sigma_{\ews}\to 2^{\Sigma^*}$ such that $\lof g=\{g\}$, if $g\in\Sigma$, and $\lof\ews=\{\ew\}$. Thus, $\mon\Sigma=\Sigma^*$.
	\item $\Sigma$ with behaviour $\call:\Sigma\to 2^{\Sigma^*}$ such that $\lof g=\{g\}$, for $g\in\Sigma$. Thus, $\mon\Sigma=\Sigma^*$.
	\item $\sspec[\Gamma]$ with behaviour $\call: \sspec[\Gamma]\to 2^{\Gamma^*}$, as specified in Def.~\ref{DEF:sset:lang}. Thus, $\mon\sspec[\Gamma]=\Gamma^*$.
	\item $\REG\Sigma=\REG\Sigma_{\ews}$ = all regular expressions over $\Sigma$ with behaviour $\call: \REG\Sigma\to 2^{\Sigma^*}$ such that $\lof r$ is the language of the regular expression $r$. Thus, $\mon(\REG\Sigma)=\Sigma^*$.
	\item $[\Sigma_{\ews},\Delta_{\ews}]=\{x/y\mid x\in\Sigma_{\ews},y\in\Delta_{\ews}\}$ with behaviour 
	$$\calr: [\Sigma_{\ews},\Delta_{\ews}]\to 2^{\Sigma^*\times\Delta^*}$$ such that $\rel{\ews/\ews}=\{(\ew,\ew)\}$, $\rel{x/\ews}=\{(x,\ew)\}$, $\rel{\ews/y}=\{(\ew,y)\}$, $\rel{x/y}=\{(x,y)\}$,  for any $x\in\Sigma$ and $y\in\Delta$. Thus, $\mon[\Sigma_{\ews},\Delta_{\ews}]=\Sigma^*\times\Delta^*$. 
	\item $\srel[\Gamma]$ with behaviour $\calr: \srel[\Gamma]\to 2^{\Gamma^*\times\Gamma^*}$ as specified in Def.~\ref{DEF:spairs:rel}. Thus, $\mon\srel[\Gamma]=\Gamma^*\times\Gamma^*$.
	\item $\srelinvar$ with behaviour $\calr_{\undef}:\srelinvar\to\{\eset\}$. Thus, $\cali(\beta)=\eset$, for any $\beta\in\srelinvar$.
	\item If $B_1,B_2$ are label sets with behaviours $\cali_1,\cali_2$, respectively, then $[B_1,B_2]$ is the label set $\{\beta_1/\beta_2\mid \beta_1\in B_1,\beta_2\in B_2\}$ with behaviour and monoid such that  
	    $$\cali(\beta_1/\beta_2) = \cali_1(\beta_1)\times\cali_2(\beta_2)\quad\text{and}\quad \mon[B_1,B_2]=\mon B_1\times\mon B_2.$$
	\item $[\reg\Sigma,\reg\Delta]$ with behaviour $\calr$ in the monoid $\Sigma^*\times\Delta^*$ such that $\rel{\rer/\res}=\lof{\rer}\times\lof{\res}$, for any $\rer\in\reg\Sigma$ and $\res\in\reg\Delta$.
\end{enumerate}
For any monoid of interest $M$, $\underline M$ is a label set such that $$\mon \underline M=M \quad\text{and}\quad \cali(\underline m)=\{m\}.$$
Thus for example, as $\mon\srel[\Gamma]=\mon\underline{\Gamma^*\times\Gamma^*}=\Gamma^*\times\Gamma^*$ and the behaviour of $\srel$ is denoted by $\calr$, we have  $\rel{\underline{(0,1)}}=\rel{0/1}=\{(0,1)\}=\rel{\sone0/\sone1}$.
\end{example}

\begin{remark}\label{REM:labelsets}\prlabel{REM:labelsets}
	We shall not attempt to define the set of all labels. We limit ourselves to those of interest in this paper. Of course one can define new label sets $X$ at will, depending on the application; and in doing so, one would also define concepts related to those label sets, such as the $\mon X$.
\end{remark}

\section{Labelled Graphs, Automata, Transducers}\label{SEC:graphs}\prlabel{SEC:graphs}
Let $B$ be a label set with behaviour $\cali$. A \emdef{type $B$   graph} is a quintuple 
$$\autg = \big(Q,B,\delta,I,F\big)$$ 
such that 
\begin{itemize}
  \item $Q$ is a  nonempty set whose elements are called \emdef{states};
  \item  $I\sseq Q$ is the nonempty set of initial, or start states;
  \item  $F\sseq Q$ is the set of final states;
  \item  $\delta$ is a set, called the set of \emdef{edges} or \emdef{transitions}, consisting of triples $(p,\beta,q)$ such that $p,q\in Q$ and $\beta$ is a nonempty string of $\Omega$-symbols. 
  \item the set of \emdef{labels}  
	$\Labels(\autg) =\{\beta\mid (p,\beta,q)\in\delta\}$
	is a subset of $B$.
\end{itemize}

We shall use the term \emdef{labelled graph} to mean a type $B$ graph as defined above, for some label set $B$. The labelled graph is called \emdef{finite} if $Q$ and $\delta$ are both finite. \emph{Unless otherwise specified, a labelled graph, or type $B$ graph, will be assumed to be finite.}
\pnsi
As a label $\beta$ is a string, the length $\szabc\beta$ is well-defined. Then, the  \emdef{size} $\szabc{e}$ of an edge $e=(p,\beta,q)$ is the quantity $1+\szabc\beta$ and the size of $\delta$ is $\szw{\delta} =\sum_{e\in \delta}\sz{e}$. Then the \emdef{graph size} of $\autg$ is the quantity
\[
\szabc{\autg}=|Q|+\szw{\delta}.
\]
	\pnsi 
	A \emdef{path} $P$ of $\autg$ is a sequence of consecutive transitions, that is, $P=\sequ{q_{i-1},\beta_i,q_i}_{i=1}^\ell$ such that  each $(q_{i-1},\beta_i,q_i)$ is in $\delta$. The path $P$ is called \emdef{accepting}, if $q_0\in I$ and $q_\ell\in F$. If $\ell=0$ then $P$ is empty and it is an accepting path if $I\cap F\neq\eset$. 
	\pnsi
	A state is called \emdef{isolated}, if it does not occur in any transition of $\autg$. A state is called \emdef{useful}, if it occurs in some accepting path. Note that any state  in $I\cap F$ is useful and can be isolated. The labelled graph $\autg$ is called \emdef{trim}, if 
	\pssi every state of $\autg$ is useful, and
	\pnsi $\autg$ has at most one isolated state in $I\cap F$. 
	\pssn Computing the trim part of $\autg$ means removing the non-useful states and keeping only one isolated state in $I\cap F$ (if such states exist), and can be computed in linear time $O(\szabc{\autg})$.

\begin{lemma}\label{LEM:trim}\prlabel{LEM:trim}
	Let $\autg=(Q,B,\delta,I,F)$ be a trim  labelled graph. We have that
	$$\card Q\le2\card\delta+1.$$
\end{lemma}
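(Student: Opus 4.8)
The plan is to partition $Q$ into the states that occur in at least one edge and those that do not, and to bound each part separately. Write $Q_\delta$ for the set of states that appear as the source or the target of some transition in $\delta$. Each edge $(p,\beta,q)\in\delta$ contributes only the two (not necessarily distinct) states $p$ and $q$ to $Q_\delta$, so a union bound over the $|\delta|$ edges gives $|Q_\delta|\le 2|\delta|$.

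Next I would show that $Q\setminus Q_\delta$ has at most one element. A state $s\in Q\setminus Q_\delta$ is, by definition, isolated, so the only path of $\autg$ containing $s$ is the empty path at $s$; according to the definitions preceding the lemma, that path is accepting exactly when $s\in I\cap F$. Since $\autg$ is trim, every state is useful, i.e. lies on some accepting path, which forces $s\in I\cap F$. But trimness also stipulates that $\autg$ has at most one isolated state in $I\cap F$, so $|Q\setminus Q_\delta|\le 1$.

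Combining the two bounds, $|Q|=|Q_\delta|+|Q\setminus Q_\delta|\le 2|\delta|+1$, which is the claim; the degenerate case $\delta=\eset$ is included, since then $Q=Q\setminus Q_\delta$ has at most one element and $2|\delta|+1=1$.

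I do not expect a real obstacle here: the only step needing a moment's care is the observation that an isolated useful state must belong to $I\cap F$, and this is immediate from the definitions of \emph{path}, \emph{accepting path}, and \emph{useful} stated just above the lemma. It is worth noting that the bound is tight — a single edge $(p,\beta,q)$ with $p\neq q$ together with one extra isolated state placed in $I\cap F$ yields a trim labelled graph with $|Q|=3=2|\delta|+1$.
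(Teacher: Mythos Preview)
Your proof is correct and follows essentially the same approach as the paper: bound the non-isolated states by $2|\delta|$ and the isolated states by $1$ using trimness. The only cosmetic difference is that the paper bounds the non-isolated states via a finer partition ($Q_1=$ states with an outgoing but no incoming edge, $Q_2=$ states with an incoming edge, each of size $\le|\delta|$), whereas you use a direct union bound over the endpoints of edges; both arguments give the same $2|\delta|$ and are equally valid.
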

\begin{proof}
	$Q$ can be partitioned into three sets: $Q_1$ = the set of states having an outgoing edge but no incoming edge; $Q_2$ = the set of states having an incoming edge; and possibly a single isolated state in $I\cap F$. The claim follows from the fact that $\card{Q_1},\card{Q_2}\le\card{\delta}$.
\end{proof}

\begin{definition}\label{DEF:graph:behave}\prlabel{DEF:graph:behave}
	Let $\autg=\big(Q,B,\delta,I,F\big)$ be a  labelled  graph, for some label set $B$ with behaviour $\cali$. We define the \emdef{behaviour} $\cali(\autg)$ of $\autg$ as the set of all $m\in \mon B$ such that there is an accepting path $\sequ{q_{i-1},\beta_i,q_i}_{i=1}^\ell$ of $\autg$ with 
	\[m\in\cali(\beta_1)\cdots\cali(\beta_\ell). \]
	The \emdef{expansion} $\exp\autg$ of $\autg$ is the labelled  graph $\big(Q,\underline{\mon B},\delta_{\exp},I,F\big)$ such that
	\[
    \delta_{\exp}=\{(p,\underline m,q)\mid \exists\,(p,\beta,q)\in\delta: m\in\cali(\beta) \}.
     \] 
     In some cases it is useful to modify $\autg$ by adding the transition $(q,\srep{\nel B},q)$ (a self loop) for each state $q$ of $\autg$. The resulting labelled graph is \emdef{denoted} by $\autg^{\ew}$.
\end{definition}

\begin{remark}\label{REM:canonical}\prlabel{REM:canonical}
	The above definition remains valid with no change if the labelled graph, or its expansion, is not finite. The expansion graph of $\autg$ can have infinitely many transitions---for example if $\autg$ is of type $\REG\Sigma$.
\end{remark}

\begin{lemma}\label{LEM:expanded}\prlabel{LEM:expanded}
	For each type $B$ graph $\autg=(Q,B,\delta,I,F)$, we have that 
	$$\cali(\autg)=\cali(\exp\autg)\quad \text{and} \quad \cali(\autg)=\cali(\autg^\ew).$$
\end{lemma}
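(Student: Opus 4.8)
The plan is to prove the two claimed equalities $\cali(\autg)=\cali(\exp\autg)$ and $\cali(\autg)=\cali(\autg^\ew)$ separately, in each case by establishing mutual inclusion of the two behaviour sets, working directly from Definition~\ref{DEF:graph:behave}. The underlying observation that makes everything go through is that the state set, the initial set, and the final set of $\autg$, $\exp\autg$ and $\autg^\ew$ all coincide, so accepting paths on one side can be matched against accepting paths on the other with the same endpoints; only the transitions (hence the labels along a path) change, and one checks that the product of label behaviours along matched paths yields the same subset of $\mon B$.

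For $\cali(\autg)=\cali(\exp\autg)$: an element $m\in\cali(\autg)$ witnesses an accepting path $\sequ{q_{i-1},\beta_i,q_i}_{i=1}^\ell$ with $m\in\cali(\beta_1)\cdots\cali(\beta_\ell)$, so there are $m_i\in\cali(\beta_i)$ with $m=m_1\cdots m_\ell$. By the definition of $\delta_{\exp}$, each triple $(q_{i-1},\underline{m_i},q_i)$ lies in $\delta_{\exp}$, giving an accepting path of $\exp\autg$; since $\cali(\underline{m_i})=\{m_i\}$ (the canonical-representation convention from Example~\ref{EX:fixedbeh}), we get $m=m_1\cdots m_\ell\in\cali(\underline{m_1})\cdots\cali(\underline{m_\ell})$, so $m\in\cali(\exp\autg)$. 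Conversely, any accepting path of $\exp\autg$ consists of transitions $(q_{i-1},\underline{m_i},q_i)$ each of which, by definition of $\delta_{\exp}$, arises from some $(q_{i-1},\beta_i,q_i)\in\delta$ with $m_i\in\cali(\beta_i)$; the same sequence of $\beta_i$ then forms an accepting path of $\autg$ along which the product of behaviours contains $m_1\cdots m_\ell$, so $\cali(\exp\autg)\sseq\cali(\autg)$.

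For $\cali(\autg)=\cali(\autg^\ew)$: the inclusion $\cali(\autg)\sseq\cali(\autg^\ew)$ is immediate since $\autg^\ew$ only adds edges, so every accepting path of $\autg$ is still one in $\autg^\ew$ with the same label product. For the reverse inclusion, take an accepting path of $\autg^\ew$ and delete every occurrence of an added self-loop $(q,\srep{\nel B},q)$; this yields an accepting path of $\autg$ with the same endpoints, and since $\cali(\nel B)$ is the behaviour of the canonical representation of the neutral element $\ew_{\mon B}$, i.e. $\cali(\srep{\nel B})=\{\ew_{\mon B}\}$, deleting such a factor from the product $\cali(\beta_1)\cdots\cali(\beta_\ell)$ does not change it; hence any $m$ witnessed by the $\autg^\ew$-path is also witnessed by the pruned $\autg$-path. (One should also note the degenerate case where the $\autg^\ew$-path consists only of added self-loops at a single state $q\in I$: then $q\in I$, and the path contributes $\ew_{\mon B}$, which is already in $\cali(\autg)$ iff $q\in F$ — but in that situation $q$ lies on the empty accepting path of $\autg$ exactly when $I\cap F\neq\eset$, so the case is consistent.)

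I do not anticipate a serious obstacle here; the only point requiring a little care is bookkeeping the degenerate/empty-path cases (paths of length zero, and $\autg^\ew$-paths made entirely of added loops), and making explicit the use of the convention $\cali(\underline m)=\{m\}$ and $\cali(\srep{\nel B})=\{\ew_{\mon B}\}$ so that inserting or deleting neutral factors and replacing $\beta_i$ by $\underline{m_i}$ are behaviour-preserving operations on label products.
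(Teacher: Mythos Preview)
Your proposal is correct and follows essentially the same approach as the paper's proof: both directions of $\cali(\autg)=\cali(\exp\autg)$ are handled by translating an accepting path transition-by-transition using the definition of $\delta_{\exp}$ together with $\cali(\underline m)=\{m\}$, and $\cali(\autg)=\cali(\autg^\ew)$ is obtained from $\delta\sseq\delta^\ew$ plus deletion of the added neutral self-loops. Your treatment is in fact slightly more explicit than the paper's (which leaves one inclusion and the degenerate cases to the reader).
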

\begin{proof}
	Let $m\in\cali(\exp\autg)$. Then there is an accepting path $\sequ{q_{i-1},\underline{m_i},q_i}_{i=1}^\ell$ of $\exp\autg$ such that $m\in\cali(\underline{m_1})\cdots\cali(\underline{m_\ell})=\{m_1\}\cdots\{m_\ell\}$; hence, $m=m_1\cdots m_\ell$. By definition of $\delta_{\exp}$, for each $i=1,\ldots,\ell$, there is $(q_{i-1},\beta_i,q_i) \in\delta$ such that $m_i\in\cali(\beta_i)$, so $\sequ{q_{i-1},\beta_i,q_i}_{i=1}^\ell$ is an accepting path of $\autg$. Then, $\cali(\beta_1)\cdots\cali(\beta_\ell)\sseq\cali(\autg)$, and $m\in\cali(\autg)$. Conversely, for any $m\in\cali(\autg)$, one uses similar arguments to show that $m\in\cali(\exp\autg)$. Thus, $\cali(\autg)=\cali(\exp\autg)$.
	\pnsi 
	To show that $\cali(\autg)=\cali(\autg^\ew)$, let $\delta^\ew$ be the set of transitions in $\autg^\ew$. As $\delta\sseq\delta^\ew$, we have $\cali(\autg)\sseq\cali(\autg^\ew)$. For the converse, the main idea is that, if any accepting path of $\autg^\ew$ contains  transitions $(q_{i-1},\srep{\nel B}, q_i)$ with $q_{i-1}=q_i$, then these transitions can be omitted resulting into an accepting path of $\autg$.  
\end{proof}

\begin{remark}\label{REM:eloops}\prlabel{REM:eloops}
	As stated before, our focus is on two kinds of monoids: $\Sigma^*$ and $\Sigma^*\times\Delta^*$. Recall that, in those monoids, the neutral elements $\ew$ and $(\ew,\ew)$ have canonical representations $\ews$ and $\ews/\ews$, which are of fixed length. Thus, we shall assume that $\underline{\ew_{\mon B}}=O(1)$, for any label set $B$. This implies that
	\[\szabc{\autg^\ew}=O(\szabc{\autg}).\]
\end{remark}

\begin{definition}\label{DEF:fsms}\prlabel{DEF:fsms} Let $\Sigma,\Delta,\Gamma$ be alphabets.
    \begin{enumerate}
    \item A nondeterministic finite \emdef{automaton} with empty transitions, or \emdef{$\ew$-NFA} for short, is a labelled graph $\auta=(Q,\Sigma_{\ews},\delta,I,F)$. If $\Labels(\auta)\sseq\Sigma$ then $\auta$ is called an \emdef{NFA}. The \emdef{language $\lof\auta$} accepted by $\auta$ is the behaviour of $\auta$ with respect to the label set $\Sigma_{\ews}$.
    \item An \emdef{automaton with set specs} is a  labelled graph $\autb=(Q,\sspec[\Gamma],\delta,I,F)$. The \emdef{language $\lof\autb$} accepted by $\autb$ is the behaviour of $\autb$ with respect to the label set $\sspec[\Gamma]$.
    \item A \emdef{transducer (in standard form)} is a  labelled graph $\trt=(Q,[\Sigma_{\ews},\Delta_{\ews}],\delta,I,F)$. The \emdef{relation $\rel\trt$ realized} by $\trt$ is the behaviour of $\trt$ with respect to the label set $[\Sigma_{\ews},\Delta_{\ews}]$.
    \item A \emdef{transducer with set specs} is a  labelled graph  $\trs=(Q,\srel[\Gamma],\delta,I,F)$. The \emdef{relation $\rel\trs$ realized} by $\trs$ is the behaviour of $\trs$ with respect to the label set $\srel[\Gamma]$.
    \item An \emdef{alphabet invariant transducer} is a  labelled graph $\tri  = ( Q, \srelinvar,\delta, I, F )$. If $\Gamma$ is an alphabet then the \emdef{$\Gamma$-version} of $\tri$ is the transducer with set specs $\tri[\Gamma]  = ( Q, \srel[\Gamma], \delta, I, F ).$ 
    \end{enumerate}
\end{definition}

\begin{remark}
    The above definitions about automata and transducers are equivalent to the standard ones. The only slight  	deviation is that, instead of using the empty word $\ew$ in transition labels, here we use the empty word symbol $\ews$. This has two advantages: (i) it allows us to make a uniform presentation of definitions and results and (ii) it is consistent with the use of a symbol for the empty word in regular expressions. As usual about transducers $\trt$, we denote by $\trt(w)$ the \emdef{set of outputs of $\trt$} on input $w$, that is, 
    $$\trt(w)=\{u\mid (w,u)\in\rel\trt\}.$$
     Moreover, for any language $L$, we have that $\trt(L)=\cup_{w\in L}\trt(w)$.
\end{remark}

\begin{remark}
	The size of an alphabet invariant transducer $\tri$ is of the same order of magnitude as $|Q|+|\delta|$.
\end{remark}

\begin{lemma}\label{LEM:exp}\prlabel{LEM:exp}
	If $\autb$ is an automaton with set specs then $\exp\autb$ is an $\ew$-NFA. If $\trs$ is a transducer with set specs then $\exp\trs$ is a transducer (in standard form).
\end{lemma}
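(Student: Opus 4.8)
The statement is a compatibility check between the general notion of expansion (Definition~\ref{DEF:graph:behave}) and the concrete label sets of Example~\ref{EX:fixedbeh}, so the plan is essentially a definition chase. The whole content is to verify the two inclusions $\Labels(\exp\autb)\subseteq\Gamma_{\ews}$ and $\Labels(\exp\trs)\subseteq[\Gamma_{\ews},\Gamma_{\ews}]$, where $\Gamma_{\ews}=\Gamma\cup\{\ews\}$: once these hold, $\exp\autb$ and $\exp\trs$ are finite labelled graphs with label sets $\Gamma_{\ews}$ and $[\Gamma_{\ews},\Gamma_{\ews}]$ respectively, hence by Definition~\ref{DEF:fsms} an $\ew$-NFA and a transducer in standard form.

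For the automaton case I would start from $\autb=(Q,\sspec[\Gamma],\delta,I,F)$, recall that $\mon\sspec[\Gamma]=\Gamma^*$, and read off from Definition~\ref{DEF:graph:behave} that the labels of $\exp\autb$ are exactly the canonical representations $\underline m$ with $m\in\lof\beta$, taken over all $(p,\beta,q)\in\delta$. The key step is to combine two facts: (i) $\lof\beta\subseteq\Gamma\cup\{\ew\}$ for every $\beta\in\sspec[\Gamma]$, by Definition~\ref{DEF:sset:lang}; and (ii) in $\Gamma^*$ the canonical representation of a one-symbol word $m\in\Gamma$ is $m$ itself, while $\underline\ew=\ews$ (Example~\ref{EX:stand:monoids}). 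Hence each such $\underline m$ lies in $\Gamma\cup\{\ews\}=\Gamma_{\ews}$ and is a nonempty string, so it is a legitimate edge label, and the first inclusion follows. I would also observe that $\delta_{\exp}$ is finite, since $\delta$ is finite and each $\lof\beta$ is a finite set, so $\exp\autb$ is indeed a finite labelled graph.

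The transducer case is parallel. From $\trs=(Q,\srel[\Gamma],\delta,I,F)$ with $\mon\srel[\Gamma]=\Gamma^*\times\Gamma^*$, the labels of $\exp\trs$ are the canonical representations $\underline{(u,v)}=\underline u/\underline v$ of pairs $(u,v)\in\rel\spp$, over all $(p,\spp,q)\in\delta$. The relevant input now is Definition~\ref{DEF:spairs:rel}, which shows $\rel\spp\subseteq(\Gamma\cup\{\ew\})\times(\Gamma\cup\{\ew\})$ for every $\spp\in\srel[\Gamma]$; applying the automaton-case reasoning to each coordinate gives $\underline u,\underline v\in\Gamma_{\ews}$, hence $\underline u/\underline v\in[\Gamma_{\ews},\Gamma_{\ews}]$, which is the second inclusion. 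Finiteness of $\delta_{\exp}$ again follows since $\delta$ is finite and each $\rel\spp$ is a finite relation.

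I do not expect a genuine obstacle here. The only points that deserve care are keeping the empty word $\ew$ notationally separate from its canonical symbol $\ews$, and noting that---unlike type $\REG\Sigma$ graphs, whose expansions may be infinite (Remark~\ref{REM:canonical})---the behaviours $\lof\beta$ and $\rel\spp$ of set specs and pairing specs are finite, which is exactly what keeps $\exp\autb$ and $\exp\trs$ finite and thus bona fide $\ew$-NFAs and transducers.
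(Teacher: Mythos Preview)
Your proposal is correct and is exactly the routine definition chase one would expect; the paper in fact states the lemma without proof, treating it as immediate from Definitions~\ref{DEF:sset:lang}, \ref{DEF:spairs:rel}, \ref{DEF:graph:behave}, and \ref{DEF:fsms}. Your write-up is more explicit than the paper, and the extra care you take with finiteness of $\delta_{\exp}$ and the $\ew$ versus $\ews$ distinction is appropriate but not strictly required here.
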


\pnsn\textbf{Convention.} Let $\Phi(\tru)$ be any statement about the behaviour of an automaton or transducer $\tru$. If $\trv$ is an automaton or transducer with set specs then we make the convention that the statement $\Phi(\trv)$ means $\Phi(\exp\trv)$. For example, ``$\trs$ is an input-altering transducer'' means that ``$\exp\trs$ is an input-altering transducer''---a transducer $\trt$ is \emdef{input-altering} if $u\in\trt(w)$ implies $u\neq w$, or equivalently $(w,w)\notin\rel{\trt}$, for any word $w$.

\begin{example}\label{EX:sym:transd}\prlabel{EX:sym:transd}
The transducers shown in Fig.~\ref{FIG:sym:transd} are alphabet invariant. Both transducers are much more succinct compared to their expanded $\Gamma$-versions, as $|\Gamma|\to\infty$:
\[
|\exp\trt_{\rm sub2}[\Gamma]|=O(|\Gamma|^2)\quad\text{ and }\quad |\exp\trt_{\rm px}[\Gamma]|=O(|\Gamma|).
\]
	\begin{figure}\begin{center}
		\begin{tikzpicture}[>=stealth', shorten >=1pt, auto, node distance=2.50cm,initial text={}]
			\tikzset{sstate/.style={state,inner sep=3pt, minimum size=6pt}}
			\node [initial, sstate] (i) {$0$};
			\node [left of=i, node distance=1.25cm] () {$\trt_{\rm sub2}:$};
			\node [sstate, accepting] (s) [right of=i] {$1$};
			\node [sstate, accepting] (f) [right of=s] {$2$};
			\path[->] (i) edge node {$\sdiff{\sany}{\sany}$} (s)
			          (s) edge node {$\sdiff{\sany}{\sany}$} (f)
			          (i) edge [loop above] node [above] {$\ssame{\sany}$} ()
			          (s) edge [loop above] node [above] {$\ssame{\sany}$} ()
			          (f) edge [loop above] node [above] {$\ssame{\sany} $} ()
			;
			\node [initial, sstate, right of=f] (i1) {$0$};
			\node [left of=i1, node distance=1.00cm] () {$\trt_{\rm px}:$};
			\node [sstate, accepting] (f1) [right of=i1] {$1$};
			\path[->] (i1) edge node {${\sany}/{\ews}$} (f1)
			          (i1) edge [loop above] node [above] {$\ssame{\sany}$} ()
			          (f1) edge [loop above] node [above] {$\ssame{\sany} $} ()
			;
		\end{tikzpicture}
		\pmsn\parbox{0.85\textwidth}{\caption{The left transducer realizes the relation  of all $(u,v)$ such that $u\neq v$ and the Hamming distance of $u,v$ is at most 2. The right transducer realizes the relation of all $(u,v)$ such that $v$ is a proper prefix of $u$.}\label{FIG:sym:transd}\prlabel{FIG:sym:transd}}
	\end{center}\end{figure}
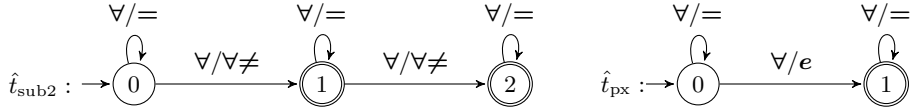
\end{example}

\begin{example}\label{EX:sym:autom}\prlabel{EX:sym:autom}
If expanded, the automaton with set specs in Fig.~\ref{FIG:sym:autom}, will have $3n+1$ transitions, as opposed to the current 7 ones. 
	\begin{figure}\begin{center}
		\begin{tikzpicture}[>=stealth', shorten >=1pt, auto, node distance=2.50cm,initial text={}]
			\tikzset{sstate/.style={state,inner sep=3pt, minimum size=6pt}}
			\node [initial, accepting, sstate] (i) {};
			\node [left of=i, node distance=1.25cm] () {$\autb:$};
			\node [sstate, accepting] (s) [right of=i] {};
			\node [sstate, accepting] (f) [right of=s] {};
			\path[->] (i) edge node [below] {$0$} (s)
			          (s) edge node [below] {$1$} (f)
			          (s) edge [bend right=35] node [above] {$\snone01$} (i)
			          (f) edge [bend left=35] node [above] {$\snone01$} (i)
			          (f) edge [bend right=35] node [above] {$0$} (s)
			          (i) edge [loop above] node [above] {$\snone0$} ()
			          (s) edge [loop above] node [above] {$0$} ()
			;
		\end{tikzpicture}
		\pmsn\parbox{0.85\textwidth}{\caption{The automaton accepts all strings over $\Gamma=\{0,\ldots,n\}$ that do not contain 011.}\label{FIG:sym:autom}\prlabel{FIG:sym:autom}}
	\end{center}\end{figure}
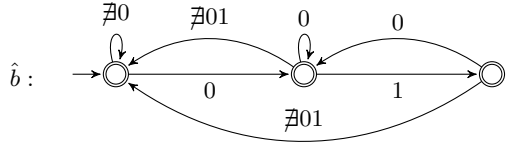
\end{example}

Following \cite{Yu:handbook}, if $\trt=(Q,[\Sigma_{\ews},\Delta_{\ews}],\delta,I,F)$ is a transducer then $\trtinv$ is the transducer $(Q,[\Delta_{\ews},\Sigma_{\ews}],\delta',I,F)$, where 
$\delta'=\{(p,y/x,q)\mid (p,x/y,q)\in\delta\},$ such that 
\begin{equation}\label{EQ:trinv}\prlabel{EQ:trinv}
\rel{\trtinv}=\rel{\trt}^{-1}.	
\end{equation}

\begin{lemma}\mylabel{LEM:expanded}
	For each transducer $\trs$ with set specs we have that 
	$$
	\exp(\trsinv)=(\exp\trs)^{-1}\quad\text{and}\quad\rel{\trsinv}=\rel{\trs}^{-1}.$$
\end{lemma}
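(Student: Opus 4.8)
The plan is to reduce both claims to the corresponding, already available, facts about ordinary transducers, using the expansion operation as the bridge. Recall that $\trsinv$ is the transducer with set specs $(Q,\srel[\Gamma],\delta^{-1},I,F)$, where $\trs=(Q,\srel[\Gamma],\delta,I,F)$ and $\delta^{-1}=\{(p,\sppinv,q)\mid(p,\spp,q)\in\delta\}$. The two ingredients I would lean on are Lemma~\ref{LEM:sym:inverse}(2), which gives $\rel{\sppinv}=\rel{\spp}^{-1}$ for every pairing spec $\spp\in\srel[\Gamma]$, and the identity $\cali(\autg)=\cali(\exp\autg)$ proved earlier for arbitrary labelled graphs.

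First I would establish the structural identity $\exp(\trsinv)=(\exp\trs)^{-1}$ by comparing the two graphs edge by edge; they already share the state set $Q$, the start set $I$, and the final set $F$, so only the transition sets need checking. By Definition~\ref{DEF:graph:behave}, a transition of $\exp(\trsinv)$ has the form $(p,\underline m,q)$ with $(p,\spp,q)\in\delta$ and $m\in\rel{\sppinv}$; by Lemma~\ref{LEM:sym:inverse}(2) this means $m=(y,x)$ for some $(x,y)\in\rel{\spp}$, so the transition is $(p,\underline y/\underline x,q)$. On the other side, $\exp\trs$ is a transducer in standard form (Lemma~\ref{LEM:exp}) whose transitions are exactly the $(p,\underline x/\underline y,q)$ with $(x,y)\in\rel{\spp}$ and $(p,\spp,q)\in\delta$, and taking its inverse following \cite{Yu:handbook} swaps the two sides of each label, producing precisely the transitions $(p,\underline y/\underline x,q)$. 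Hence the two transition sets coincide.

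The behaviour identity then follows from the chain
\[
\rel{\trsinv}=\rel{\exp(\trsinv)}=\rel{(\exp\trs)^{-1}}=\rel{\exp\trs}^{-1}=\rel{\trs}^{-1},
\]
where the outer two equalities apply $\cali(\autg)=\cali(\exp\autg)$ to $\trsinv$ and to $\trs$, the middle one is the structural identity just proved, and the third is \eqref{EQ:trinv} applied to the ordinary transducer $\exp\trs$.

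The argument is essentially bookkeeping, so I do not expect a genuine obstacle; the only point that needs care is to keep three notions of ``inverse'' in sync, namely the pairing-spec inverse $\sppinv$ of Definition~\ref{DEF:spairs}, the monoid-element inverse $(x,y)\mapsto(y,x)$, and the transducer inverse of \cite{Yu:handbook}. These agree under $\exp$ exactly because the canonical representation of a pair $(x,y)\in(\Gamma\cup\{\ew\})^2$ is $\underline x/\underline y$, so that inverting the pair $m$ and swapping the two sides of the label $\underline m$ are literally the same operation; Lemma~\ref{LEM:sym:inverse}(2) is what connects the first two of these.
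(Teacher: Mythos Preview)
Your proof is correct and follows essentially the same approach as the paper: compare the transition sets of $\exp(\trsinv)$ and $(\exp\trs)^{-1}$ directly (using $\rel{\sppinv}=\rel{\spp}^{-1}$), and then derive the behaviour identity from the structural one via \eqref{EQ:trinv} and the equality $\cali(\autg)=\cali(\exp\autg)$. The paper's proof is terser, but your explicit four-step chain for the second identity is exactly what the paper's ``follows from \eqref{EQ:trinv} and Definition~\ref{DEF:graph:behave}'' unpacks to.
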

\begin{proof}
    The first identity follows from two facts: (i) $\exp(\trsinv)$ has transitions of the form $(p,y/x,q)$, where $y/x\in\rel{\sppinv}$ and $(p,\sppinv,q)$ is a transition in $\trsinv$; and (ii) $(\exp\trs)^{-1}$ has transitions of the form $(p,y/x,q)$, where $x/y\in\rel{\spp}$ and $(p,\spp,q)$ is a transition in $\trs$.
	The second identity follows from \eqref{EQ:trinv} and Definition~\ref{DEF:graph:behave}.
\end{proof}

\begin{remark}\mylabel{REM:deltaszw}
	Let $\trt=(Q,\Gamma,\delta,I,F)$ be a transducer with set specs. By Lemma~\ref{LEM:sym:restrict}, we have that 
	$$\szw{\delta} \le (2\card\Gamma+3)\card\delta.$$
\end{remark}

\section{Rational Operations}\mylabel{SEC:rational}
The three standard \emdef{rational operations} (union, catenation, star) on automata and transducers can be defined on  labelled graphs with appropriate constraints on the monoids involved. Let $\autg=(Q,B,\delta,I,F)$ and $\autg'=(Q',B',\delta',I',F')$ be  labelled  graphs such that 
$$\mon B=\mon B'\quad\text{ and }\quad Q\cap Q'=\eset.$$

The graph $\autg\cup\autg'$ of type $C=B\cup B'\cup\{\srep{\nel B}\}$ is defined as follows
$$\autg\cup\autg'=(Q\cup Q'\cup\{s\},C,\delta\cup\delta'\cup E,\{s\},F\cup F'),$$
where $s$ is a (new) state not in $Q\cup Q'$ and $E$ is the set of transitions $(s,\srep{\nel B},p)$, for all $p\in I\cup I'$.  

The graph $\autg\cdot\autg'$ of type $C=B\cup B'\cup\{\srep{\nel B}\}$ is defined as follows
$$\autg\cdot\autg'=(Q\cup\{q\}\cup Q',C,\delta\cup\delta'\cup E\cup E',I,F'),$$
where $q$ is a (new) state not in $Q\cup Q'$, $E$ is the set of transitions $(f,\srep{\nel B},q)$, for all $f\in F$, and $E'$ is the set of transitions $(q,\srep{\nel B},i')$, for all $i'\in I'$.  

The graph $\autg^*$ of type $D=B\cup\{\srep{\nel B}\}$ is defined as follows
$$\autg^*=(Q\cup \{s\},D,\delta\cup E_1\cup E_2,\{s\},F\cup \{s\}),$$
where $s$ is a (new) state not in $Q\cup Q'$, $E_1$ is the set of transitions $(s,\srep{\nel B},i)$ for all $i\in I$, and $E_2$ is the set of transitions $(f,\srep{\nel B},s)$ for all $f\in F$.


\begin{lemma}\label{LEM:rat:op}\prlabel{LEM:rat:op}
	Let $\autg=(Q,B,\delta,I,F)$ and $\autg'=(Q',B',\delta',I',F')$ be  trim  labelled graphs such that $\mon B=\mon B'$. 
	\begin{enumerate}
		\item $\cali(\autg\cup\autg')=\cali(\autg)\cup\cali(\autg')$ and $\sz{\autg\cup\autg'}=O(\sz\autg+\sz{\autg'})$.
		\item $\cali(\autg\cdot\autg')=\cali(\autg)\cali(\autg')$ and $\sz{\autg\cdot\autg'}=O(\sz\autg+\sz{\autg'})$.
		\item $\cali(\autg^*)=\cali(\autg)^*$ and $\sz{\autg^*}=O(\sz\autg)$.
	\end{enumerate}
\end{lemma}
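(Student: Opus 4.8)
The plan is to treat all three constructions uniformly as monoid-level analogues of the classical Thompson-style subautomaton constructions, with the transitions labelled $\srep{\nel B}$ playing the role of $\varepsilon$-transitions. Two observations drive everything. First, by the convention $\cali(\underline m)=\{m\}$ recorded in Example~\ref{EX:fixedbeh} we have $\cali(\srep{\nel B})=\{\nel B\}$, so inserting a $\srep{\nel B}$-transition anywhere along a path multiplies by the neutral element of $\mon B$ the product $\cali(P)=\cali(\beta_1)\cdots\cali(\beta_\ell)$ of label-behaviours along that path $P=\sequ{q_{i-1},\beta_i,q_i}_{i=1}^\ell$, and hence does not change it; recall $m\in\cali(\autg)$ iff $m\in\cali(P)$ for some accepting path $P$ of $\autg$. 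Second, since $Q\cap Q'=\eset$ and every transition added by a construction is incident only to the freshly created state(s) ($s$, or $q$), any accepting path of the resulting graph decomposes canonically into accepting paths of $\autg$ and/or $\autg'$ glued by $\srep{\nel B}$-transitions. For the size bounds I would use $\szabc{\srep{\nel B}}=O(1)$ (Remark~\ref{REM:eloops}) together with $\card I,\card F\le\card Q$ and the analogous bounds for $\autg'$ (with Lemma~\ref{LEM:trim} available if a surviving state count needs to be re-expressed via the edge count).

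I would carry out the union case in full, the other two being completely analogous. For $\cali(\autg)\cup\cali(\autg')\sseq\cali(\autg\cup\autg')$: given $m\in\cali(P)$ for an accepting path $P=\sequ{q_{i-1},\beta_i,q_i}_{i=1}^\ell$ of $\autg$ (so $q_0\in I$), prefixing the edge $(s,\srep{\nel B},q_0)\in E$ gives an accepting path of $\autg\cup\autg'$ --- note $q_0\in F\cup F'$ whenever $P$ is accepting, which also covers the degenerate case $\ell=0$ --- with label-product $\{\nel B\}\cali(P)\ni m$; symmetrically for $m\in\cali(\autg')$. For the reverse inclusion: an accepting path of $\autg\cup\autg'$ must start at $s$, and the only edges leaving $s$ lie in $E$, so it equals $(s,\srep{\nel B},q_0)$ followed by a path $P'$ from some $q_0\in I\cup I'$; since no edge of $\delta\cup\delta'$ is incident to $s$ and $Q\cap Q'=\eset$, $P'$ stays entirely inside $\autg$ or entirely inside $\autg'$ and is accepting there, so $\cali(P')\sseq\cali(\autg)$ or $\cali(P')\sseq\cali(\autg')$, and premultiplying by $\{\nel B\}$ preserves membership of $m$. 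For the size, $\szabc{\autg\cup\autg'}=\card Q+\card{Q'}+1+\szw\delta+\szw{\delta'}+\szw E$ with $\szw E=\sum_{p\in I\cup I'}(1+\szabc{\srep{\nel B}})=O(\card Q+\card{Q'})$, hence $\szabc{\autg\cup\autg'}=O(\szabc\autg+\szabc{\autg'})$.

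For $\autg\cdot\autg'$ the same template applies: an accepting path runs from $I\sseq Q$ to $F'\sseq Q'$, and the only edges crossing from the $Q$-part to the $Q'$-part are $(f,\srep{\nel B},q)\in E$ and $(q,\srep{\nel B},i')\in E'$ through the bridge state $q$, with none returning, so the path factors uniquely as $P_1\cdot(f,\srep{\nel B},q)\cdot(q,\srep{\nel B},i')\cdot P_2$ with $P_1,P_2$ accepting paths of $\autg,\autg'$; its label-product is $\cali(P_1)\{\nel B\}\{\nel B\}\cali(P_2)=\cali(P_1)\cali(P_2)$, a subset of $\cali(\autg)\cali(\autg')$, and conversely joining two given accepting paths through the bridge yields the reverse inclusion, so $\cali(\autg\cdot\autg')=\cali(\autg)\cali(\autg')$. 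For $\autg^*$, the new state $s$ is both start and final: the empty path contributes $\nel B\in\cali(\autg)^0\sseq\cali(\autg)^*$, while every non-empty accepting path begins at $s$ and consists of $k\ge1$ ``loops'' of the form $(s,\srep{\nel B},i_j)\cdot Q_j$ with each $Q_j$ an accepting path of $\autg$, linked (and possibly closed off) by $\srep{\nel B}$-transitions $(f_j,\srep{\nel B},s)$, so its label-product is $\cali(Q_1)\cdots\cali(Q_k)\sseq\cali(\autg)^k$ (the interspersed $\{\nel B\}$ factors are absorbed); conversely a factorization $m=m_1\cdots m_k$ with $m_j\in\cali(\autg)$ assembles into such a path, giving $\cali(\autg^*)=\cali(\autg)^*$. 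The size bounds come out exactly as for the union, using $\szw{E_1},\szw{E_2}=O(\card Q)$ and, in the catenation, $\szw E=O(\card Q)$, $\szw{E'}=O(\card{Q'})$, together with $\szabc{\srep{\nel B}}=O(1)$.

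I expect the only genuinely delicate step to be the path-decomposition claim in each case --- verifying that an accepting path of the combined graph cannot interleave the two components except through the designated fresh states. This is precisely what the hypotheses $Q\cap Q'=\eset$ and the incidence pattern of the newly added edges are for; once it is pinned down, the behaviour identities are pure monoid bookkeeping with $\nel B$ and the size estimates are routine transition counting.
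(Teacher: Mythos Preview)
Your proposal is correct. The paper's own argument is extremely terse: it notes that the size bounds are immediate from the constructions and, for the behaviour identities, reduces to the expansion via Lemma~\ref{LEM:expanded} (so that one is working with a graph over $\underline{\mon B}$, where every label has singleton behaviour) and then cites the classical $\ew$-NFA proofs. You instead work directly at the level of type~$B$ graphs, using $\cali(\srep{\nel B})=\{\nel B\}$ to absorb the bridge transitions and carrying out the path-decomposition argument explicitly.

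The two routes amount to the same classical Thompson argument; the only difference is whether one first strips away the set-valued labels via expansion (the paper) or keeps them and lets the monoid bookkeeping handle products of sets (you). Your direct approach has the minor advantage of being self-contained and of making it transparent that nothing about the behaviour proof actually depends on $\cali(\beta)$ being a singleton; the paper's approach has the advantage of literally invoking an existing reference. Your identification of the one nontrivial point---that the disjointness hypothesis $Q\cap Q'=\eset$ together with the incidence pattern of the fresh $\srep{\nel B}$-edges forces the canonical decomposition of accepting paths---is exactly right, and your size estimates using Remark~\ref{REM:eloops} are what the paper has in mind when it says the size claims ``follow immediately''.
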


In the above lemma, the statements about the sizes of the graphs follow immediately from the definitions of their constructions. For the statements about the behaviours of the constructed graphs, it is sufficient to show the statements about their expansions. For example, for the third statement, one shows that
$$\cali(\exp\autg^*)=\cali(\exp\autg)^*.$$
But then, one works at the level of the monoid $\mon B$ and the proofs are essentially the same as the ones for the case of $\ew$-NFAs (see e.g. \cite{Wood:theory:of:comput}).

\section{Regular Expressions over Label Sets} \mylabel{SEC:sym:RE}
We extend the definitions of regular and 2D regular expressions to include set specs and pairing specs, respectively. We start off with a definition that would work with any label set (called set of atomic formulas in \cite{Sak:2015}).

\begin{definition}\label{DEF:reg:gen}\prlabel{DEF:reg:gen}
Let $B$ be a label set with behaviour $\cali$ such that no $\beta\in B$ contains the special symbol $\erel$. The set $\reg B$ of \emdef{type $B$ regular expressions} is the set of strings  consisting of the 1-symbol string $\erel$ and the strings in the set $Z$ that is defined inductively as follows.
\begin{itemize}
	\item $\underline{\ew_{\mon B}}$  is in $Z$. 
	\item Every $\beta\in B$ is in $Z$. 
	\item If $\rer,\res\in Z$ then $(\rer+\res), (\rer\cdot\res), (\rer^*)$ are in $Z$. 
\end{itemize}
The behaviour $\cali(\rer)$ of a type $B$ regular expression $\rer$ is defined inductively as follows.
\begin{itemize}
	\item $\cali(\erel)=\eset$ and $\cali(\underline{\ew_{\mon B}}) = \ew_{\mon B}$;
	\item $\cali(\beta)$ is the subset of $\mon B$ already defined by the behaviour $\cali$ on $B$;
	\item $\cali(\rer+\res)=\cali(\rer)\cup\cali(\res)$;
	\item $\cali(\rer\cdot\res)=\cali(\rer)\cali(\res)$;
	\item $\cali(\rer^*)=\cali(\rer)^*$. 
\end{itemize}
\end{definition}

\begin{example}\label{EX:reg}\prlabel{EX:reg}
	Let $\Sigma,\Delta$ be alphabets. Using $\Sigma$ as a label set, we have that $\reg \Sigma$ is the set of ordinary regular expressions over $\Sigma$. For the label set $[\Sigma_{\ews},\Delta_{\ews}]$, we have that $\reg[\Sigma_{\ews},\Delta_{\ews}]$ is the set of rational expressions over $\Sigma^*\times\Delta^*$ in the sense of \cite{Sak:2015}. 
\end{example}

\begin{example}\label{EX:res}\prlabel{EX:res}
    Let $\Gamma=\{0,1,\ldots,n-1\}$. In type $\sspec[\Gamma]$ regular expressions, the set specs $\sany ,\sone w,\snone w$ correspond to the following UNIX expressions, respectively: `.', `[$w$]', `[\verb1^1$w$]'. So $\lof \sany=\Gamma$.  When the alphabet size $n$ is a parameter rather than fixed, the savings when using expressions over label sets could be of order $O(n)$ or even $O(n^2)$. For example, the expression $\sany $ is of size $O(1)$ but the corresponding (ordinary) regular expression of type $\Gamma_{\ews}$ is $0+\cdots+(n-1)$, which is of size $O(n)$. Similarly, the following regular expression over $\srel[\Gamma]$ 
	\begin{equation}\mylabel{}
	(\ssame{\sany})^*\;\big(\sdiff{\sany}{\sany}\big)\;(\ssame{\sany})^*
	\end{equation}
	is of size $O(1)$. It describes all word pairs $(u,v)$ such that the Hamming distance of $u,v$ is 1. The corresponding (ordinary) regular expression over $\editops[\Gamma]$ is
	\[
	\big(0/0+\cdots+ (n-1)/(n-1)\big)^*\;\big(\rer_0+\cdots+ \rer_{n-1}\big)\;\big(0/0+\cdots +(n-1)/(n-1)\big)^*
	\]
	which is of size $O(n^2)$, where each $\rer_i$ is the sum of all expressions $i/j$ with $j\not=i$ and $i,j\in\Gamma$.
\end{example}

\begin{example}
    Consider the UNIX utility $\mathtt{tr}$. For any  strings $u,v$ of length $\ell>0$, the command
    \pssi $\mathtt{tr}\>\>u\>\>v$
    \pssn can be `simulated' by the following regular expression of type $\srel[\mathrm{ASCII}]$
    \pssi $\Big((\ssame{\snone u})\,+\>(\sone u[0]/\sone v[0])\,+\cdots+\,\big(\sone u[\ell-1]/v[\ell-1]\big)\Big)^*$
    \pssn where $\mathrm{ASCII}$ is the alphabet of standard ASCII characters. Similarly, the command
    \pssi $\mathtt{tr}\>\mathtt{-d}\>\>u$
    \pssn can be `simulated' by the following regular expression of type $\srel[\mathrm{ASCII}]$
    \pssi $\big(\sone u/\ews+\ssame{\snone u}\big)^*$
    \pssn For the command
    \pssi $\mathtt{tr}\>\mathtt{-s}\>\>u$
    \pssn it seems that any regular expression over $\srel[\mathrm{ASCII}]$ cannot be of size $O(\ell^2)$. A related (ordinary) transducer is shown below in Fig.~\ref{FIG:unix}.
	\begin{figure}\begin{center}
		\begin{tikzpicture}[>=stealth', shorten >=1pt, auto, node distance=2.50cm,initial text={}]
			\tikzset{sstate/.style={state,inner sep=3pt, minimum size=6pt}}
			\node [initial, accepting, sstate] (i) {$s$};
			\node [sstate, accepting] (s1) [right of=i] {$1$};
			\node (tmp1) [right of=s1,node distance=3.50cm]{};
			\node [sstate,node distance=2.00cm, accepting] (s0) [above of=tmp1] {$0$};
			\node [sstate,node distance=2.00cm, accepting] (s2) [below of=tmp1] {$2$};
			\path [->]
			      (i) edge [bend left=30] node [left] {$u[0]/u[0]\quad$} (s0)
			      (i) edge node [below] {$u[1]/u[1]$} (s1)
			      (i) edge [bend right=30] node [left] {$u[2]/u[2]\quad$} (s2)
			      (s1) edge [bend left=15] node [left] {$u[0]/u[0]\>$} (s0)
			      (s1) edge [bend right=15] node [left] {$u[2]/u[2]\>$} (s2)
			      (s0) edge [bend left=15] node [right] {$\>u[1]/u[1]$} (s1)
			      (s0) edge [bend left=35] node [left] {$\>u[2]/u[2]$} (s2)
			      (s2) edge [bend right=15] node [right] {$\>u[2]/u[2]$} (s1)
			      (s2) edge [bend right=65] node [right] {$\>u[0]/u[0]$} (s0)
			      (s0) edge [loop above] node [above] {$u[0]/\ews$} ()
			      (s1) edge [loop above] node [left] {$u[1]/\ews$} ()
			      (s2) edge [loop below] node [below] {$u[2]/\ews$} ()
			;
		\end{tikzpicture}
		\pmsn\parbox{0.85\textwidth}{\caption{Transducer realizing the command $\mathtt{tr}\>\mathtt{-s}\>\>u$ with $|u|=3$.}\label{FIG:unix}\prlabel{FIG:unix}}
	\end{center}\end{figure}
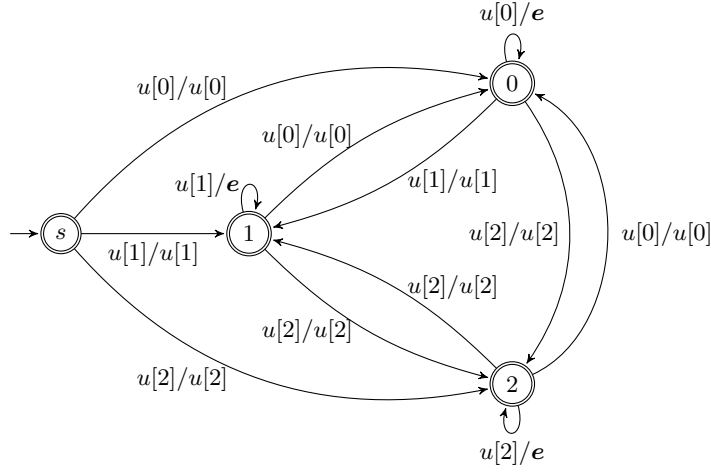
\end{example}

The Thompson method, \cite{Thom:1968}, of converting an ordinary regular expression over $\Sigma$---a type $\Sigma_{\ews}$ regular expression in the present terminology---to an $\ew$-NFA can be extended without complications to work with type $B$ regular expressions, for any label set $B$, using Lemma~\ref{LEM:rat:op}.

\begin{theorem}\label{TH:thompson}\prlabel{TH:thompson}
	Let $B$ be a label set with behaviour $\cali$. For each type $B$ regular expression $\rer$, there is a type $B$  graph $\autg(\rer)$ such that 
	\[\cali(\rer)=\cali\big(\autg(\rer)\big)\quad\text{ and }\quad \sz{\autg(\rer)}=O(|\rer|). \]
\end{theorem}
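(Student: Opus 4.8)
The plan is to proceed by structural induction on the type $B$ regular expression $\rer$, following the classical Thompson construction but phrased entirely at the level of labelled graphs, so that Lemma~\ref{LEM:rat:op} does all the heavy lifting. First I would handle the base cases: if $\rer=\erel$, take $\autg(\rer)$ to be the two-state (or one-state) graph with no transitions and no final state, so that $\cali(\autg(\rer))=\eset=\cali(\erel)$; if $\rer=\underline{\ew_{\mon B}}$, take a graph consisting of a single initial-and-final state (or two states joined by an $\underline{\ew_{\mon B}}$-edge), whose behaviour is $\{\ew_{\mon B}\}$; and if $\rer=\beta$ for some $\beta\in B$, take the two-state graph with a single transition $(s,\beta,t)$ where $s$ is initial and $t$ is final, whose behaviour is exactly $\cali(\beta)$. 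In each base case the graph is trim (or can be trimmed without changing its behaviour, by the linear-time trimming remark preceding Lemma~\ref{LEM:trim}) and has size $O(|\rer|)$.

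For the inductive step, suppose $\autg(\res)$ and $\autg(\res')$ have already been constructed for the immediate subexpressions, are trim, satisfy $\cali(\res)=\cali(\autg(\res))$, $\cali(\res')=\cali(\autg(\res'))$, and have sizes $O(|\res|)$ and $O(|\res'|)$. After renaming states so that the two state sets are disjoint (which does not affect behaviour or size), I would set $\autg(\res+\res')=\autg(\res)\cup\autg(\res')$, $\autg(\res\cdot\res')=\autg(\res)\cdot\autg(\res')$, and $\autg(\res^*)=\autg(\res)^*$, using exactly the constructions from Section~\ref{SEC:rational}. Lemma~\ref{LEM:rat:op} then gives $\cali(\autg(\res+\res'))=\cali(\res)\cup\cali(\res')=\cali(\res+\res')$, and likewise for catenation and star, matching the behaviour clauses in Definition~\ref{DEF:reg:gen}; it also gives the size bounds $\sz{\autg(\res)\cup\autg(\res')}=O(\sz{\autg(\res)}+\sz{\autg(\res')})$, etc. To keep invoking Lemma~\ref{LEM:rat:op} I need the inputs to be trim, so after each rational operation I would take the trim part (linear time, behaviour-preserving) before passing the result up the recursion; alternatively one checks directly that the three constructions preserve trimness up to the single allowed isolated state. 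Summing the $O(|\res|)$ bounds over the parse tree of $\rer$, and noting that each internal node contributes only a constant number of new states and $\srep{\nel B}$-edges of size $O(1)$ (here I use Remark~\ref{REM:eloops}, $\underline{\ew_{\mon B}}=O(1)$), yields $\sz{\autg(\rer)}=O(|\rer|)$ overall.

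One small technical point to address is the label set bookkeeping: each rational operation enlarges the label set from $B$ to something like $B\cup\{\srep{\nel B}\}$, but since $\srep{\nel B}\in\underline{\mon B}$ and (by the notational convention of Section~\ref{SEC:labelsets}) its behaviour $\{\ew_{\mon B}\}$ is fixed and consistent, the final graph is of type $B\cup\{\srep{\nel B}\}$ with the same behaviour monoid $\mon B$; if one insists on a graph literally of type $B$ one can absorb $\srep{\nel B}$ when $B$ already contains a label for the neutral element, or simply note that $B\cup\{\srep{\nel B}\}$ is an acceptable label set with $\mon{(B\cup\{\srep{\nel B}\})}=\mon B$. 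I would state the theorem's $\autg(\rer)$ as being of this slightly enlarged type, which is harmless.

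The main obstacle, such as it is, is not conceptual but is the care needed around trimness and the empty-language expression $\erel$: the constructions of Section~\ref{SEC:rational} are stated for trim graphs, so one must either re-trim after every step or verify trim-preservation directly, and the graph for $\erel$ (empty behaviour, hence no final states reachable) must be handled as a genuine base case rather than shoehorned into the rational operations. Everything else is a routine transcription of the Thompson argument, with Lemma~\ref{LEM:rat:op} supplying both the correctness and the linear-size guarantees that were proved once and for all at the monoid level.
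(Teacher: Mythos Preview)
Your proposal is correct and follows essentially the same approach as the paper: the paper's proof is in fact just a one-sentence remark that the classical Thompson construction extends without complications to type $B$ regular expressions using Lemma~\ref{LEM:rat:op}, which is precisely the structural induction you spell out. Your version is more detailed than what the paper provides (base cases, trimness maintenance, label-set bookkeeping, the additive size argument), but the underlying idea is identical.
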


\pssi
For the converse of the above theorem, we shall extend the state elimination method of automata, \cite{BrzMcC:1963}, to labelled graphs. 
\pssi
Let $\autg=\big(Q,B,\delta,\{s\},\{f\}\big)$ be a type $\reg B$   graph, where $B$ is a label set with some behaviour $\cali$. We say that $\autg$ is \emdef{non-returning}, if  $s\neq f$, there are no transitions 
    going into $s$, there are no transitions 
    coming out of $f$, and there is at most one transition between any two states of $\autg$. For any states $p,r\in Q$, let $B_{p,r}=\{\beta\mid (p,\beta,r)\in\delta\}$, and let $\rer_{p,r}=\beta_1+\cdots+\beta_m$, where the $\beta_i$'s are the elements of $B_{p,r}$, if $B_{p,r}\neq\eset$. We define next the labelled graph $\auth$ that \emdef{results by eliminating} a state $q\in Q\setminus\{s,f\}$ from $\autg$. It is the type $\reg B$ graph
\begin{equation}\label{EQ:state:elim}\prlabel{EQ:state:elim}
	\auth =\big(Q\setminus\{q\},B,\delta',\{s\},\{f\}\big)
\end{equation}
	such that $\delta'$ is defined as follows. For any states $p,r\in Q\setminus\{q\}$:
\begin{itemize}
	\item If $(p,\alpha,r)\in\delta$ then $(p,\alpha,r)\in\delta'$.
	\item If $(p,\alpha_1,q),(q,\alpha_2,r)\in\delta$ then either $(p,\alpha_1(\rer_{q,q}^*)\alpha_2,r)\in\delta'$ if $B_{q,q}\neq\eset$, or $(p,\alpha_1\alpha_2,r)\in\delta'$ if $B_{q,q}=\eset$.
\end{itemize}

\begin{lemma}\label{LEM:state:elim}\prlabel{LEM:state:elim}
	Let $\autg=\big(Q,B,\delta,\{s\},\{f\}\big)$ be a non-returning  labelled graph, where $B$ is a label set with some behaviour $\cali$. If $\auth$ is the type $\reg B$   graph  that results by eliminating a state $q\in Q\setminus\{s,f\}$ from  $\autg$ then $\cali(\auth)=\cali(\autg)$.
\end{lemma}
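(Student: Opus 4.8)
The plan is to show the two inclusions $\cali(\auth)\subseteq\cali(\autg)$ and $\cali(\autg)\subseteq\cali(\auth)$ by relating accepting paths in $\auth$ to accepting paths in $\autg$, tracking how the behaviours along those paths compare. Since both $\autg$ and $\auth$ are labelled graphs over the same monoid $\mon B$, and since $\auth$ has labels that are type $\reg B$ expressions, the key tool is the compositional definition of behaviour (Definition~\ref{DEF:graph:behave} and Definition~\ref{DEF:reg:gen}): for a path $\sequ{q_{i-1},\gamma_i,q_i}_{i=1}^\ell$ the contributed set is $\cali(\gamma_1)\cdots\cali(\gamma_\ell)$, and for a compound label like $\alpha_1(\rer_{q,q}^*)\alpha_2$ we have $\cali(\alpha_1(\rer_{q,q}^*)\alpha_2)=\cali(\alpha_1)\,\cali(\rer_{q,q})^*\,\cali(\alpha_2)$, where $\cali(\rer_{q,q})=\bigcup_i\cali(\beta_i)=\bigcup\{\cali(\beta)\mid(q,\beta,q)\in\delta\}$.

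First I would prove $\cali(\auth)\subseteq\cali(\autg)$. Take an accepting path $P'$ in $\auth$ from $s$ to $f$ and an element $m$ of the product of the behaviours along $P'$. Each edge of $P'$ is either an edge $(p,\alpha,r)$ already present in $\delta$, or a ``new'' edge of the form $(p,\alpha_1(\rer_{q,q}^*)\alpha_2,r)$ or $(p,\alpha_1\alpha_2,r)$ arising from a pair $(p,\alpha_1,q),(q,\alpha_2,r)\in\delta$. For an old edge, keep it; for a new edge, expand it into the corresponding detour through $q$ in $\autg$: since $\cali(\alpha_1(\rer_{q,q}^*)\alpha_2)=\cali(\alpha_1)\bigl(\bigcup_{(q,\beta,q)\in\delta}\cali(\beta)\bigr)^*\cali(\alpha_2)$, any element of this set factors as $n_1\,(t_1\cdots t_k)\,n_2$ with $n_1\in\cali(\alpha_1)$, $n_2\in\cali(\alpha_2)$ and each $t_j\in\cali(\beta^{(j)})$ for some self-loop $(q,\beta^{(j)},q)\in\delta$; this is exactly the behaviour contributed by the $\autg$-path $p\xrightarrow{\alpha_1}q\xrightarrow{\beta^{(1)}}q\cdots q\xrightarrow{\beta^{(k)}}q\xrightarrow{\alpha_2}r$ (and the $k=0$ case handles the $\alpha_1\alpha_2$ edge). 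Concatenating these local $\autg$-paths in the order dictated by $P'$ yields an accepting path of $\autg$ whose behaviour set contains $m$, so $m\in\cali(\autg)$.

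For the converse $\cali(\autg)\subseteq\cali(\auth)$, I would take an accepting $\autg$-path $P$ and an element $m$ of its behaviour product, and contract all visits to $q$. Here the non-returning hypothesis matters: $s\neq f$, $s\neq q\neq f$, nothing enters $s$ and nothing leaves $f$, and there is at most one edge between any ordered pair of states — so $q\neq s,f$ guarantees that along $P$ every maximal block of the form ``enter $q$, loop at $q$ some number of times, leave $q$'' is flanked by an incoming edge $(p,\alpha_1,q)$ and an outgoing edge $(q,\alpha_2,r)$ with $p,r\neq q$, and the ``at most one transition'' condition makes $\rer_{q,q}$, $\rer_{p,q}$, $\rer_{q,r}$, $\rer_{p,r}$ well-defined single labels rather than sums over parallel edges. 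Replace each such block by the single $\auth$-edge $(p,\alpha_1(\rer_{q,q}^*)\alpha_2,r)$ (or $(p,\alpha_1\alpha_2,r)$ when there were no loops), and keep every edge of $P$ not touching $q$; the resulting sequence is an accepting $\auth$-path, and by the same behaviour identity its contributed set contains $m$. I expect the main obstacle to be the bookkeeping in this direction: carefully decomposing $P$ into $q$-free edges and $q$-blocks, checking that consecutive blocks/edges glue up at legitimate states (using non-returning to rule out $q$ as an endpoint of the whole path and to ensure the flanking edges exist in $\delta$ with the right labels), and verifying that the monoid product telescopes correctly so the same $m$ is obtained. The forward direction is essentially routine expansion; the care goes into the contraction and into confirming that the edges $\auth$ is built from are precisely the ones prescribed in~\eqref{EQ:state:elim}.
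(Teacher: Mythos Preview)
Your proposal is correct and follows essentially the same approach as the paper's proof: expand each new $\auth$-edge into a detour through $q$ for $\cali(\auth)\subseteq\cali(\autg)$, and contract maximal $q$-blocks of an accepting $\autg$-path into single $\auth$-edges for the reverse inclusion. The only cosmetic differences are that the paper treats the two inclusions in the opposite order and, using the ``at most one transition between any two states'' clause more directly, observes that the self-loop label at $q$ is unique (so your $\beta^{(j)}$ are all the same), which slightly streamlines the bookkeeping.
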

\begin{proof} The main steps of the proof are analogous to those used in traditional proofs for the case of NFAs \cite{Wood:theory:of:comput}.
First, let $m\in\cali(\autg)$. Then, there is an accepting path $P=\sequ{q_{i-1},\beta_i,q_i}_{i=1}^\ell$ of $\autg$ such that $\ell\ge1$, $m=m_1\cdots m_\ell$ and each $m_i\in\cali(\beta_i)$. By a $q$-block of transitions in $P$ we mean a path $R=\sequ{q_{j-1},\beta_j,q_j}_{j=b}^{b+r}$ such that 
\pssi $r\ge1$, $q_{b-1}\neq q$, $q_b=\cdots=q_{b+r-1}=q$, $q_{b+r}\neq q$.
\pssn As  $\autg$ has at most one transition between any two states, we have that, if $r\ge2$, then $\beta_{b+1}=\cdots\beta_{b+r-1}=\rer_{q,q}$. Then,  
$$e=(q_{b-1},\beta_b(\rer_{q,q})^*\beta_{b+r},q_{b+r})\quad\text{or}\quad e=(q_{b-1},\beta_b\beta_{b+r},q_{b+r})$$
is a transition in $\delta'$---see~\eqref{EQ:state:elim}. Moreover, $m_b\cdots m_{b+r}\in\cali(\beta_b)\cali(\rer_{q,q}^*)\cali(\beta_{b+r})$. If we replace in $P$ the $q$-block $R$ with the transition $e$, and we repeat this with all $q$-blocks in $P$, then we get an accepting path of $\auth$ such that $m\in\cali(\auth)$.
\pnsi Conversely, let $m\in\cali(\auth)$. Then there is an accepting path $P'=\sequ{q_{i-1},\beta_i,q_i}_{i=1}^\ell$ of $\auth$ such that $\ell\ge1$, $m=m_1\cdots m_\ell$ and each $m_i\in\cali(\beta_i)$. For each $i\in\{1,\ldots,\ell\}$, we define a path $P_i$ of $\autg$ as follows. If $(q_{i-1},\beta_i,q_i)\in\delta$ then  $P_i=\sequ{q_{i-1},\beta_i,q_i}$. Else, there are $(q_{i-1},\alpha_1,q),(q,\alpha_2,q_i)\in\delta$ such that
$$\beta_i=\alpha_1(\rer_{q,q})^*\alpha_2\quad\text{or}\quad\beta_i=\alpha_1\alpha_2.$$
As $m_i\in\cali(\beta_i)$ and, if defined, $\cali(\rer_{q,q})^*=\cup_{r=0}^\infty\cali(\rer_{q,q})^r$, we have that there is $r\ge0$ such that
$$m_i=k_i'k_{i,1}\cdots k_{i,r}k_i'',\>\>k_i'\in\cali(\alpha_1),\>\>k_i''\in\cali(\alpha_2),\>\>k_{i,1},\ldots,k_{i,r}\in\cali(\rer_{q,q}).$$
Then, the path $P_i$ is
$$\sequ{(q_{i-1},\alpha_1,q),(q,\rer_{q,q},q),\ldots,(q,\rer_{q,q},q),(q,\alpha_2,q_i)},$$
which has $r$ repetitions of $(q,\rer_{q,q},q)$. Now define the sequence $P$ to be the concatenation of all paths $P_i$. This sequence is an accepting path of $\autg$ and this implies that $m\in\cali(\autg)$.
\end{proof}

As a type $B$  graph $\autg$ is also a type $\reg B$  graph, and as $\autg$ can be modified to be non-returning, we can apply the above lemma repeatedly until  we get a type $\reg B$ graph $\auth$ with set of states $\{s,f\}$ such that $\cali(\autg)=\cali(\auth)$. Then, we have that $\cali(\auth)=\cali(\rer_{s,f})$. Thus, we have the following consequence of Lemma~\ref{LEM:state:elim}.

\begin{corollary}\label{COR:state:elim}\prlabel{COR:state:elim}
	Let $B$ be a label set with behaviour $\cali$. For each type $B$  graph $\autg$ there is a type $B$ regular expression $\rer$ such that $\cali(\autg)=\cali(\rer)$.
\end{corollary}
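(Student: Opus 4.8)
The plan is to derive Corollary~\ref{COR:state:elim} as an immediate consequence of the machinery already assembled, in particular Lemma~\ref{LEM:state:elim}, exactly as the paragraph preceding the statement sketches. Given a type $B$ graph $\autg=(Q,B,\delta,I,F)$, the first step is to reduce to the case of a \emph{non-returning} graph with a single start state and a single final state. To do this I would add a new start state $s$ with $\srep{\nel B}$-transitions into every state of $I$, add a new final state $f$ with $\srep{\nel B}$-transitions out of every state of $F$, and then, for each ordered pair of states $(p,r)$, replace the (finitely many) parallel transitions $(p,\beta_1,r),\dots,(p,\beta_m,r)$ by the single transition $(p,\beta_1+\cdots+\beta_m,r)$, which is a legitimate type $\reg B$ label. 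One must check that these modifications preserve the behaviour: the bookkeeping for the new $s,f$ and the $\srep{\nel B}$ self-free detours is the same argument used in Lemma~\ref{LEM:expanded} for $\autg^{\ew}$ (the $\srep{\nel B}$ transitions contribute only $\ew_{\mon B}$), and the merging of parallel edges preserves behaviour because $\cali(\beta_1+\cdots+\beta_m)=\cali(\beta_1)\cup\cdots\cup\cali(\beta_m)$ by Definition~\ref{DEF:reg:gen}. The result is a non-returning type $\reg B$ graph $\autg_0$ with $\cali(\autg_0)=\cali(\autg)$ and states $\{s,f\}\cup (Q\setminus\{s,f\})$.

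The second step is the induction: apply Lemma~\ref{LEM:state:elim} repeatedly, eliminating one state of $Q_0\setminus\{s,f\}$ at a time. Each application produces a type $\reg B$ graph with one fewer state and the same behaviour, so after finitely many steps we reach a type $\reg B$ graph $\auth$ whose state set is exactly $\{s,f\}$ with $\cali(\auth)=\cali(\autg)$. A small point to address here is that the intermediate graphs are type $\reg B$ graphs rather than type $B$ graphs, and that eliminating a state may create multiple parallel transitions between two surviving states; one handles this by again merging parallel edges into a single $+$-labelled transition before (or as part of) the next elimination step, so that the non-returning hypothesis of Lemma~\ref{LEM:state:elim} (at most one transition between any two states) stays satisfied throughout. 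Since $s\neq f$ and $s$ has no incoming and $f$ no outgoing transitions, the only possible transition in $\auth$ is one of the form $(s,\rer,f)$ for a single type $\reg B$ regular expression $\rer$ (or none, in which case $\cali(\autg)=\eset$ and we take $\rer=\erel$).

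The final step is to read off the regular expression. By Definition~\ref{DEF:graph:behave}, the behaviour of $\auth$ is the union over accepting paths of the products of the label behaviours; since the only accepting paths are the single transition $(s,\rer,f)$ (giving $\cali(\rer)$) and, if $s$ happened to be final, the empty path — but in our construction $s\neq f$ and $f$ is the unique final state, so $s\notin F$ — we get $\cali(\auth)=\cali(\rer)$ where $\rer=\rer_{s,f}$ is the label on the surviving edge, or $\rer=\erel$ if no such edge remains. Combining the three steps yields a type $B$ regular expression $\rer$ with $\cali(\autg)=\cali(\rer)$, as required. The main obstacle — really the only nontrivial point, since Lemma~\ref{LEM:state:elim} does the heavy lifting — is managing the passage between type $B$ and type $\reg B$ graphs cleanly and keeping the non-returning/at-most-one-transition invariant intact across successive eliminations; everything else is the standard state-elimination bookkeeping, now justified at the monoid level by the earlier lemmas.
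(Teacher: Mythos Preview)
Your proposal is correct and follows essentially the same approach as the paper: the paper's argument (given in the paragraph immediately preceding the corollary) is precisely to pass to a non-returning type $\reg B$ graph, iterate Lemma~\ref{LEM:state:elim} until only $\{s,f\}$ remain, and read off $\rer_{s,f}$. You simply spell out more of the bookkeeping than the paper does---in particular the merging of parallel edges to maintain the ``at most one transition'' hypothesis and the $\erel$ fallback when no $s\to f$ edge survives---but the route is the same.
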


\section{Partial Derivatives of Regular Expressions}\mylabel{SEC:derivatives}
\if\DRAFT1\textbf{\color{blue}
\pssi This section is incomplete and will be modified.
\pssi REMINDER: strings of set specs and pairing specs
}\fi 

Derivatives based methods for the manipulation of regular expressions have been widely studied~\cite{brzozowski64:_deriv_of_regul_expres,antimirov96:_partial_deriv_regul_expres_finit_autom_const,b.g.mirkin66:_algor_for_const_base_in,lombardy05:_deriv,broda11:_averag_state_compl_of_partial_deriv_autom,champarnaud02:_canon_deriv_partial_deriv_and,caron11:_partial_deriv_of_exten_regul_expres}. In recent years, partial derivative automata were defined and characterised for several kinds of expressions. Not only they are in general more succinct than other equivalent constructions but also for several operators they are easily defined (e.g. for intersection~\cite{Bastos:2017aa} or tuples~\cite{demaille16:_deriv_term_autom_multit_ration_expres}).
The partial derivative automaton of a regular expression  over $\Sigma^\star$ was introduced independently by Mirkin~\cite{b.g.mirkin66:_algor_for_const_base_in} and
Antimirov~\cite{antimirov96:_partial_deriv_regul_expres_finit_autom_const}.  
Champarnaud and Ziadi~\cite{champarnaud01:_from_mirkin_prebas_to_antim} proved that the two formulations are equivalent. Lombardy and Sakarovitch~\cite{lombardy05:_deriv} generalised these constructions to  weighted regular expressions, and recently Demaille~\cite{demaille16:_deriv_term_autom_multit_ration_expres} defined  derivative automata for  multitape weighted regular expressions.

Here we define the partial derivative automaton for regular expressions with set specifications. First however, we start with  more general regular expressions over  label sets.

Given a finite set  $S$ of expressions we define its behaviour as  $\cali(S)=\bigcup_{\res\in S}\cali(\res)$.
We say that  two regular expressions $\rer, \res$ of a type $B$ are  \emdef{equivalent}, $\rer\sim\res$,  if $\cali(\rer)=\cali(\res)$. Let the \emdef{set of label sets} of an expression $\rer$  be the set $\SoS(\rer)=\{\;\beta \mid \beta \in B \text{ and } \beta \text{ occurs in } \rer \;\}$. The \emdef{size} of  an expressions $\rer$ is   $\szw{\rer}=|\SoS(\rer)|$; it can be inductively   defined as follows:
 \begin{eqnarray*}
 \szw{\erel}&=&0\\
  \szw{\underline{\ew_{\mon B}}}&=&0\\
 	\szw{\beta}&=&1\\
 	\szw{\rer+\res}&=&\szw{\rer}+\szw{\res}\\
 	\szw{\rer\res}&=&\szw{\rer}+\szw{\res}\\
 	\szw{\rer^*}&=&\szw{\rer}.
 \end{eqnarray*}

Given a monoid $M$ the   \emdef{(left) quotient}  of  $P\subseteq M$ by an element $m\in M$ is defined by $\quo{m}{P}=\{\;m'\in M\mid mm'\in P\;\}$. This can be extended additively to the quotient of $P$ by another subset $Q$ of $M$, $\quo{Q}{P}$.

From now on we assume that $\>\mon B\>$ is either $\Sigma^*$ or $\Sigma^*\times \Gamma^*$, and that $\cali(\beta)$ is a subset of generators of $M$, for all $\beta\in B$.

 We define the \emdef{constant part} $\const: \reg B \to \{\underline{\ew_{\mon B}},\erel\}$ by
$\const(\rer)=\underline{\ew_{\mon B}}$ if $\ew_{\mon B}\in \cali(\rer)$, and
$\const(\rer)=\erel$ otherwise. This function is extended to sets of expressions by $\const(S)=\underline{\ew_{\mon B}}$ if and only if exists $\rer\in S$ such that $\const(\rer)=\underline{\ew_{\mon B}}$.

 The  \emdef{linear form}  of a regular expression   $\rer$,  $\lf(\rer)$,  is given by the following inductive definition:
\begin{eqnarray*}
 \lf(\erel) & =& \lf(\underline{\ew_{\mon B}}) = \emptyset,\\
\lf(\beta) & = & \{(\beta,\underline{\ew_{\mon B}})\},\\
\lf(\regexp\disj \regexp') &=&\lf(\regexp)\cup\lf(\regexp'),\\
\lf(\regexp\regexp') &=&
\begin{cases}
	\lf(\regexp)\regexp'\cup\lf(\regexp')&\text{if } \const(\regexp)=\underline{\ew_{\mon B}},\\
	\lf(\regexp)\regexp'&\text{otherwise},
\end{cases}\\
\lf(\regexp^*) &=&\lf(\regexp)\regexp^*,
\end{eqnarray*}
\noindent  where for any $S\subseteq B\times \reg B$, we define $S\underline{\ew_{\mon B}}=\underline{\ew_{\mon B}}S=S$, and $S\res=\{\,(\beta,\rer\res)\mid (\beta,\rer)\in S\,\}$ if $\res\neq\underline{\ew_{\mon B}}$ (and analogously for $\res S$).
Let $\cali(\lf(\rer))=\bigcup_{(\beta,\res)\in\lf(\rer)}\cali(\beta)\cali(\res)$.
\begin{lemma}
		For all $\rer\in \reg B$, $\rer\sim\const(\rer)\;\cup \;\lf(\rer).$
\end{lemma}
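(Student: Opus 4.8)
The plan is to prove the identity $\cali(\rer)=\cali(\const(\rer))\cup\cali(\lf(\rer))$ by structural induction on $\rer$, mirroring the inductive definitions of $\const$, $\lf$, and $\cali$ given above. Recall that by definition $\cali(\const(\rer)\cup\lf(\rer)) = \cali(\const(\rer)) \cup \cali(\lf(\rer))$, where $\cali(\const(\rer))$ is either $\{\ew_{\mon B}\}$ or $\eset$ depending on whether $\ew_{\mon B}\in\cali(\rer)$, and $\cali(\lf(\rer))=\bigcup_{(\beta,\res)\in\lf(\rer)}\cali(\beta)\cali(\res)$. The base cases are immediate: for $\rer=\erel$ we have $\const(\erel)=\erel$ and $\lf(\erel)=\eset$, so the right-hand side is $\eset=\cali(\erel)$; for $\rer=\underline{\ew_{\mon B}}$ both sides equal $\{\ew_{\mon B}\}$; and for $\rer=\beta\in B$, since $\cali(\beta)$ consists of generators of $M$ it does not contain $\ew_{\mon B}$, so $\const(\beta)=\erel$, while $\lf(\beta)=\{(\beta,\underline{\ew_{\mon B}})\}$ gives $\cali(\beta)\cali(\underline{\ew_{\mon B}})=\cali(\beta)$, matching.

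For the inductive step I would treat the three operators in turn, using the induction hypothesis $\cali(\rer)=\cali(\const(\rer))\cup\cali(\lf(\rer))$ and likewise for $\rer'$. The union case is straightforward set algebra: $\cali(\rer+\rer')=\cali(\rer)\cup\cali(\rer')$, and $\lf(\rer+\rer')=\lf(\rer)\cup\lf(\rer')$, $\const(\rer+\rer')=\underline{\ew_{\mon B}}$ iff $\const(\rer)$ or $\const(\rer')$ is, so everything distributes. The Kleene star case uses $\cali(\rer^*)=\cali(\rer)^*=\{\ew_{\mon B}\}\cup\cali(\rer)\cali(\rer)^*$; since $\const(\rer^*)=\underline{\ew_{\mon B}}$ always, I need $\cali(\lf(\rer^*))=\cali(\lf(\rer)\rer^*)$ to equal $\cali(\rer)\cali(\rer)^*$ — or at least for the two sides of the whole identity to agree. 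Expanding, $\cali(\lf(\rer)\rer^*)=\bigcup_{(\beta,\res)\in\lf(\rer)}\cali(\beta)\cali(\res\rer^*)=\big(\bigcup_{(\beta,\res)\in\lf(\rer)}\cali(\beta)\cali(\res)\big)\cali(\rer^*)=\cali(\lf(\rer))\cali(\rer)^*$; combining with the induction hypothesis that $\cali(\rer)=\cali(\const(\rer))\cup\cali(\lf(\rer))$, and noting $\cali(\const(\rer))\cali(\rer)^*\subseteq\cali(\rer)^*$, one checks both inclusions. I would also need the small auxiliary fact that $\cali(\res\rer')=\cali(\res)\cali(\rer')$, which follows directly from the $\cali(\rer\cdot\res)$ clause of Definition~\ref{DEF:reg:gen} (with the convention $S\underline{\ew_{\mon B}}=S$ handled separately but compatibly, since $\cali(\underline{\ew_{\mon B}})=\{\ew_{\mon B}\}$ acts as identity).

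The catenation case $\rer\rer'$ is the main obstacle and requires a case split on whether $\const(\rer)=\underline{\ew_{\mon B}}$. We have $\cali(\rer\rer')=\cali(\rer)\cali(\rer')$. Using the induction hypotheses, $\cali(\rer)=\cali(\const(\rer))\cup\cali(\lf(\rer))$ and $\cali(\rer')=\cali(\const(\rer'))\cup\cali(\lf(\rer'))$, so the product expands into four terms. If $\const(\rer)=\erel$, then $\cali(\rer)=\cali(\lf(\rer))$ contains no $\ew_{\mon B}$ (this needs a sub-argument: $\ew_{\mon B}\notin\cali(\rer)$ precisely when $\const(\rer)=\erel$, from the definition of $\const$), so $\ew_{\mon B}\notin\cali(\rer\rer')$, giving $\const(\rer\rer')=\erel$; then $\cali(\rer)\cali(\rer')=\cali(\lf(\rer))\cali(\rer')=\cali(\lf(\rer)\rer')=\cali(\lf(\rer\rer'))$ by the auxiliary catenation fact above and the matching clause of $\lf$. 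If $\const(\rer)=\underline{\ew_{\mon B}}$, then $\ew_{\mon B}\in\cali(\rer)$, so $\cali(\rer)\cali(\rer')\supseteq\cali(\rer')$, and $\ew_{\mon B}\in\cali(\rer\rer')$ iff $\ew_{\mon B}\in\cali(\rer')$ iff $\const(\rer')=\underline{\ew_{\mon B}}$ iff $\const(\rer\rer')=\underline{\ew_{\mon B}}$. Here $\lf(\rer\rer')=\lf(\rer)\rer'\cup\lf(\rer')$, so $\cali(\lf(\rer\rer'))=\cali(\lf(\rer))\cali(\rer')\cup\cali(\lf(\rer'))$; adding $\cali(\const(\rer\rer'))=\cali(\const(\rer'))$ and using the induction hypothesis for $\rer'$ to reassemble $\cali(\rer')=\cali(\const(\rer'))\cup\cali(\lf(\rer'))$, I get $\cali(\const(\rer\rer'))\cup\cali(\lf(\rer\rer'))=\cali(\lf(\rer))\cali(\rer')\cup\cali(\rer')=(\cali(\lf(\rer))\cup\{\ew_{\mon B}\})\cali(\rer')$. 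The final step is to verify this equals $\cali(\rer)\cali(\rer')$, i.e.\ that $\cali(\lf(\rer))\cup\{\ew_{\mon B}\}$ and $\cali(\rer)$ have the same product with $\cali(\rer')$ — which holds because $\cali(\rer)=\{\ew_{\mon B}\}\cup\cali(\lf(\rer))$ in this case. Throughout, the one technical point to state carefully as a preliminary remark is the equivalence ``$\const(\rer)=\underline{\ew_{\mon B}}$ iff $\ew_{\mon B}\in\cali(\rer)$,'' which is the definition of $\const$ but is used repeatedly; everything else is bookkeeping.
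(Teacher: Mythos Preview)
Your proposal is correct and takes exactly the approach the paper indicates: the paper's proof reads in its entirety ``Proof by induction on $\rer$,'' and you have simply filled in the details of that structural induction. One small point worth making explicit (you implicitly rely on it in the catenation case) is that $\ew_{\mon B}\in\cali(\rer)\cali(\rer')$ iff $\ew_{\mon B}\in\cali(\rer)$ and $\ew_{\mon B}\in\cali(\rer')$; this uses the paper's standing assumption that $\mon B$ is a free monoid or a product of free monoids, where the neutral element factors uniquely.
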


\begin{proof}
Proof by induction on $\rer$. \qed\end{proof}

For a regular expression $\rer$ and $\beta\in \SoS(\rer)$, the set of partial derivatives of $\rer$ w.r.t. $\beta$ is
$$\partial_\beta(\regexp)=\{\;\res \mid (\beta,\res)\in \lf(\rer)\;\}.$$ 

It is clear that we can iteratively compute the linear form of an expression  $\res\in \partial_\beta(\regexp)$, for $\beta \in \SoS(\rer)$. The set containing $\rer$ and  of all the resulting expressions is called   the set of partial derivatives of $\rer \in \reg B$, $\PD(\rer)$.

The \emdef{partial derivative graph} of  $\rer$ is
$$\apd(\rer)=\langle \PD(\rer),\Sigma,\delta_{\APD},\rer,F\rangle,$$
where
$F=\{\,\regexp_1\in\PD(\regexp)\mid \const(\regexp_1)=\underline{\ew_{\mon B}}\,\}$, and $\delta_{\APD}=\{\,(\regexp_1,\beta,\regexp_2) \mid \regexp_1\in\PD(\regexp)\,\land\, \beta\in\SoS[\rer] \land \regexp_2\in\partial_\beta(\regexp_1)\,\}$.

Whether this graph is finite and whether it is equivalent to $\rer$ (has the same behaviour, that is) depends on the behaviour $\cali$ of the label set $B$.

\subsection{Regular Expressions with Set Specifications}
\label{sec:resset}

Here we consider regular expressions of type $\sspec[\Sigma]$ which fixed behaviours are languages over alphabet $\Sigma$. Given $L_1, L_2\subseteq \Sigma^\star$ and $\letter \in \Sigma$, the quotient of a language  w.r.t $\letter$ satisfies the  following relations 
\begin{eqnarray*}
	\quo{\letter}{(L_1\cup L_2)}&=&\quo{\letter}{L_1}\cup \quo{\letter}{L_2},\\
	\quo{\letter}{(L_1\ L_2)}&=&\begin{cases}
	(\quo{\letter}{L_1})L_2&\text{ if} \varepsilon\notin L_1,\\
	(\quo{\letter}{L_1})L_2\cup \quo{\letter}{L_2}& \text{ if} \varepsilon\in L_1,
	\end{cases}\\
\quo{\letter}{L_1^\star}&=&(\quo{\letter}{L_1})L_1^\star.
\end{eqnarray*}

Quotients can be extended to words and languages: $\quo{\varepsilon}{L}=L$, $\quo{( w\letter)}{L}=\quo{\letter}{(\quo{w}{L})}$ and $\quo{L_1}{L}= \bigcup_{w\in L_1}\quo{w}{L}$.
%
If $L_1\subseteq L_2\subseteq \Sigma^\star$ then 
$\quo{L_1}{L}\subseteq \quo{L_2}{L}$ and $\quo{L}{L_1}\subseteq \quo{L}{L_2}$.

Given two set specifications  $\speca, \specb\in\sspec[\Sigma]\setminus\{\ews\}$  
we extend the notion of partial derivative to the set of partial derivatives of $\speca$ w.r.t $\specb$ with possible $\speca\not=\specb$, by
\begin{eqnarray*}
\partial_\speca(\specb) & = &\begin{cases}
\{\ews\} & \text{if }\speca\cap \specb\not=\bot,\\
\emptyset & \text{otherwise}.
\end{cases}
\end{eqnarray*}

Note that this definition is coherent with the definition of partial derivatives of an expression w.r.t. a label given before.
The set of partial derivatives of $\rer$ w.r.t.~a word $x \in
(B\setminus\{\underline{\ew_{\mon B}}\})^*$ is inductively defined by $\partial_\varepsilon(\rer)=\{
\rer\}$ and $ \partial_{x \beta  }(\rer) = \partial_\beta(\partial_x(\rer))$,
\noindent where, given a set $S \subseteq \reg B$, $\partial_\beta(S)= \bigcup_{\regexp
  \in S} \partial_{\beta}(\rer)$. Moreover one has
  $\lang(\partial_x(\rer))=\bigcup_{\rer_1\in\partial_x(\rer)}\lang(\rer_1)$.
 The following lemmas 
 ensure that the partial derivative automaton has the same behaviour of the correspondent regular expression.
 These results generalize known results for ordinary regular expressions~\cite{antimirov96:_partial_deriv_regul_expres_finit_autom_const,b.g.mirkin66:_algor_for_const_base_in}.

 \begin{lemma}\label{lem:quospec}
For two set specifications $\speca,\specb\in \sspec[\Sigma]$, 
$\quo{\lang(\speca)}
{\lang(\specb)}=\{\varepsilon\}$ if  $\speca\cap \specb\not= \bot$, and 
$\quo{\lang(\speca)}{\lang(\specb)}=\emptyset$ otherwise.	
\end{lemma}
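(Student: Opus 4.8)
The claim to prove is Lemma~\ref{lem:quospec}: for set specs $\speca,\specb\in\sspec[\Sigma]$, the quotient $\quo{\lof\speca}{\lof\specb}$ equals $\{\ew\}$ when $\speca\cap\specb\neq\undef$ and equals $\emptyset$ otherwise. The first observation is that $\lof\speca$ and $\lof\specb$ are each subsets of $\Sigma\cup\{\ew\}$ (Def.~\ref{DEF:sset:lang}), so every word in either language has length at most $1$; consequently a quotient $\quo{\lof\speca}{\lof\specb}$ can only ever contain words obtained by cancelling a length-$\le 1$ prefix from a length-$\le 1$ word, i.e.\ it is a subset of $\{\ew\}$ (and possibly a singleton $\{a\}$ only if $\ew\in\lof\speca$, which I'll need to rule out or handle). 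This immediately reduces the lemma to a single equivalence: $\ew\in\quo{\lof\speca}{\lof\specb}$ if and only if $\speca\cap\specb\neq\undef$.

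The plan is to split on whether $\ew$ belongs to the two languages. By Lemma~\ref{LEM:ssets:lang}(1), $\lof\speca\neq\emptyset$ always, and $\lof\speca$ contains $\ew$ exactly when $\speca=\ews$; the same holds for $\specb$. I would first dispatch the cases involving $\ews$ directly from Def.~\ref{DEF:ssets}: if $\speca=\specb=\ews$ then $\lof\speca=\lof\specb=\{\ew\}$, so the quotient is $\{\ew\}$ and indeed $\ews\cap\ews=\ews\neq\undef$; if exactly one of $\speca,\specb$ equals $\ews$ then the quotient is $\emptyset$ (cancelling a letter from $\{\ew\}$, or cancelling $\ew$ from a set of single letters gives that set, which contains no $\ew$ — wait, the second subcase needs care: if $\specb=\ews$ and $\speca\neq\ews$ then $\quo{\lof\speca}{\{\ew\}}$ would be $\lof\speca$ itself, which is nonempty and contains no $\ew$; but $\{\ew\}^{-1}L = L$, so this is $\lof\speca\neq\{\ew\}$), and correspondingly $\ews\cap\speca=\undef$. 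Actually here I realize the quotient is taken as $\quo{\lof\speca}{\lof\specb}=\{w\mid \lof\speca\cdot w\subseteq\lof\specb\}$ in the paper's notation (left quotient of $\lof\specb$ by $\lof\speca$), so I should read it as $\quo{\lof\speca}{\lof\specb}=(\lof\speca)^{-1}\lof\specb=\{w : \exists u\in\lof\speca,\ uw\in\lof\specb\}$; with this reading the $\specb=\ews$, $\speca\neq\ews$ subcase gives $\emptyset$ (no $u\in\lof\speca\subseteq\Sigma$ can be a prefix of $\ew$), matching $\speca\cap\ews=\undef$. These are all finite checks against the definition.

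The main case is $\speca,\specb\neq\ews$, so both languages are nonempty subsets of $\Sigma$ (Lemma~\ref{LEM:ssets:lang}(1), third statement) and all their words have length exactly $1$. Then $(\lof\speca)^{-1}\lof\specb=\{\ew\}$ if $\lof\speca\cap\lof\specb\neq\emptyset$ (take any common letter $a$: $a\in\lof\speca$ and $a=a\cdot\ew\in\lof\specb$) and $=\emptyset$ otherwise (a length-$1$ word of $\lof\specb$ can only be written $u\ew$ with $u\in\lof\specb$, so we need $u\in\lof\speca\cap\lof\specb$). So the lemma reduces to: $\lof\speca\cap\lof\specb\neq\emptyset \iff \speca\cap\specb\neq\undef$. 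This is exactly the content of Lemma~\ref{LEM:ssets:lang}(2) combined with Lemma~\ref{LEM:ssets:lang}(1): if $\speca\cap\specb\neq\undef$ then $\lof{\speca\cap\specb}=\lof\speca\cap\lof\specb$ and the left side is nonempty by (1), so the intersection is nonempty; conversely, inspecting the partial operation $\cap$ in Def.~\ref{DEF:ssets}, it is defined to be $\undef$ precisely in the cases $\al u\cap\al v=\emptyset$ (for $\sone u\cap\sone v$) and $\al u\setminus\al v=\emptyset$ (for $\sone u\cap\snone v$ and symmetrically) — and in each such case one checks directly that $\lof\speca\cap\lof\specb=\emptyset$ (e.g.\ $\lof{\sone u}\cap\lof{\sone v}=\al u\cap\al v$, $\lof{\sone u}\cap\lof{\snone v}=\al u\setminus\al v$), while in all other cases $\cap$ is defined. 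I do not anticipate a genuine obstacle here; the only delicate point is being careful about the direction of the quotient notation and making sure the "length $\le 1$" reduction is stated cleanly before the case analysis, since that is what collapses the quotient from a potentially infinite object down to a subset of $\{\ew\}$.
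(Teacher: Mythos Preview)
Your approach is correct and matches the paper's: both reduce to the observation that for $L\subseteq\Sigma$ and $x\in\Sigma$ one has $x^{-1}L=\{\ew\}$ if $x\in L$ and $\emptyset$ otherwise, then extend additively over $\lof\speca$ and use the equivalence $\lof\speca\cap\lof\specb\neq\emptyset\iff\speca\cap\specb\neq\undef$ coming from Lemma~\ref{LEM:ssets:lang} and Def.~\ref{DEF:ssets}. The paper's proof is that single sentence; your extra $\ews$-case analysis goes beyond what the paper does (and the subcase you left dangling, $\speca=\ews$ with $\specb\neq\ews$, is actually outside the lemma's intended scope---the surrounding text works in $\sspec[\Sigma]\setminus\{\ews\}$, and indeed that subcase would give $\lof\specb$ rather than $\emptyset$).
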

\begin{proof} Given $L\subseteq \Sigma$ and $\letter \in \Sigma$ one has $\quo{\letter}{L}=\{\varepsilon\}$ if $\letter \in L$ and  $\quo{\letter}{L}=\emptyset$,  otherwise. Thus, the result follows.\qed
\end{proof}

 For instance, if $\sone w\cap \snone u\not=\bot$ then $$\quo{\lang(\sone w)}{\lang(\snone u)}=\bigcup_{\letter \in \al w}\quo{\letter}{(\Sigma\setminus \al u)}=\{\varepsilon\}.$$
\begin{lemma}\label{lem:lspecpartial}
For all $\regexp\in \REspec$ and $\speca\in \sspec[\Sigma]$, 
$\quo{\lang(\speca)}{\lang(\regexp)}=\lang(\partial_\speca(\regexp))$.
\end{lemma}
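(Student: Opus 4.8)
The plan is to derive the identity directly from the linear-form decomposition of the preceding lemma, $\rer\sim\const(\rer)\cup\lf(\rer)$, rather than by a fresh structural induction. Let $\speca$ be the set spec we differentiate by. I may assume $\speca\neq\ews$: for $\speca=\ews$ one has $\lang(\speca)=\{\ew\}$, and both sides reduce to $\lang(\rer)$ once $\partial_\ews(\rer)$ is read as $\{\rer\}$. Thus $\lang(\speca)\subseteq\Sigma$; likewise every set spec $\specb$ occurring as a first component of a pair in $\lf(\rer)$ is different from $\ews$, so $\lang(\specb)\subseteq\Sigma$ and $\ew$ belongs to none of these languages.

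First I would expand, using the linear-form lemma and the definition $\lang(\lf(\rer))=\bigcup_{(\specb,\res)\in\lf(\rer)}\lang(\specb)\lang(\res)$,
\[
\lang(\rer)=\lang(\const(\rer))\;\cup\;\bigcup_{(\specb,\res)\in\lf(\rer)}\lang(\specb)\,\lang(\res),
\]
then quotient by $\lang(\speca)$ and distribute the quotient additively over the union. The constant summand drops out, since $\const(\rer)\in\{\ews,\erel\}$ forces $\lang(\const(\rer))\in\{\{\ew\},\eset\}$ and $\lang(\speca)\subseteq\Sigma$ contains no empty word, so $\quo{\lang(\speca)}{\lang(\const(\rer))}=\eset$. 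For each remaining summand, $\ew\notin\lang(\specb)$, so the concatenation rule for quotients — applied letterwise over $\lang(\speca)$ and re-unioned — gives $\quo{\lang(\speca)}{\bigl(\lang(\specb)\,\lang(\res)\bigr)}=\bigl(\quo{\lang(\speca)}{\lang(\specb)}\bigr)\lang(\res)$, and Lemma~\ref{lem:quospec} makes the first factor equal to $\{\ew\}$ when $\speca\cap\specb\neq\bot$ and $\eset$ otherwise. Collecting the surviving summands I get
\[
\quo{\lang(\speca)}{\lang(\rer)}=\bigcup\{\,\lang(\res)\mid(\specb,\res)\in\lf(\rer),\ \speca\cap\specb\neq\bot\,\}=\lang\bigl(\partial_\speca(\rer)\bigr),
\]
the last equality being just the definition of $\partial_\speca(\rer)$ — the second components $\res$ of those pairs $(\specb,\res)\in\lf(\rer)$ with $\partial_\speca(\specb)=\{\ews\}$, i.e.\ $\speca\cap\specb\neq\bot$ — together with $\lang(S)=\bigcup_{\res\in S}\lang(\res)$.

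I do not expect a real obstacle here, only two bookkeeping points. The first is that passing from the single-letter quotient rules for concatenation and star (recalled at the start of the subsection) to quotients by the whole set $\lang(\speca)\subseteq\Sigma$ is legitimate: the case split ``$\ew\in L_1$ versus $\ew\notin L_1$'' in those rules does not depend on the dividing letter, hence commutes with $\bigcup_{\letter\in\lang(\speca)}$. The second is the observation that no $\specb$ arising in $\lf(\rer)$ equals $\ews$, which is exactly what lets us always take the ``$\ew\notin L_1$'' branch above and keeps the argument uniform. An equivalent but longer route is a direct induction on the structure of $\rer$ using those three quotient identities with Lemma~\ref{lem:quospec} as the base case; the route above simply reuses the induction already performed in the preceding lemma.
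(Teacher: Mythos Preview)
Your argument is sound and takes a genuinely different route from the paper. The paper proceeds by direct structural induction on $\rer$: it checks the base cases $\erel$, $\ews$, $\specb$ (the last via Lemma~\ref{lem:quospec}) and waves the compound cases through ``by induction as with ordinary regular expressions''. You instead leverage the already-proved decomposition $\rer\sim\const(\rer)\cup\lf(\rer)$, distribute the quotient over the resulting union, discard the constant summand, and reduce each surviving term $\lang(\specb)\lang(\res)$ to Lemma~\ref{lem:quospec} uniformly. The payoff of your route is that it avoids rerunning an induction whose shape is identical to the one behind the linear-form lemma; the paper's route is more self-contained but essentially repeats that induction.

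One point to tighten: your final ``by definition'' step deserves a word of justification. The paper literally defines $\partial_\beta(\rer)=\{\res\mid(\beta,\res)\in\lf(\rer)\}$ with an \emph{exact} match on the first component, and the set-spec subsection only explicitly extends the atomic case $\partial_\speca(\specb)$ to allow $\speca\neq\specb$. The identity you use,
\[
\partial_\speca(\rer)=\{\res\mid(\specb,\res)\in\lf(\rer),\ \speca\cap\specb\neq\bot\},
\]
is correct, but it is a consequence (a one-line induction along the linear-form clauses, with the extended base case plugged in) rather than the paper's stated definition. It would be cleaner either to state that tiny induction, or to remark explicitly that you are reading the linear-form characterisation of $\partial$ with the extended atomic rule substituted at the leaves.
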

\begin{proof} For $\regexp=\emptyset$ and  $\regexp=\ews$ it is obvious. For  $\regexp=\specb$ the result  follows from Lemma~\ref{lem:quospec}. In fact, if 
$\quo{\lang(\speca)}{\lang(\specb)}=\{\varepsilon\}$
then  $\partial_\speca(\specb)=\{\ews\}$ and thus $\lang(\partial_\speca(\regexp))= \{\varepsilon\}$; otherwise if $\quo{\lang(\speca)}{\lang(\specb)}=\emptyset$ then $\partial_\speca(\specb)=\emptyset$, and also $\lang(\partial_\speca(\regexp))=\emptyset$.  The remaining cases follow by induction as with ordinary regular expressions.
\qed
\end{proof}

\begin{lemma}\label{lem:wordspecartial}
For all $g\in(\sspec[\Sigma]\setminus\{\ews\})^\star$,
$\quo{\lang(g)}{\lang(\regexp)}=\lang(\partial_g(\regexp))$.
\end{lemma}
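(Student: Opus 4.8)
The plan is to prove the identity by induction on the length of $g$ viewed as a word over the alphabet $\sspec[\Sigma]\setminus\{\ews\}$, using Lemma~\ref{lem:lspecpartial} to handle the one-symbol step. Throughout I rely on the conventions that $\lang$ of the empty word of set specs is $\{\varepsilon\}$, that $\lang(x\beta)=\lang(x)\lang(\beta)$, that left quotient by a fixed language distributes over arbitrary unions, and that $\lang(\partial_x(\regexp))=\bigcup_{\regexp_1\in\partial_x(\regexp)}\lang(\regexp_1)$ (all of which are immediate from the definitions recalled just before the statement).

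For the base case $g=\varepsilon$ we have $\partial_\varepsilon(\regexp)=\{\regexp\}$ by definition, so $\lang(\partial_\varepsilon(\regexp))=\lang(\regexp)$, while $\quo{\lang(\varepsilon)}{\lang(\regexp)}=\quo{\{\varepsilon\}}{\lang(\regexp)}=\lang(\regexp)$; both sides agree. For the inductive step, write $g=x\beta$ with $\beta\in\sspec[\Sigma]\setminus\{\ews\}$ and $x$ strictly shorter. From the definition $\partial_{x\beta}(\regexp)=\partial_\beta(\partial_x(\regexp))$, the definition of $\partial_\beta$ on a set of expressions, and additivity of $\lang$ over a finite set of expressions,
\[
\lang(\partial_{x\beta}(\regexp))=\bigcup_{\regexp_1\in\partial_x(\regexp)}\lang(\partial_\beta(\regexp_1)).
\]
Applying Lemma~\ref{lem:lspecpartial} to each $\regexp_1\in\partial_x(\regexp)$ gives $\lang(\partial_\beta(\regexp_1))=\quo{\lang(\beta)}{\lang(\regexp_1)}$, and pulling the quotient out of the union (distributivity of quotient over unions) yields
\[
\lang(\partial_{x\beta}(\regexp))=\quo{\lang(\beta)}{\Bigl(\bigcup_{\regexp_1\in\partial_x(\regexp)}\lang(\regexp_1)\Bigr)}=\quo{\lang(\beta)}{\lang(\partial_x(\regexp))}.
\]
By the induction hypothesis $\lang(\partial_x(\regexp))=\quo{\lang(x)}{\lang(\regexp)}$, hence $\lang(\partial_{x\beta}(\regexp))=\quo{\lang(\beta)}{\bigl(\quo{\lang(x)}{\lang(\regexp)}\bigr)}$.

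It remains to identify the last expression with $\quo{\lang(x\beta)}{\lang(\regexp)}$. Since $\lang(x\beta)=\lang(x)\lang(\beta)$, this is exactly the composition law for language quotients, $\quo{(L_1L_2)}{L}=\quo{L_2}{(\quo{L_1}{L})}$. That law in turn follows by unfolding $\quo{(L_1L_2)}{L}=\bigcup_{w\in L_1L_2}\quo{w}{L}$, writing $w=w_1w_2$ with $w_i\in L_i$, applying the word-level identity $\quo{(w_1w_2)}{L}=\quo{w_2}{(\quo{w_1}{L})}$ (a short induction on $|w_2|$ from $\quo{(w\letter)}{L}=\quo{\letter}{(\quo{w}{L})}$), and using distributivity over unions once more to regroup. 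This closes the induction.

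I do not expect a real obstacle here: once Lemma~\ref{lem:lspecpartial} is in place, the argument is pure bookkeeping with quotients and unions. The only points worth stating carefully are the interpretation of $\lang$ on words of set specs ($\lang(\varepsilon)=\{\varepsilon\}$ and $\lang(x\beta)=\lang(x)\lang(\beta)$), so that the quotient composition law can be invoked, and the fact that one-sided quotient commutes with arbitrary unions, which is what lets the quotient be extracted from the union over $\partial_x(\regexp)$.
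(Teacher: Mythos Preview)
Your proof is correct and follows exactly the approach of the paper, which simply states ``By induction on $|g|$ using Lemma~\ref{lem:lspecpartial}.'' You have carefully unfolded the details the paper leaves implicit (distributivity of quotient over unions, the composition law $\quo{(L_1L_2)}{L}=\quo{L_2}{(\quo{L_1}{L})}$, and the intended meaning of $\lang(g)$ for a word of set specs), but there is no difference in strategy.
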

\begin{proof}
By induction on $|g|$ using Lemma~\ref{lem:lspecpartial}.	\qed
\end{proof}

\begin{lemma}\label{LEM:regpartial}\prlabel{LEM:regpartial}
	For all $w\in \Sigma^\star$, the following propositions are equivalent:
	\begin{enumerate}
	\item 	$w\in \lang(\regexp)$\label{prop:rpin}
	\item  $w=\letter_1\cdots\letter_n$ and there exists $\ws(w)=\speca_1\cdots \speca_n$ with $\speca_i\in \SoS(\regexp)$, $\sone \letter_i\cap \speca_i\not=\bot$ and  $\const(\partial_{\ws(w)}(\regexp))=\varepsilon$\label{prop:rpcap}.
	\end{enumerate}

\end{lemma}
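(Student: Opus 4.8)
The plan is to prove the equivalence of the two propositions by chaining together the quotient characterization of language membership with the lemmas on partial derivatives already established. The key observation is that, for a word $w = \letter_1 \cdots \letter_n \in \Sigma^\star$, membership $w \in \lang(\regexp)$ is equivalent to $\varepsilon \in \quo{w}{\lang(\regexp)}$, which in turn, by the standard properties of quotients by letters, is equivalent to $\quo{\letter_n}{\cdots \quo{\letter_1}{\lang(\regexp)}}$ containing $\varepsilon$. So the skeleton of the argument is: relate the letter-quotient chain $\quo{w}{\lang(\regexp)}$ to a set-spec-quotient chain $\quo{\lang(\ws(w))}{\lang(\regexp)}$ for a suitable $\ws(w) = \speca_1 \cdots \speca_n$, then invoke Lemma~\ref{lem:wordspecartial} to pass to $\lang(\partial_{\ws(w)}(\regexp))$, and finally read off membership of $\varepsilon$ via $\const$.

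First I would handle the direction \eqref{prop:rpin} $\Rightarrow$ \eqref{prop:rpcap}. Assume $w = \letter_1 \cdots \letter_n \in \lang(\regexp)$. I would argue that one can pick, for each position $i$, a set spec $\speca_i \in \SoS(\regexp)$ with $\letter_i \in \lang(\speca_i)$ (equivalently $\sone\letter_i \cap \speca_i \neq \bot$, using Lemma~\ref{LEM:ssets:lang}(2) or the definition of $\cap$ on set specs) such that $w$ is ``traced'' by $\speca_1 \cdots \speca_n$. The cleanest way is induction on $n$: if $\ew \in \lang(\regexp)$ then $n=0$ and $\const(\partial_\varepsilon(\regexp)) = \const(\regexp) = \varepsilon$; otherwise $w = \letter_1 w'$ with $w' \in \quo{\letter_1}{\lang(\regexp)}$, and by the Mirkin/Antimirov-style linear form there is some $(\speca_1, \rer') \in \lf(\regexp)$ with $\letter_1 \in \lang(\speca_1)$ and $w' \in \lang(\rer')$, so that $\rer' \in \partial_{\speca_1}(\regexp)$; then apply the induction hypothesis to $\rer'$ and $w'$, noting $\SoS(\rer') \subseteq \SoS(\regexp)$ and $\partial_{\ws(w)}(\regexp) \supseteq \partial_{\ws(w')}(\rer')$ so the $\const$ condition propagates. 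Here I would lean on the earlier Lemma stating $\rer \sim \const(\rer) \cup \lf(\rer)$ and on Lemma~\ref{lem:lspecpartial}.

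For the converse \eqref{prop:rpcap} $\Rightarrow$ \eqref{prop:rpin}, suppose $w = \letter_1 \cdots \letter_n$ and $\ws(w) = \speca_1 \cdots \speca_n$ with each $\speca_i \in \SoS(\regexp)$, $\sone\letter_i \cap \speca_i \neq \bot$, and $\const(\partial_{\ws(w)}(\regexp)) = \varepsilon$. Since $\sone\letter_i \cap \speca_i \neq \bot$ means $\letter_i \in \lang(\speca_i)$, we have $\letter_i \in \lang(\speca_i)$ for each $i$, hence $w \in \lang(\speca_1) \cdots \lang(\speca_n) = \lang(\ws(w))$ — so $w \in \lang(\ws(w))$, and in particular $\varepsilon \in \quo{w}{\lang(\ws(w))}$. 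The $\const$ condition gives $\varepsilon \in \lang(\partial_{\ws(w)}(\regexp))$, which by Lemma~\ref{lem:wordspecartial} equals $\quo{\lang(\ws(w))}{\lang(\regexp)} = \bigcup_{v \in \lang(\ws(w))} \quo{v}{\lang(\regexp)}$; so $\varepsilon \in \quo{v}{\lang(\regexp)}$ for some $v \in \lang(\ws(w))$, i.e. $v \in \lang(\regexp)$. The remaining gap is to conclude $w \in \lang(\regexp)$ specifically, not just some $v \in \lang(\ws(w))$; I would close this by observing that the $\const$ value of $\partial_{\ws(w)}(\regexp)$ only detects $\varepsilon \in \quo{\lang(\ws(w))}{\lang(\regexp)}$, which is weaker, so I should instead track $w$ itself directly: redo the forward reading of quotients letter by letter, using that $\partial_{\letter_i}(\quo{\cdots}{\lang(\regexp)})$ at the language level is contained in $\partial_{\speca_i}(\cdots)$ at the derivative level whenever $\letter_i \in \lang(\speca_i)$, via Lemma~\ref{lem:lspecpartial}.

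The main obstacle I anticipate is precisely this last point — the asymmetry between ``$w$ is traced by $\ws(w)$'' and ``$\quo{\lang(\ws(w))}{\lang(\regexp)}$ contains $\varepsilon$.'' Because $\lang(\speca_i)$ may be a proper superset of $\{\letter_i\}$, the derivative $\partial_{\speca_i}$ can over-approximate, so Lemma~\ref{lem:wordspecartial} alone does not pin membership to $w$. The fix is to prove the sharper statement by induction on $|w|$ that $\letter_1 \in \lang(\speca_1), \ldots, \letter_n \in \lang(\speca_n)$ together with $\const(\partial_{\speca_1 \cdots \speca_n}(\regexp)) = \varepsilon$ forces $w \in \lang(\regexp)$, using at each step that $w' \in \lang(\rer')$ for some $\rer' \in \partial_{\speca_1}(\regexp)$ whenever $\letter_1 w' \in \lang(\regexp)$ — wait, that is the direction we already have; for this direction we instead use that if $\const(\partial_{\speca_1}(\ldots \partial_{\speca_n}(\regexp))) = \varepsilon$ and $\letter_i \in \lang(\speca_i)$, then by Lemma~\ref{lem:lspecpartial} each $\partial_{\speca_i}$ computes exactly $\quo{\lang(\speca_i)}{(\cdot)}$, and since $\{\letter_i\} \subseteq \lang(\speca_i)$ we get $\quo{\letter_i}{(\cdot)} \subseteq \quo{\lang(\speca_i)}{(\cdot)}$; unwinding from the inside, $\varepsilon \in \quo{\letter_n}{\cdots\quo{\letter_1}{\lang(\regexp)}}$, i.e. $w \in \lang(\regexp)$. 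So the whole proof reduces to a careful induction on $n$ plus repeated application of Lemmas~\ref{lem:quospec}, \ref{lem:lspecpartial}, \ref{lem:wordspecartial} and the linear-form lemma; I would present it as two inductions, one per implication, and keep the book-keeping about $\SoS$ and $\const$ explicit.
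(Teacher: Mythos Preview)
Your approach differs from the paper's: the paper proves the equivalence by structural induction on $\regexp$ (base case $\regexp=\speca$, then $+$, $\cdot$, $*$), whereas you induct on $|w|$ using the linear-form decomposition. Your forward direction \eqref{prop:rpin}$\Rightarrow$\eqref{prop:rpcap} is fine and is arguably cleaner than the paper's case analysis: from $w=\letter_1w'\in\lang(\regexp)$ and $\regexp\sim\const(\regexp)\cup\lf(\regexp)$ you extract $(\speca_1,\rer')\in\lf(\regexp)$ with $\letter_1\in\lang(\speca_1)$ and $w'\in\lang(\rer')$, then recurse.

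The backward direction \eqref{prop:rpcap}$\Rightarrow$\eqref{prop:rpin} has a genuine error in your final ``unwinding'' step. You correctly observe that Lemma~\ref{lem:wordspecartial} alone only yields some $v\in\lang(\ws(w))\cap\lang(\regexp)$, not $w$ itself. But your proposed fix does not work: from $\{\letter_i\}\subseteq\lang(\speca_i)$ you get $\quo{\letter_i}{(\cdot)}\subseteq\quo{\lang(\speca_i)}{(\cdot)}$, and you know $\varepsilon$ lies in the \emph{larger} iterated quotient $\quo{\lang(\speca_n)}{\cdots\quo{\lang(\speca_1)}{\lang(\regexp)}}$. That containment points the wrong way to conclude $\varepsilon$ lies in the \emph{smaller} one $\quo{\letter_n}{\cdots\quo{\letter_1}{\lang(\regexp)}}$.

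The repair is to drop the quotient lemmas here and reuse the linear-form lemma you already invoked for the other direction. From $\const(\partial_{\speca_1\cdots\speca_n}(\regexp))=\varepsilon$ pick $\rer_1\in\partial_{\speca_1}(\regexp)$ with $\const(\partial_{\speca_2\cdots\speca_n}(\rer_1))=\varepsilon$; by induction $w'=\letter_2\cdots\letter_n\in\lang(\rer_1)$. Now $\rer_1\in\partial_{\speca_1}(\regexp)$ means precisely $(\speca_1,\rer_1)\in\lf(\regexp)$, so $\lang(\speca_1)\lang(\rer_1)\subseteq\lang(\regexp)$ by the equivalence $\regexp\sim\const(\regexp)\cup\lf(\regexp)$. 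Since $\letter_1\in\lang(\speca_1)$, you get $w=\letter_1w'\in\lang(\regexp)$ directly. (You should also check that the $\speca_i$ with $i\ge2$ lie in $\SoS(\rer_1)$ so the induction hypothesis applies; this follows because a nonempty $\partial_{\speca_j}$ at each step forces $\speca_j$ to appear in the linear form, hence in the set specs of the current expression, and $\SoS$ only shrinks along derivatives.)
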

\begin{proof}
	For $w=\varepsilon$, $n=0$ and $\const(\regexp)=\varepsilon$ if and only if
	$\const(\partial_{\varepsilon}(\regexp))=\varepsilon$. 
	For $w\not=\varepsilon$ we prove by induction on the structure of $\regexp$.

If $\regexp=\speca\not=\ews$  and $w\in \lang(\regexp)$,  $w$ is some letter $\letter$. 
 Then $\sone \letter\cap \speca\not=\bot$ and also $\const(\partial_{\speca}(\regexp))=\varepsilon$. If \ref{prop:rpcap}. holds then $\partial_{\ws(x)}(\regexp)\not=\emptyset$ if and only if $\ws(w)=\speca$. Because  $\sone \letter\cap \speca\not=\bot$ we have $w\in \lang(\regexp)$.
 	
 	Suppose the result holds for $\regexp_1$ and $\regexp_2$.
 	
 	Let $\regexp=\regexp_1\disj\regexp_2$. If $w\in \lang(\regexp)$ suppose without lost of generality that $w\in \lang(\regexp_1)$. By the induction hypothesis
 	there exists 
 	$\ws(w)=\speca_1\cdots \speca_n$ with $\speca_i\in \SoS(\regexp_1)$, $\sone \letter_i\cap \speca_i\not=\bot$ and  $\const(\partial_{\ws(x)}(\regexp_1))=\varepsilon$.
 Using Lemma~\ref{lem:wordspecartial},  
  $\varepsilon \in \quo{\lang(\ws(w)}{\lang(\regexp_1)}\subseteq \quo{\lang(\ws(w))}{\lang(\regexp_1+\regexp_2)}=\quo{\lang(\ws(w))}{\lang(\regexp)}$ and thus	 $\const(\partial_{\ws(w)}(\regexp))=\varepsilon$.
  If \ref{prop:rpcap}. holds suppose without lost of generality that $\varepsilon\in \lang(\partial_{\ws(w)}(\regexp_1))$, i.e. $\const(\partial_{\ws(w)}(\regexp_1))=\varepsilon$. Then $w\in \lang(\regexp_1)\subseteq  \lang(\regexp)$.
  
 		For $\regexp=\regexp_1\concat\regexp_2$ and $w\in \lang(\regexp)$, $w=w_1w_2$ with $w_i\in  
 		\lang(\regexp_i)$,  for $i=1,2$.
 		If  $w_1=\varepsilon$ let $w_2=\letter_1\cdots \letter_n$.  By the induction hypothesis
 	there exists 
 	${\ws(w)}=\speca_1\cdots \speca_n$ with $\speca_i\in \SoS(\regexp_2)$, $\sone \letter_i\cap \speca_i\not=\bot$ and  $\const(\partial_{\ws(w)}(\regexp_2))=\varepsilon$.
 Using Lemma~\ref{lem:wordspecartial},  
  $\varepsilon \in \quo{\lang(\ws(w))}{\lang(\regexp_2)}\subseteq \quo{\lang(\ws(w))}{\lang(\regexp_1\concat\regexp_2)}=\quo{\lang(\ws(w))}{\lang(\regexp)}$ and thus	 $\const(\partial_{\ws(w)}(\regexp))=\varepsilon$.  
  If $w_1=\letter_1'\cdots\letter_m'\not=\varepsilon$ then there also exists $\ws(w')=\speca_1'\cdots \speca_m'$ with $\speca_i'\in \SoS(\regexp_1)$, $\sone \letter_i'\cap \speca_i'\not=\bot$ and  $\const(\partial_{w_{x}'}(\regexp_1))=\varepsilon$. 
If $w_2=\varepsilon$ then  $\varepsilon \in \quo{\lang(\ws(w'))}{\lang(\regexp_1)}\subseteq\quo{\lang(\ws(w'))}{\lang(\regexp_1\concat\regexp_2)}=\quo{\lang(\ws(w'))}{\lang(\regexp)}$ and thus	 $\const(\partial_{\ws(w')}(\regexp))=\varepsilon$. Otherwise, let $\ws(w)$ be as in the case of $w_1=\varepsilon$, and one concludes that $\varepsilon\in \quo{\lang(\ws(w')\ws(w))}{\lang(\regexp)}=\quo{\lang(\ws(w))}{(\quo{\lang(\ws(w'))}{\lang(\regexp_1)})\lang(\regexp_2)}$ and thus	 $\const(\partial_{\ws(w')\ws(w)}(\regexp))=\varepsilon$.

If ~\ref{prop:rpcap}. holds then $\varepsilon\in \quo{\lang(\ws(w)}{(\lang(\rer_1)\lang(\regexp_2))}=\quo{(\lang(\speca_1)\cdots\lang(\speca_n))}{(\lang(\rer_1)\lang(\regexp_2))}$. We have three cases to consider:
\begin{enumerate}[a)]
	\item $\const(\partial_{\ws(w)}(\regexp_1))=\varepsilon$
	\item $\const(\regexp_1)=\varepsilon$ and $\const(\partial_{\ws(w)}(\regexp_2))=\varepsilon$
	\item $\ws(w)=\ws(u)\ws(v)$ with $\ws(u)=B_1\cdots B_j$ and $\ws(v)=B_{j+1}\cdots B_n$ and 
	$\const(\partial_{\ws(u)}(\regexp_1))=\varepsilon$ and 	$\const(\partial_{\ws(v)}(\regexp_2)=\varepsilon$.
\end{enumerate}

For $\regexp=\regexp_1^\star$, if $w\in \lang(\regexp)$ there exists $n$ such that $w\in \lang(\regexp_1)^n$. Then the proof is similar to the case of concatenation.\qed
 	\end{proof}

The set of all partial derivatives of~$\rer$ w.r.t.\ non-empty words is denoted by $\partial^+(\rer)$. Then $\PD(\rer)=\partial^+(\rer)\cup \{\rer\}$.


\begin{lemma}
\label{lemma:pdpcontainseps}
For $\regexp \in \REspec$, the following hold.
\begin{enumerate}
\item If $\partial^+(\regexp) \not= \emptyset$, then there is $\regexp_0 \in \partial^+(\regexp)$ with $\const(\regexp_0) = \varepsilon$.
\item If $\partial^+(\regexp) = \emptyset$ and $\regexp \not=\erel$, then $\lang(\regexp) = \{ \varepsilon \}$ and $\const(\regexp)=\varepsilon$.
\end{enumerate}
\end{lemma}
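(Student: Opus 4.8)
The plan is to prove both statements by structural induction on $\regexp$, exploiting the inductive definition of the linear form $\lf$ and hence of $\partial^+$, together with the characterization of $\const$ in terms of membership of $\ew_{\mon B}$ in the behaviour. For statement (2), note that $\partial^+(\regexp)=\eset$ means $\lf(\regexp_1)=\eset$ for every $\regexp_1$ reachable by taking partial derivatives, and in particular $\lf(\regexp)=\eset$; a quick sub-induction on the structure of $\regexp$ shows $\lf(\regexp)=\eset$ forces $\regexp$ to be built only from $\erel$ and $\underline{\ew_{\mon B}}$ using $+,\cdot,{}^*$, and since $\regexp\neq\erel$ one gets $\lang(\regexp)\subseteq\{\ew\}$ with $\ew\in\lang(\regexp)$; indeed, using the earlier Lemma asserting $\rer\sim\const(\rer)\cup\lf(\rer)$, we get $\lang(\regexp)=\lang(\const(\regexp))$, so if additionally $\const(\regexp)=\erel$ then $\lang(\regexp)=\eset$, contradicting $\regexp\neq\erel$ only after we argue $\lf(\regexp)=\eset$ already rules that out — more cleanly, $\lf(\regexp)=\eset$ and $\regexp\neq\erel$ together give $\ew_{\mon B}\in\lang(\regexp)$, i.e. $\const(\regexp)=\underline{\ew_{\mon B}}$, and then $\lang(\regexp)=\lang(\const(\regexp))=\{\ew_{\mon B}\}$.

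For statement (1), the key observation is that the recursive clauses for $\lf$ feed the set-spec label's partial derivative $\partial_\speca(\specb)=\{\ews\}$, and $\const(\ews)=\underline{\ew_{\mon B}}$; so whenever $\partial^+(\regexp)$ is nonempty it ``bottoms out'' at $\ews$ (or at a subexpression whose constant part is $\underline{\ew_{\mon B}}$). Concretely I would induct on $\regexp$: the base cases $\regexp=\erel$ and $\regexp=\underline{\ew_{\mon B}}$ have $\partial^+=\eset$ vacuously; for $\regexp=\beta$ a set spec, $\partial_\beta(\beta)=\{\ews\}\subseteq\partial^+(\beta)$ and $\const(\ews)=\underline{\ew_{\mon B}}$, done. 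For $\regexp=\regexp_1+\regexp_2$, if $\partial^+(\regexp)\neq\eset$ then $\partial^+(\regexp_i)\neq\eset$ for some $i$ (since $\lf$ of a sum is the union and further derivatives of summands stay within the respective $\partial^+(\regexp_i)$), and the induction hypothesis on $\regexp_i$ supplies the desired $\regexp_0$. For $\regexp=\regexp_1\concat\regexp_2$: if $\partial^+(\regexp_1)\neq\eset$, take the witness $\regexp_0^{(1)}\in\partial^+(\regexp_1)$ with $\const(\regexp_0^{(1)})=\underline{\ew_{\mon B}}$; the corresponding derivative of $\regexp$ is $\regexp_0^{(1)}\concat\regexp_2$ or, once $\const$ fires, yields a further derivative landing in $\partial^+(\regexp_2)$, so recurse; if $\partial^+(\regexp_1)=\eset$ but $\partial^+(\regexp)\neq\eset$, then by statement (2) applied to $\regexp_1$ (noting $\regexp_1\neq\erel$ else $\lang(\regexp)=\eset$ and $\partial^+(\regexp)=\eset$) we have $\const(\regexp_1)=\varepsilon$, whence $\lf(\regexp)=\lf(\regexp_1)\regexp_2\cup\lf(\regexp_2)=\lf(\regexp_2)$, so $\partial^+(\regexp)=\partial^+(\regexp_2)\neq\eset$ and the hypothesis on $\regexp_2$ finishes. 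For $\regexp=\regexp_1^*$: $\lf(\regexp)=\lf(\regexp_1)\regexp_1^*$, so $\partial^+(\regexp)\neq\eset$ forces $\partial^+(\regexp_1)\neq\eset$ (actually even $\lf(\regexp_1)\neq\eset$); take the witness $\regexp_0^{(1)}$ for $\regexp_1$ and note that since $\const(\regexp_1^*)=\underline{\ew_{\mon B}}$ always, the derivative path through $\regexp_0^{(1)}\concat\regexp_1^*$ eventually — when $\const(\regexp_0^{(1)})$ fires — produces an element with constant part $\underline{\ew_{\mon B}}$ (or one may directly observe $\regexp_1^*$-type derivatives always have $\const=\underline{\ew_{\mon B}}$ on the ``tail'' factor).

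The step I expect to be the main obstacle is the catenation case in statement (1), specifically tracking how the $\const$ clause in $\lf(\regexp_1\concat\regexp_2)$ interacts with iterated derivatives: one must be careful that a partial derivative of $\regexp_1\concat\regexp_2$ is either of the shape $\regexp'\concat\regexp_2$ with $\regexp'\in\partial^+(\regexp_1)$, or lies in $\partial^+(\regexp_2)$, or is $\regexp'\concat\regexp_2$ for $\regexp'$ with $\const(\regexp')=\underline{\ew_{\mon B}}$ whose next derivative splits into both pieces; pinning down this ``shape lemma'' for $\partial^+$ of a product (the standard Antimirov decomposition, adapted to set-spec labels) is where the real bookkeeping lives, and it is cleanest to isolate it as an auxiliary claim before doing the induction. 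Everything else reduces to the behaviour-preservation lemma $\rer\sim\const(\rer)\cup\lf(\rer)$ and the trivial identity $\const(\ews)=\underline{\ew_{\mon B}}$.
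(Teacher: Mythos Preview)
Your structural-induction plan is workable, and with the auxiliary ``shape lemma'' you identify (essentially the recursion $\partial^+(\regexp_1\regexp_2)=\partial^+(\regexp_1)\regexp_2\cup\partial^+(\regexp_2)$ and $\partial^+(\regexp_1^*)=\partial^+(\regexp_1)\regexp_1^*$, which the paper in fact states as the \emph{next} lemma) the concatenation and star cases go through once you also use statement~(2) on the smaller subexpressions, as you sketch. Your treatment of~(2) is slightly muddled in the write-up but the sub-induction you indicate is sound.

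The paper, however, takes a much shorter semantic route that sidesteps the case analysis entirely. The key observation, drawn directly from Definition~\ref{DEF:reg:gen}, is that $\erel$ is a standalone expression and can never occur as a subexpression of a larger term; consequently every $\regexpc\in\partial^+(\regexp)$ has $\lang(\regexpc)\neq\emptyset$. For~(1), pick any word $w\in\lang(\regexpc)$ and invoke Lemma~\ref{LEM:regpartial}: there is a sequence $\ws(w)$ of set specs with $\const(\partial_{\ws(w)}(\regexpc))=\varepsilon$, and any $\regexp_0\in\partial_{\ws(w)}(\regexpc)\subseteq\partial^+(\regexp)$ witnessing this is the desired element. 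For~(2), the same subexpression observation gives $\lang(\regexp)\neq\emptyset$, while $\partial^+(\regexp)=\emptyset$ together with Lemma~\ref{LEM:regpartial} forces $\lang(\regexp)\cap\Sigma^+=\emptyset$; hence $\lang(\regexp)=\{\varepsilon\}$.

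So the trade-off is: your approach is self-contained at the syntactic level and does not lean on Lemma~\ref{LEM:regpartial}, but it requires the Antimirov decomposition of $\partial^+$ for products and stars up front and a simultaneous induction on both statements. The paper's approach reduces the whole lemma to two lines by cashing in the already-established correspondence between language membership and iterated derivatives, at the cost of that dependency. If you keep your route, I would advise proving the shape lemma first rather than inlining it; otherwise the concatenation case in~(1) stays murky exactly where you flag it.
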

\begin{proof}
  \begin{enumerate}
  \item From the definition of regular expressions follows that
    $\erel$ cannot appear as a subexpression of a larger term.
    Suppose that there is some $\regexpc \in \partial^+(\regexp)$ and $\lang(\regexpc)\not= \emptyset$. Then there is some word $w \in \Sigma^*$ such    
    that 
     $w \in \lang(\regexpc)$. Then $\varepsilon \in
    \lang(\partial_{\ws(w)}(\regexpc))$ which means that there is some
    $\regexp_0 \in \partial_{\ws(w)}(\regexpc) \subseteq \partial^+(\regexp)$ such
    that $\const(\regexp_0)=\varepsilon.$
  \item $\partial^+(\regexp) = \emptyset$ implies that there is no word $z \in \Sigma^+$ in
    $\lang(\regexp)$. On the other hand, since $\erel$ does not
    appear in $\regexp$, it follows that $\lang(\regexp) \not=
    \emptyset$. Thus, $\lang(\regexp) = \{ \varepsilon \}$.\qed
\end{enumerate}
\end{proof}

The following proposition generalizes from ordinary regular expressions~\cite{b.g.mirkin66:_algor_for_const_base_in,champarnaud01:_from_mirkin_prebas_to_antim,broda11:_averag_state_compl_of_partial_deriv_autom}, and shows that the set of partial derivatives is finite.

\begin{lemma}\label{LEM:pdssp_recursion}\prlabel{LEM:pdssp\_recursion}
  $\partial^+$ satisfies the following:\\
\begin{equation*}
  \begin{array}[t]{rcl}
    \partial^+(\emptyset)&=&\partial^+(\ews) = \emptyset, \\
    \partial^+(\speca) &=& \{ \varepsilon\} \\
     \partial^+(\regexp^\star) &=& \partial^+(\regexp)\regexp^\star.
      \end{array}
      \hspace{1cm}
      \begin{array}[t]{rcl}
      \partial^+(\regexp_1+ \regexp_2) &=& \partial^+(\regexp_1) \cup \partial^+(\regexp_2),\\
      \partial^+(\regexp_1 \regexp_2) &=& \partial^+(\regexp_1) \regexp_2
      \cup \partial^+(\regexp_2),
    \end{array}
  \end{equation*}
\end{lemma}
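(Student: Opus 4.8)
The plan is to prove the five identities one at a time, after first recording some elementary facts about iterated derivatives. For any $\regexp\in\REspec$ and any words $u,v$ over $\sspec[\Sigma]\setminus\{\ews\}$ one has $\partial_{uv}(\regexp)=\partial_v(\partial_u(\regexp))$ by an immediate induction on $|v|$, and $\partial_x$ distributes over unions of expression sets. Splitting each non-empty word as $\beta x$ with $\beta\in\sspec[\Sigma]\setminus\{\ews\}$ and $x$ possibly empty, this gives $\partial^+(\regexp)=\bigcup_\beta\bigcup_x\partial_x(\partial_\beta(\regexp))$; in particular $\bigcup_\beta\partial_\beta(\regexp)\sseq\partial^+(\regexp)$, and $\partial^+(\regexp)$ is \emph{closed under one more derivative}: $\res\in\partial^+(\regexp)$ implies $\partial_\beta(\res)\sseq\partial^+(\regexp)$ for every $\beta$. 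From the inductive definition of $\lf$ one reads off the \emph{one-step identities} $\partial_\beta(\emptyset)=\partial_\beta(\ews)=\emptyset$, $\partial_\beta(\regexp_1+\regexp_2)=\partial_\beta(\regexp_1)\cup\partial_\beta(\regexp_2)$, $\partial_\beta(\regexp^\star)=\partial_\beta(\regexp)\regexp^\star$, and $\partial_\beta(\regexp_1\regexp_2)=\partial_\beta(\regexp_1)\regexp_2$ (respectively $\partial_\beta(\regexp_1)\regexp_2\cup\partial_\beta(\regexp_2)$) when $\ew\notin\lang(\regexp_1)$ (respectively $\ew\in\lang(\regexp_1)$), while $\partial_\beta(\speca)\sseq\{\ews\}$ with $\partial_\speca(\speca)=\{\ews\}$. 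Since $\partial_\beta(\regexp_1)\regexp_2\sseq\partial_\beta(\regexp_1\regexp_2)$ holds in every case, an easy induction on $|z|$ yields the auxiliary fact $(\dagger)$: $\partial_z(\regexp_1)\regexp_2\sseq\partial_z(\regexp_1\regexp_2)$ for every word $z$ and all $\regexp_1,\regexp_2\in\REspec$.

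The base cases and the sum case are then quick. Iterating $\partial_\beta(\emptyset)=\partial_\beta(\ews)=\emptyset$ gives $\partial^+(\emptyset)=\partial^+(\ews)=\emptyset$. For $\partial^+(\speca)$: every length-$1$ derivative of $\speca$ is $\emptyset$ or $\{\ews\}$ and $\partial_\speca(\speca)=\{\ews\}$, while every longer derivative factors through $\partial_\beta(\ews)=\emptyset$; hence $\partial^+(\speca)=\{\ews\}$. For the sum, the one-step identity together with the distributivity noted above gives $\partial^+(\regexp_1+\regexp_2)=\bigcup_\beta\bigcup_x\partial_x\big(\partial_\beta(\regexp_1)\cup\partial_\beta(\regexp_2)\big)=\partial^+(\regexp_1)\cup\partial^+(\regexp_2)$.

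For the star case, the inclusion $\partial^+(\regexp^\star)\sseq\partial^+(\regexp)\regexp^\star$ is proved by induction on $|x|$: every element of $\partial_x(\regexp^\star)$ has the form $\res\regexp^\star$ with $\res\in\partial^+(\regexp)$, using $\partial_\beta(\regexp^\star)=\partial_\beta(\regexp)\regexp^\star$ at the base step and, in the inductive step, the one-step concatenation identity applied to $\res\regexp^\star$ together with closure of $\partial^+(\regexp)$ under further derivatives. For the reverse inclusion, if $\res\in\partial_y(\regexp)$ with $y=\beta_1 y'$ non-empty, pick $\res_1\in\partial_{\beta_1}(\regexp)$ with $\res\in\partial_{y'}(\res_1)$; then $\res_1\regexp^\star\in\partial_{\beta_1}(\regexp^\star)$ and $(\dagger)$ gives $\res\regexp^\star\in\partial_{y'}(\res_1)\regexp^\star\sseq\partial_{y'}(\res_1\regexp^\star)\sseq\partial_y(\regexp^\star)\sseq\partial^+(\regexp^\star)$. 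The concatenation case is analogous for the direction $\partial^+(\regexp_1\regexp_2)\sseq\partial^+(\regexp_1)\regexp_2\cup\partial^+(\regexp_2)$ (induction on $|x|$, using the one-step identity and closure), and $(\dagger)$ immediately yields $\partial^+(\regexp_1)\regexp_2\sseq\partial^+(\regexp_1\regexp_2)$.

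The one remaining inclusion, $\partial^+(\regexp_2)\sseq\partial^+(\regexp_1\regexp_2)$, is the main obstacle, since a priori derivatives of $\regexp_2$ are only reachable after ``passing through'' $\regexp_1$. Here I would use that, as $\erel$ cannot occur as a proper subexpression (cf.\ the proof of Lemma~\ref{lemma:pdpcontainseps}), $\lang(\regexp_1)\neq\emptyset$, so some $w_1\in\lang(\regexp_1)$ exists. If $w_1=\ew$ then $\ew\in\lang(\regexp_1)$ and the one-step identity gives $\partial_\beta(\regexp_2)\sseq\partial_\beta(\regexp_1\regexp_2)$ for every $\beta$, whence $\partial^+(\regexp_2)\sseq\partial^+(\regexp_1\regexp_2)$ by the reformulation of $\partial^+$ from the first paragraph. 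Otherwise, by Lemma~\ref{LEM:regpartial} applied to $w_1\in\lang(\regexp_1)$ there is a non-empty word $\ws(w_1)$ over $\SoS(\regexp_1)$ and some $\res_0\in\partial_{\ws(w_1)}(\regexp_1)$ with $\const(\res_0)=\ews$, i.e.\ $\ew\in\lang(\res_0)$; by $(\dagger)$, $\res_0\regexp_2\in\partial_{\ws(w_1)}(\regexp_1)\regexp_2\sseq\partial_{\ws(w_1)}(\regexp_1\regexp_2)$. Finally, for any non-empty $y=\beta y'$, keeping the summand $\partial_\beta(\regexp_2)$ of $\partial_\beta(\res_0\regexp_2)$ (legitimate since $\ew\in\lang(\res_0)$) gives $\partial_y(\regexp_2)\sseq\partial_y(\res_0\regexp_2)\sseq\partial_{\ws(w_1)y}(\regexp_1\regexp_2)\sseq\partial^+(\regexp_1\regexp_2)$, so $\partial^+(\regexp_2)\sseq\partial^+(\regexp_1\regexp_2)$, completing the proof.
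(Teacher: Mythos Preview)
The paper does not actually prove this lemma: it merely states that the identities ``generalize from ordinary regular expressions'' and cites \cite{b.g.mirkin66:_algor_for_const_base_in,champarnaud01:_from_mirkin_prebas_to_antim,broda11:_averag_state_compl_of_partial_deriv_autom}. Your proposal therefore supplies a full argument where the paper gives none, and it is correct.

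Your treatment of the one genuinely delicate point---the inclusion $\partial^+(\regexp_2)\sseq\partial^+(\regexp_1\regexp_2)$---is exactly where a naive induction fails, and you handle it properly. The observation that $\erel$ cannot occur as a proper subexpression (so $\lang(\regexp_1)\neq\emptyset$), followed by the use of Lemma~\ref{LEM:regpartial} to produce a word $\ws(w_1)$ reaching some $\res_0$ with $\const(\res_0)=\ews$, and then tunnelling $\partial_y(\regexp_2)$ through $\res_0\regexp_2$ via $(\dagger)$, is the standard device in the Antimirov/Mirkin literature adapted to the set-spec setting. This is precisely what the cited references do for ordinary alphabets, so your proof is in the spirit of what the paper defers to; the only set-spec-specific ingredients you need are the one-step identity $\partial_\beta(\speca)\sseq\{\ews\}$ with $\partial_\speca(\speca)=\{\ews\}$ (which holds since $\speca\cap\speca=\speca\neq\bot$) and the availability of Lemma~\ref{LEM:regpartial}, both of which the paper establishes earlier in the section.
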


\begin{theorem}
\begin{eqnarray*}
|\partial^+(\regexp)|&\leq& \szw{\regexp}, \\
	|\PD(\regexp)|&\leq& \szw{\regexp} +1.
\end{eqnarray*}
\end{theorem}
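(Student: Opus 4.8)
The plan is to prove the first inequality $|\partial^+(\regexp)|\le\szw{\regexp}$ by structural induction on $\regexp$, using the recursive characterization of $\partial^+$ established in Lemma~\ref{LEM:pdssp_recursion}, and then to derive the bound on $|\PD(\regexp)|$ as an immediate corollary, since $\PD(\regexp)=\partial^+(\regexp)\cup\{\regexp\}$ and hence $|\PD(\regexp)|\le|\partial^+(\regexp)|+1\le\szw{\regexp}+1$.

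For the induction, the base cases are $\regexp=\erel$ and $\regexp=\underline{\ew_{\mon B}}$ (here written $\ews$), where $\partial^+(\regexp)=\emptyset$ and $\szw{\regexp}=0$, and $\regexp=\speca$ a single set spec, where $\partial^+(\speca)=\{\ew\}$ (of cardinality $1$) and $\szw{\speca}=1$; all hold with equality or slack. For the inductive step I would treat the three constructors. For $\regexp=\regexp_1+\regexp_2$, Lemma~\ref{LEM:pdssp_recursion} gives $\partial^+(\regexp)=\partial^+(\regexp_1)\cup\partial^+(\regexp_2)$, so $|\partial^+(\regexp)|\le|\partial^+(\regexp_1)|+|\partial^+(\regexp_2)|\le\szw{\regexp_1}+\szw{\regexp_2}=\szw{\regexp}$. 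For $\regexp=\regexp_1\regexp_2$, we have $\partial^+(\regexp)=\partial^+(\regexp_1)\regexp_2\cup\partial^+(\regexp_2)$; here the key observation is that the map $\res\mapsto\res\regexp_2$ on the set $\partial^+(\regexp_1)$ is injective (appending the fixed suffix $\regexp_2$ cannot identify two distinct expressions), so $|\partial^+(\regexp_1)\regexp_2|=|\partial^+(\regexp_1)|$, and therefore $|\partial^+(\regexp)|\le|\partial^+(\regexp_1)|+|\partial^+(\regexp_2)|\le\szw{\regexp_1}+\szw{\regexp_2}=\szw{\regexp}$. For $\regexp=\regexp_1^\star$, we have $\partial^+(\regexp)=\partial^+(\regexp_1)\regexp_1^\star$, and again the suffixing map is injective, so $|\partial^+(\regexp)|=|\partial^+(\regexp_1)|\le\szw{\regexp_1}=\szw{\regexp_1^\star}=\szw{\regexp}$.

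The only subtlety I anticipate is the injectivity of the suffixing operations $S\mapsto S\regexp_2$ and $S\mapsto S\regexp_1^\star$, which underlies the catenation and star cases; this is where a naive cardinality bound could fail if two syntactically distinct derivatives collapsed after appending the suffix. It holds because the operation $S\res$ is defined (for $\res\neq\underline{\ew_{\mon B}}$) by $\{(\beta,\rer\res)\mid(\beta,\rer)\in S\}$ on linear-form pairs, and more to the point for $\partial^+$ the appended suffix is a fixed expression placed at the outermost catenation position, so the original derivative is recoverable by stripping that suffix; when the suffix equals $\underline{\ew_{\mon B}}$ the operation is the identity and injectivity is trivial. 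Once this point is dispatched, every case reduces to summing the inductive bounds and matching them against the defining recursion of $\szw{\cdot}$, which gives the claimed inequalities.
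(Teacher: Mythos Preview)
Your proof is correct and follows exactly the approach the paper indicates: the paper's proof is the single line ``Direct consequence of Lemma~\ref{LEM:pdssp_recursion} using induction on $\regexp$,'' and you have supplied precisely that induction with the standard case analysis. Your attention to the injectivity of the suffixing maps $S\mapsto S\regexp_2$ and $S\mapsto S\regexp_1^\star$ is the one nontrivial point in the argument, and you handle it correctly.
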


\begin{proof}
	Direct consequence of  Lemma~\ref{LEM:pdssp_recursion} using induction on $\regexp$.\qed
\end{proof}

The following proposition shows that
the \emph{partial derivative automaton} of~$\rer$ with set specifications  is equivalent to $\rer$.

\begin{theorem}\label{TH:apd}\prlabel{TH:apd}
$\lang(\apd(\regexp))=\lang(\regexp)$.
\end{theorem}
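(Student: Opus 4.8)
The plan is to establish the equality $\lang(\apd(\rer)) = \lang(\rer)$ by a double-inclusion argument that reduces, via Lemma~\ref{LEM:regpartial}, to a statement about paths in $\apd(\rer)$ corresponding to factorizations of a word $w$ through set specs drawn from $\SoS(\rer)$. First I would recall the relevant combinatorial content: by Lemma~\ref{LEM:regpartial}, $w = \letter_1\cdots\letter_n \in \lang(\rer)$ if and only if there is a sequence $\ws(w) = \speca_1\cdots\speca_n$ with each $\speca_i \in \SoS(\rer)$, $\sone\letter_i \cap \speca_i \neq \bot$, and $\const(\partial_{\ws(w)}(\rer)) = \varepsilon$. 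So the task is to match up such sequences $\speca_1\cdots\speca_n$ with accepting paths of $\apd(\rer)$, where the accepting states are exactly those $\rer_1 \in \PD(\rer)$ with $\const(\rer_1) = \underline{\ew_{\mon B}}$ (i.e. $\varepsilon$ here).

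The key step is to prove, by induction on $n$, that for every word $g = \speca_1\cdots\speca_n \in (\sspec[\Sigma]\setminus\{\ews\})^*$ there is a path in $\apd(\rer)$ labelled $\speca_1,\ldots,\speca_n$ from the start state $\rer$ to a state $\rer_n$ if and only if $\rer_n \in \partial_g(\rer)$; this follows directly from the definition of $\delta_{\APD}$ together with the iterated-derivative definition $\partial_{x\beta}(\rer) = \partial_\beta(\partial_x(\rer))$ and the fact that $\PD(\rer)$ is closed under taking partial derivatives. Combining this with the identification of final states, an accepting path labelled $\speca_1\cdots\speca_n$ exists exactly when $\partial_{\ws(w)}(\rer)$ contains some $\rer_n$ with $\const(\rer_n) = \varepsilon$, i.e. when $\const(\partial_{\ws(w)}(\rer)) = \varepsilon$. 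For the edge labels to be consumable by $w$, I would invoke the definition $\partial_\speca(\specb) = \{\ews\}$ iff $\speca \cap \specb \neq \bot$: a transition $(\rer_1, \speca_i, \rer_2)$ can be "read" by the letter $\letter_i$ precisely when $\sone\letter_i \cap \speca_i \neq \bot$, matching the condition in Lemma~\ref{LEM:regpartial}. Handling the empty word $w = \varepsilon$ separately is trivial: $\varepsilon \in \lang(\rer)$ iff $\const(\rer) = \varepsilon$ iff $\rer$ itself is a final state of $\apd(\rer)$ (an empty accepting path).

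Putting the two directions together: if $w \in \lang(\rer)$, Lemma~\ref{LEM:regpartial} supplies a suitable $\ws(w) = \speca_1\cdots\speca_n$, which by the path-characterization above yields an accepting path of $\apd(\rer)$ whose labels $\speca_i$ each satisfy $\partial_{\speca_i}$-compatibility with $\letter_i$, hence $w \in \lang(\apd(\rer))$ by the behaviour of labelled graphs (Definition~\ref{DEF:graph:behave}) applied to the label set $\sspec[\Sigma]$. Conversely, an accepting path of $\apd(\rer)$ reading $w$ gives a sequence $\speca_1\cdots\speca_n$ with $\speca_i \in \SoS(\rer)$, $\sone\letter_i \cap \speca_i \neq \bot$, ending in a state with $\const = \varepsilon$, which is exactly condition~\ref{prop:rpcap} of Lemma~\ref{LEM:regpartial}, so $w \in \lang(\rer)$.

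The main obstacle I anticipate is bookkeeping the correspondence between the \emph{multiset/set} of states $\partial_g(\rer) \subseteq \PD(\rer)$ and the individual paths of $\apd(\rer)$: since $\partial_\beta(S) = \bigcup_{\rer_1 \in S}\partial_\beta(\rer_1)$, a single state of $\apd(\rer)$ may be reached by many paths, and one must be careful that "$\const(\partial_{\ws(w)}(\rer)) = \varepsilon$" (which asks for \emph{some} element of the derivative set to be nullable) corresponds to the existence of \emph{some} accepting path rather than all of them — this is where the extension of $\const$ to sets ("$\const(S) = \underline{\ew_{\mon B}}$ iff $\exists \rer \in S$ with $\const(\rer) = \underline{\ew_{\mon B}}$") does the work, and I would make sure the induction is phrased at the level of the whole derivative set so this existential is threaded correctly through the concatenation and star cases.
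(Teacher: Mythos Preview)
Your proposal is correct and follows essentially the same approach as the paper's proof: first establish by induction on the length of $g$ that a path in $\apd(\rer)$ from $\rer$ to some state $\rer'$ labelled $g$ exists if and only if $\rer'\in\partial_g(\rer)$, then use Lemma~\ref{LEM:regpartial} in both directions to convert between membership $w\in\lang(\rer)$ and the existence of an accepting path whose edge labels cover the letters of $w$. The only cosmetic difference is that the paper states the path--derivative correspondence (and the resulting language equality) for an arbitrary state $\rer_1\in\PD(\rer)$ rather than just the initial state $\rer$; your version suffices for the theorem, and your concern about the set-level existential in $\const(\partial_{\ws(w)}(\rer))$ is handled exactly as you describe.
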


\begin{proof}
	By induction on $|g|$ with $g\in \SoS(\regexp)^\star$,  one can prove that   there is a path from $\regexp_1$ to $\regexp_2$ labeled by $g$ if and only if $\regexp_2\in \partial_{g}(\regexp_1)$, for any $\regexp_1\in \PD(\regexp)$. Now, we prove that for any $\regexp_1\in \PD(\regexp)$ and $w\in \Sigma^\star$,
	$$w\in \lang_{\regexp_1}(\apd(\regexp))\; \Leftrightarrow \; w\in \lang(\regexp_1).$$
	Let $w=\letter_1\cdots\letter_n$.
	 If $w\in \lang(\regexp_1)$ applying Lemma~\ref{LEM:regpartial} 
	 one concludes that
	 there exists $\ws(w)=\speca_1\cdots \speca_n$ 
	such that $\const(\partial_{\ws(w)}(\regexp_1))=\varepsilon$. Then there exists $\regexp_2\in \partial_{\ws(w)}(\regexp_1)$ such that $\const(\regexp_2)=\varepsilon$, and thus $w\in \lang_{\regexp_1}(\apd(\regexp))$.
	
	 If $w\in \lang_{\regexp_1}(\apd(\regexp))$, there is an accepting path from $\regexp_1$ to a state $\regexp_2$ labeled by $\ws(w)=\speca_1\ldots \speca_n$, $\const(\regexp_2)=\varepsilon$ and $w\in \lang(\speca_1)\cdots \lang(\speca_n)$. Then we conclude that $\sone \letter_i\cap \speca_i\not=\bot$ for $i=1,\ldots n$, and again by Lemma~\ref{LEM:regpartial}, $w\in  \lang(\regexp_1)$.\qed
\end{proof}

\if\DRAFT1
%
\subsection{Regular Expressions with Pairing Specifications}
\label{sec:pdpairspec}
Consider the monoid $\Sigma^*\times\Delta^*$ with the following set of generators $\{(\letter,\varepsilon), (\varepsilon,\letter')\mid \letter\in\Sigma \land \letter'\in \Delta\}$ and set of equations 
$$\{\, (\letter,\varepsilon)(\varepsilon,\letter')\eqmod (\letter,\letter'), (\varepsilon,\letter')(\letter,\varepsilon)\eqmod (\letter,\letter') \mid  \letter\in\Sigma \land \letter'\in \Delta\,\}.$$
 For $R\subseteq \Sigma^*\times\Delta^*,$  the quotient of $R$ by a symbol is defined  as before but one needs to take in account the above equations.
 For instance, for $\letter\in \Sigma$ and $\letter'\in \Delta$:
\begin{eqnarray*}
	\quo{\pair{\letter}{\varepsilon}}R &=&\{\,(\varepsilon,\lettera)w\mid (\letter,\lettera)w\in R\},\\
\quo{\pair{\varepsilon}{\letter'}}R &=&\{\,(\letter,\varepsilon)w\mid (\letter,\letter')w\in R \}.
\end{eqnarray*}

Partial derivatives of a  regular expression  $\rer\in \reg [\Sigma_{\ews},\Delta_{\ews}]$ can be defined as for ordinary regular expressions, considering for $\tuple{x}{y} \in [\Sigma_{\ews},\Delta_{\ews}]$  the following:
\begin{eqnarray*}
	\partial_{\tuple{\letter}{\lettera}}(\tuple{\letter'}{\lettera'}) & = &\begin{cases}
\{\tuple{\ews}{\ews}\} & \text{if } \letter=\letter'\land \lettera=\lettera',\\
\{\tuple{\ews}{\lettera'}\} & \text{if } \letter=\letter'\land \lettera=\ews,\\
\{\tuple{\letter'}{\ews}\} & \text{if } \lettera=\lettera'\land \letter=\ews,\\
\emptyset & \text{otherwise}.
\end{cases}
\end{eqnarray*}
With the above definitions the results of Section~\ref{sec:resset} (and the usual ones for ordinary regular expressions) apply to regular expressions of type $[\Sigma_{\ews},\Delta_{\ews}]$. But, for every $w\in \Sigma^*\times\Delta^*$ it is needed to consider all it possible representations.
This means that one can define 
\begin{equation}\label{eq:pdwordsprod}
\Delta_{\tuple{w}{w'}}\rer=\bigcup_{w=\letter w_1}\partial_{\tuple{w_1}{w'}}(\partial_{\tuple{\letter}{\varepsilon}}\rer )\cup \bigcup_{w'=\lettera w_2}\partial_{\tuple{w}{w_2}}(\partial_{\tuple{\varepsilon}{\lettera}}\rer). 
\end{equation}
For  regular expressions with pairing specifications it is necessary to define the set of partial derivatives of a pairing specification w.r.t other pairing  specification. For $\spp, \spp'\in \srel[\Gamma]\setminus\{\tuple{\ews}{\ews}\}$ and $\rel{\spp}\not=\emptyset$ the set of partial derivatives w.r.t $\spp$ of a pairing spec $\spp'$ is inductively defined below. For each case, if the conditions  do not hold then the value of $\partial_{\spp}(\spp')$ is $\emptyset$.

\begin{eqnarray*}
	\partial_{\tuple{\ews}{\speca}}(\tuple{\ews}{\specb})&=&\{\tuple{\ews}{\ews}\}\;\;\text{ if } \speca\cap \specb\not=\bot,\\
	 \partial_{\tuple{\ews}{\speca}}(\tuple{\specb}{\ews})&=&\emptyset,\\
	 \partial_{\tuple{\ews}{\speca}}(\tuple{\specb}{\specc})&=&\{\tuple{\specb}{\ews}\}\;\;\text{ if } \speca\cap \specc\not=\bot,\\
	 \partial_{\tuple{\ews}{\speca}}(\ssame{\specb})&=&\{\tuple{\speca\cap\specb}{\ews}\}\;\; \text{ if } \speca\cap \specb\not=\bot,\\
	  \partial_{\tuple{\ews}{\speca}}(\sdiff{\specb}{\specc})&=&\begin{cases}
\{\tuple{\specb\cap \snone b}{\ews}\} & \text{ if } \lang(\speca\cap\specc)=\{b\} \land \lang(\specb)\setminus\{b\}\not=\emptyset,\\
\{\tuple{\specb}{\ews}\} & \text{ if } |\lang(\speca\cap\specc)| 	\geq 2,
\end{cases}\\
\partial_{\tuple{\speca}{\ews}}(\tuple{\ews}{\specb})&=&\emptyset,\\
	 \partial_{\tuple{\speca}{\ews}}(\tuple{\specb}{\ews})&=&\{\tuple{\ews}{\ews}\}\;\;\text{ if } \speca\cap \specb\not=\bot,\\
	 \partial_{\tuple{\speca}{\ews}}(\tuple{\specb}{\specc})&=&\{\tuple{\ews}{\specc}\}\;\;\text{ if } \speca\cap \specb\not=\bot,\\
	 \partial_{\tuple{\speca}{\ews}}(\ssame{\specb})&=&\{\tuple{\ews}{\speca\cap\specb}\}\;\; \text{ if } \speca\cap \specb\not=\bot,\\
	  \partial_{\tuple{\speca}{\ews}}(\sdiff{\specb}{\specc})&=&\begin{cases}
\{\tuple{\ews}{\specc\cap \snone b}\} & \text{ if } \lang(\speca\cap\specb)=\{b\} \land \lang(\specc)\setminus\{b\}\not=\emptyset,\\
\{\tuple{\ews}{\specc}\} & \text{ if } |\lang(\speca\cap\specb)| 	\geq 2,
\end{cases}\\
\partial_{\tuple{\speca}{\specd}}(\tuple{\ews}{\specb})&=&\emptyset,\\
	 \partial_{\tuple{\speca}{\specd}}(\tuple{\specb}{\ews})&=&\emptyset\\
	 \partial_{\tuple{\speca}{\specd}}(\tuple{\specb}{\specc})&=&\{\tuple{\ews}{\ews}\}\;\;\text{ if } \speca\cap \specb\not=\bot \land \specd\cap \specc\not=\bot  ,\\
	 \partial_{\tuple{\speca}{\specd}}(\ssame{\specb})&=&\{\tuple{\ews}{\ews}\}\;\; \text{ if } \speca\cap \specb\cap \specd\not=\bot,\\
	  \partial_{\tuple{\speca}{\specd}}(\sdiff{\specb}{\specc})&=&
\{\tuple{\ews}{\ews}\} \text{ if } 
\speca\cap \specb\not=\bot \land \specd\cap \specc \not=\bot\\
&& \text{ and if } 
\lang(\speca\cap\specb)=\{b\} \text{ then } \lang(\specd\cap \specc)\setminus\{b\}\not=\emptyset, \\
&& \text{ and if } \lang(\specd\cap\specc)=\{b\} \text{ then } \lang(\speca\cap \specb)\setminus\{b\}\not=\emptyset,\\
\partial_{\ssame{\speca}}(\tuple{\ews}{\specb})&=&\emptyset,\\
	 \partial_{\ssame{\speca}}(\tuple{\specb}{\ews})&=&\emptyset\\
	 \partial_{\ssame{\speca}}(\tuple{\specb}{\specc})&=&\{\tuple{\ews}{\ews}\}\;\;\text{ if } \speca\cap \specb\cap \specc\not=\bot, \\
	 \partial_{\ssame{\speca}}(\ssame{\specb})&=&\{\tuple{\ews}{\ews}\}\;\;\text{ if } \speca\cap \specb\not=\bot \\
	  \partial_{\ssame{\speca}}(\sdiff{\specb}{\specc})&=&
\emptyset,\\
\partial_{\sdiff{\speca}{\specd}}(\tuple{\ews}{\specb})&=&\emptyset,\\
	 \partial_{\sdiff{\speca}{\specd}}(\tuple{\specb}{\ews})&=&\emptyset\\
	 \partial_{\sdiff{\speca}{\specd}}(\tuple{\specb}{\specc})&=&\{\tuple{\ews}{\ews}\}\;\;\text{ if } 
\speca\cap \specb\not=\bot \land \specd\cap \specc \not=\bot\\
&& \text{ and if } 
\lang(\speca\cap\specb)=\{b\} \text{ then } \lang(\specd\cap \specc)\setminus\{b\}\not=\emptyset, \\
&& \text{ and if } \lang(\specd\cap\specc)=\{b\} \text{ then } \lang(\speca\cap \specb)\setminus\{b\}\not=\emptyset,\\
	 \partial_{\sdiff{\speca}{\specd}}(\ssame{\specb})&=&\emptyset,\\
	  \partial_{\sdiff{\speca}{\specd}}(\sdiff{\specb}{\specc})&=&
\{\tuple{\ews}{\ews}\} \text{ if } 
\speca\cap \specb\not=\bot \land \specd\cap \specc \not=\bot\\
&& \text{ and if } 
\lang(\speca\cap\specb)=\{b\} \text{ then } \lang(\specd\cap \specc)\setminus\{b\}\not=\emptyset, \\
&& \text{ and if } \lang(\specd\cap\specc)=\{b\} \text{ then } \lang(\speca\cap \specb)\setminus\{b\}\not=\emptyset. \\
\end{eqnarray*}

Using the  behaviours of each pairing specification we have:
\begin{lemma}
	For $\spp, \spp'\in \srel[\Gamma]\setminus\{\tuple{\ews}{\ews}\}$ and $\rel{\spp}\not=\emptyset$, $\rel{\partial_\spp(\spp')}=\quo{\rel{\spp}}{\rel{\spp'}}$.
\end{lemma}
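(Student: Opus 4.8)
The plan is to unfold both $\rel{\spp}$ and $\rel{\spp'}$ to the level of the monoid $\Gamma^*\times\Gamma^*$ and reduce the identity to a finite verification that follows, clause by clause, the inductive definition of $\partial_\spp(\spp')$. Two elementary facts organize everything. First, by Definition~\ref{DEF:spairs:rel} every element of $\rel{\spp'}$ is a pair $(x,y)$ with $|x|,|y|\le1$. Second, the quotient is additive, so
\[
\quo{\rel{\spp}}{\rel{\spp'}}=\bigcup_{(a,b)\in\rel{\spp}}\quo{(a,b)}{\rel{\spp'}},
\]
and $\quo{(a,b)}{\rel{\spp'}}=\{(u,v)\mid(au,bv)\in\rel{\spp'}\}$ by the definition of left quotient. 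The form of $\spp$ determines which coordinate of a pair in $\rel{\spp}$ already carries a letter: the left one unless $\spleft\spp=\ews$ (i.e.\ unless $\spp$ has the form $\ews/G$), and the right one unless $\sprset\spp=\ews$ (i.e.\ unless $\spp$ has the form $F/\ews$); since $\spp\neq\ews/\ews$ at least one of them does. By the first fact, whenever $\spp$ carries a letter in a coordinate, the matching residual coordinate is forced to be $\ew$. This pins the shape of every element of $\partial_\spp(\spp')$ to $\ews/\ews$, $\specb/\ews$, or $\ews/\specc$ exactly as in the definition, and in particular reduces every case where $\spp$ has one of the forms $F/G$, $\ssame F$, $\sdiff F G$ to the single question of whether $(\ew,\ew)$ lies in the quotient. (The standing hypothesis $\rel{\spp}\neq\eset$ is just what makes $\partial_\spp(\spp')$ defined; by Lemma~\ref{LEM:sym:inverse} it only rules out some diagonal-complement forms with a common singleton language.)

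Next I would split on the form of $\spp$ and, inside each, on the form of $\spp'$. For the ``product-type'' subcases, where $\spp'$ is one of $\ews/\specc$, $\specb/\ews$, $\specb/\specc$, the two coordinates decouple and the question becomes whether the letter $\spp$ delivers in a coordinate lies in the set $\spp'$ expects there; this is governed by $\lof{G\cap F}=\lof G\cap\lof F$ (Lemma~\ref{LEM:ssets:lang}, parts~1--2, together with $\lof F\neq\eset$) and by the fact, read off Definition~\ref{DEF:ssets}, that $G\cap F=\undef$ exactly when $\lof G\cap\lof F=\eset$ (for $G,F\neq\ews$). Taking the union over $(a,b)\in\rel{\spp}$ then yields precisely the set specs $\speca\cap\specb$, $\speca\cap\specc$, $\speca\cap\specb\cap\specd$, and so on, appearing on the right-hand sides of the definition, or $\eset$ when the relevant intersection is undefined. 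The subcases with $\spp'$ of the form $\ssame\specb$ are handled identically, adding only the constraint that the two delivered letters coincide; in particular $\partial_{\sdiff\speca\specd}(\ssame\specb)$ and $\partial_{\ssame\speca}(\sdiff\specb\specc)$ collapse to $\eset$, since $\rel{\ssame\specb}$ consists of pairs with equal coordinates while a diagonal-complement relation contains none.

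The delicate subcases are those with $\spp'$ of the form $\sdiff\specb\specc$, where admissibility of the residual is coupled to an inequation and hence interacts with the union over $\rel{\spp}$. I would argue these directly. For instance, for $\partial_{\ews/\speca}(\sdiff\specb\specc)$ the residual must be $(u,\ew)$ with $u\in\lof\specb$, and $u$ is admissible iff there is $z\in\lof{\speca\cap\specc}=\lof\speca\cap\lof\specc$ with $z\neq u$: if $\speca\cap\specc=\undef$ there is no such $z$, giving $\eset$; if $|\lof{\speca\cap\specc}|\ge2$ every $u\in\lof\specb$ works, giving $\rel{\specb/\ews}$; and if $\lof{\speca\cap\specc}=\{b\}$ the admissible set is $\lof\specb\setminus\{b\}=\lof{\specb\cap\snone b}$ when this is nonempty and $\eset$ otherwise — matching the three clauses of the definition verbatim. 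The mixed and symmetric cases ($\spp$ of the form $\speca/\specd$ against $\spp'=\sdiff\specb\specc$, and $\spp=\sdiff\speca\specd$ against $\spp'=\sdiff\specb\specc$) go the same way, now requiring the inequation to be satisfiable over the product $\lof{\speca\cap\specb}\times\lof{\specd\cap\specc}$, which fails precisely when both factors equal one and the same singleton; this is exactly the side condition ``if $\lof{\speca\cap\specb}=\{b\}$ then $\lof{\specd\cap\specc}\setminus\{b\}\neq\eset$, and if $\lof{\specd\cap\specc}=\{b\}$ then $\lof{\speca\cap\specb}\setminus\{b\}\neq\eset$'' written into the definition.

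I expect this singleton bookkeeping in the $\sdiff\specb\specc$ cases to be the only real obstacle — not conceptually deep, but the place where a mismatched boundary condition could slip through; everything else is routine once the length bound $|x|,|y|\le1$ and the set-spec calculus of Section~\ref{SEC:sym:sets} are in hand. As a consistency check I would confirm that specializing $\spp,\spp'$ to ordinary labels of $[\Sigma_{\ews},\Delta_{\ews}]$ recovers the word-level formulas for $\partial_{\tuple{\letter}{\varepsilon}}$ and $\partial_{\tuple{\varepsilon}{\lettera}}$ used earlier.
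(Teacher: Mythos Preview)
Your proposal is correct and follows essentially the same approach the paper indicates: the paper states only that the lemma follows ``using the behaviours of each pairing specification,'' i.e., a direct case-by-case verification from Definitions~\ref{DEF:spairs:rel} and the clauses defining $\partial_\spp(\spp')$, which is precisely the verification you outline in detail (including the singleton bookkeeping in the $\sdiff{\specb}{\specc}$ subcases).
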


Considering Equation~\eqref{eq:pdwordsprod} and the methods of Section~\ref{sec:resset} one can proof that the partial derivative construction for regular expressions with pairing specifications is correct, i.e., the transducer is equivalent to the expression.
\fi  

\section{Label Operations and the Product Construction}\mylabel{SEC:product}
We shall consider partial operations $\lop$ on label sets $B,B'$ such that, when defined, the product $\beta\odot\beta'$ of two labels belongs to a certain label set $C$. Moreover, we shall assume that $\lop$ is also a partial operation on $\mon B,\mon B'$ such that, when defined, the product $m\lop m'$ of two monoid elements belongs to $\mon C$. We shall call $\lop$ a \emdef{polymorphic} operation (in analogy to polymorphic operations in programming languages) when $\cali(\beta\odot\beta')=\cali_1(\beta)\lop\cali_1(\beta')$ where $\cali_1,\cali_2,\cali$ are the behaviours of $B,B',C$. This concept shall allow us to also use $\lop$ as the name of the  product construction on labelled graphs that respects the behaviours of the two graphs.

\begin{example}\label{EX:monoidops}\prlabel{EX:monoidops}
	We shall consider the following monoidal operations, which are better known when applied to subsets of the monoid.
	\begin{itemize}
    \item $\cap:\Sigma^*\times\Sigma^*\pto\Sigma^*$ such that $u\cap v=u$ if $u=v$; else, $u\cap v=\undef$. Of course, for any two languages $K,L\sseq\Sigma^*$, $K\cap L$ is the usual intersection of $K,L$.
    \item $\circ:(\Sigma_1^*\times\Delta^*)\times(\Delta^*\times\Sigma_2^*)\pto(\Sigma_1^*\times\Sigma_2^*)$ such that $(u,v)\circ (w,z)=(u,z)$ if $v=w$; else, $(u,v)\circ (w,z)=\undef$. For any two relations $R,S$, $R\circ S$ is the usual composition of $R,S$.
    \item $\rinp:(\Sigma^*\times\Delta^*)\times\Sigma^*\pto(\Sigma^*\times\Delta^*)$ such that $(u,v)\rinp w=(u,v)$ if $u=w$; else, $(u,v)\rinp w=\undef$. For a relation $R$ and language $L$, 
    \begin{equation}\label{EQ:rinp}\prlabel{EQ:rinp}
    	R\rinp L=R\cap (L\times\Delta^*).
    \end{equation} 
    \item $\rout:(\Sigma^*\times\Delta^*)\times\Sigma^*\pto(\Sigma^*\times\Delta^*)$ such that $(u,v)\rout w=(u,v)$ if $v=w$; else, $(u,v)\rinp w=\undef$. For a relation $R$ and language $L$, 
    \begin{equation}\label{EQ:rout}\prlabel{EQ:rout}
    	R\rout L=R\cap (\Sigma^*\times L).
    \end{equation} 
    \end{itemize}
\end{example}

\begin{definition}\label{DEF:polymorphic}\prlabel{DEF:polymorphic}
	Let $B,B',C$ be label sets with behaviours $\cali_1,\cali_2,\cali$, respectively. A \emdef{polymorphic operation} $\lop$ over $B,B',C$, denoted as ``$\odot:B\times B'\Rightarrow C$'', is defined as follows.
	\begin{itemize}
    \item It is a partial mapping:\qquad   $\odot:B\times B'\pto C$.
    \item It is a partial mapping:\qquad   $\odot:\mon B\times\mon B'\pto\mon C$.
    \item For all $\beta\in B$ and $\beta'\in B'$ we have
    \[
    \cali(\beta\lop\beta')=\cali_1(\beta)\lop\cali_2(\beta'),
    \]
    where we assume that $\cali(\beta\lop\beta')=\eset$, if $\beta\lop\beta'=\undef$; and we have used the notation
    \[
      S\lop S' = \{m\lop m'\mid m\in S, m'\in S', m\lop m'\neq\undef\}.
    \]
    for any $S\sseq \mon B$ and $S'\sseq \mon B'$.
    \end{itemize}
\end{definition}

\begin{example}\label{EX:polymor:old}\prlabel{EX:polymor:old}
	The following polymorphic operations are based on label sets of standard automata and transducers using the monoidal operations in Ex.~\ref{EX:monoidops}.
	\begin{itemize}
    \item ``$\cap:\Sigma_{\ews}\times\Sigma_{\ews}\Rightarrow\Sigma_{\ews}$'' is defined by 
    \begin{itemize}
        \item the partial operation $\cap: \Sigma_{\ews}\times\Sigma_{\ews}\pto\Sigma_{\ews}$ such that $x\cap y=x$, if $x=y$, else $x\cap y=\undef$; and 
        \item the partial operation $\cap:\Sigma^*\times\Sigma^*\pto\Sigma^*$. 
    \end{itemize}
    Obviously, $\lof{x\cap y}=\lof x\cap\lof y$.
    \item ``$\circ:[\Sigma_{\ews},\Delta_{\ews}]\times[\Delta_{\ews},\Sigma'_{\ews}]\Rightarrow[\Sigma_{\ews},\Sigma'_{\ews}]$'' is defined by 
        \begin{itemize}
        \item the operation $\circ:[\Sigma_{\ews},\Delta_{\ews}]\times[\Delta_{\ews},\Sigma'_{\ews}]\pto[\Sigma_{\ews},\Sigma'_{\ews}]$ such that $(x/y_1)\circ(y_2/z)=(x/z)$ if $y_1=y_2$, else $(x/y_1)\circ(y_2/z)=\undef$; and 
        \item the operation $\circ:(\Sigma^*\times\Delta^*)\times(\Delta^*\times\Sigma'^*)\pto(\Sigma^*\times\Sigma'^*)$.
        \end{itemize} 
        Obviously, $\rel{(x,y_1)\circ (y_2,z)}=\rel{(x,y_1)}\circ\rel{(y_2,z)}$.
    \item ``$\rinp:[\Sigma_{\ews},\Delta_{\ews}]\times\Sigma_{\ews}\Rightarrow[\Sigma_{\ews},\Delta_{\ews}]$'' is defined by 
        \begin{itemize}
        \item the operation $\rinp:[\Sigma_{\ews},\Delta_{\ews}]\times\Sigma_{\ews}\pto[\Sigma_{\ews},\Delta_{\ews}]$ such that $(x/y)\rinp z=(x/y)$ if $x=z$, else $(x/y)\rinp z=\undef$; and 
        \item the operation $\rinp:(\Sigma^*\times\Delta^*)\times \Sigma^*\pto(\Sigma^*\times\Delta^*)$. 
        \end{itemize} 
        Obviously, $\rel{(x/y)\rinp z}=\rel{x/y}\rinp\lof{z}$.
    \item ``$\rout:[\Sigma_{\ews},\Delta_{\ews}]\times\Delta_{\ews}\Rightarrow[\Sigma_{\ews},\Delta_{\ews}]$'' is defined by 
        \begin{itemize}
        \item the operation $\rout:[\Sigma_{\ews},\Delta_{\ews}]\times\Delta_{\ews}\pto[\Sigma_{\ews},\Delta_{\ews}]$ such that $(x/y)\rout z=(x/y)$ if $x=z$, else $(x/y)\rout z=\undef$; and 
        \item the operation $\rout:(\Sigma^*\times\Delta^*)\times \Sigma^*\pto(\Sigma^*\times\Delta^*)$. 
        \end{itemize} 
        Obviously, $\rel{(x/y)\rout z}=\rel{x/y}\rout\lof{z}$.
    \end{itemize}
\end{example}

\begin{example}\label{EX:polymor:new}\prlabel{EX:polymor:new}
    The following polymorphic operations are based on label sets of  automata and transducers with set specs.
    \begin{itemize}
    \item ``$\cap:\sspec[\Gamma]\times\sspec[\Gamma]\Rightarrow\sspec[\Gamma]$'' is defined by the partial operation
    $\cap:\sspec[\Gamma]\times\sspec[\Gamma]\pto\sspec[\Gamma]$, according to Def.~\ref{DEF:ssets}, and the partial operation $\cap:\Gamma^*\times\Gamma^*\pto\Gamma^*$. By Lemma~\ref{LEM:ssets:lang}, for any $B,F\in\sspec[\Gamma]$, we have that $$\lof{B\cap F}=\lof B\cap \lof F.$$
    \item ``$\rinp:\srel[\Gamma]\times\Gamma_{\ews}\Rightarrow\srel[\Gamma]$'' is defined as follows. First, by the partial operation $\rinp:\srel[\Gamma]\times\Gamma_{\ews}\pto\srel[\Gamma]$ such that 
\[
\spp\rinp x=\begin{cases}
	\ews/\spright\spp,& \text{if $x=\ews$ and $\spleft\spp=\ews$;} \\
	\sone x/\spright\spp,& \text{if $x,\spleft\spp\neq\ews$ and $x\in\lof{\spleft\spp}$}; \\
	\bot, & \text{otherwise. } 
\end{cases}
\]
    Second, by the partial operation $\rinp:(\Sigma^*\times\Delta^*)\times\Sigma^*\pto(\Sigma^*\times\Delta^*)$. We have that
    \[\rel{\spp\rinp x}=\rel{\spp}\rinp\lof x \]
    Moreover we have that $\spp\rinp x$ can be computed from $\spp$ and $x$ in time $O(\szabc\spp)$.
    \item ``$\rout:\srel[\Gamma]\times\Delta_{\ews}\Rightarrow\srel[\Gamma]$'' is defined as follows. First, by the partial operation $\rout:\srel[\Gamma]\times\Delta_{\ews}\pto\srel[\Gamma]$ such that $\spp\rout x=(\spp^{-1}\rinp x)^{-1}$. Second, by the partial operation $\rout:(\Sigma^*\times\Delta^*)\times\Delta^*\pto(\Sigma^*\times\Delta^*)$. We have that
        \[\rel{\spp\rout x}=\rel{\spp}\rout\lof x \]
        Moreover we have that $\spp\rout x$ can be computed from $\spp$ and $x$ in time $O(\szabc\spp)$.
   \end{itemize}
   Further below, in Sect.~\ref{SEC:compose}, we define the polymorphic operation `$\circ$' between pairing specs.
\end{example}

\begin{definition}\label{DEF:product}\prlabel{DEF:product}
	Let $\autg=(Q,B,\delta,I,F)$ and $\autg'=(Q',B',\delta',I',F')$ be type $B$ and $B'$, respectively,    graphs and let ``$\lop:B\times B'\Rightarrow C$''  be a polymorphic operation. The \emdef{product} $\autg\lop\autg'$ is the type $C$   graph 
	$$\big(P,C,\delta\lop\delta',I\times I',F\times F'\big)$$ 
	defined as follows. First make the following two possible modifications on $\autg,\autg'$: if there is a label $\beta$ in $\autg$ such that $\ew_{\mon B}\in\cali(\beta)$ then modify $\autg'$ to $\autg'^{\ew}$; and if there is a label $\beta'$ in $\autg'$ (before being modified) such that $\ew_{\mon B'}\in\cali(\beta')$ then modify $\autg'$ to $\autg'^{\ew}$. In any case, use the same names $\autg$ and $\autg'$ independently of whether they were modified. Then $P$ and $\delta\lop\delta'$ are defined inductively as follows: 
	\begin{enumerate}
		\item $I\times I'\subseteq P$.
		\item If $(p,p')\in P$ and there are $(p,\beta,q)\in\delta$ and $(p',\beta',q')\in\delta'$ with $\beta\lop \beta'\neq\undef$ then $(q,q')\in P$ and $\big((p,p'),\beta\lop\beta',(q,q')\big)\in\delta\lop\delta'$. 
	\end{enumerate}
\end{definition}

\begin{example}\label{EX:product}\prlabel{EX:product}
	Here we recall three known examples of product constructions involving automata and transducers. 
	\begin{enumerate}
    \item For two $\ew$-NFAs $\auta,\auta'$, using the polymorphic operation ``$\cap:\Sigma_{\ews}\times\Sigma_{\ews}\Rightarrow\Sigma_{\ews}$'', the product construction produces the $\ew$-NFA  $\auta\cap\auta'$ such that
        \[\lof{\auta\cap\auta'} =\lof\auta\cap\lof{\auta'} .\]
        Note that if $\auta,\auta'$ are NFAs then also $\auta\cap\auta'$ is an NFA.
    \item For two transducers $\trt,\trt'$, using the polymorphic operation ``$\circ:[\Sigma_{\ews},\Delta_{\ews}]\times[\Delta_{\ews},\Sigma'_{\ews}]\Rightarrow[\Sigma_{\ews},\Sigma'_{\ews}]$'', the product construction produces the transducer  $\trt\circ\trt'$  such that
        \[\rel{\trt\circ\trt'} =\rel\trt\circ\rel{\trt'}.\]
    \item For a transducer $\trt$ and an automaton $\auta$, using the polymorphic operation ``$\rinp:[\Sigma_{\ews},\Delta_{\ews}]\times\Sigma_{\ews}\Rightarrow[\Sigma_{\ews},\Delta_{\ews}]$'', the product construction produces the transducer  $\trt\rinp\auta$  such that
        \[\rel{\trt\rinp\auta} =\rel\trt\rinp\lof{\auta}.\]
        Similarly, using the polymorphic operation ``$\rout:[\Sigma_{\ews},\Delta_{\ews}]\times\Delta_{\ews}\Rightarrow[\Sigma_{\ews},\Delta_{\ews}]$'', the product construction produces the transducer  $\trt\rout\auta$  such that
        \[\rel{\trt\rout\auta} =\rel\trt\rout\lof{\auta}.\]
        These product constructions were used in \cite{Kon:2002} to answer algorithmic questions about independent languages---see Sect.~\ref{SEC:independence}.
    \end{enumerate}
\end{example}

\begin{lemma}\label{LEM:product}\prlabel{LEM:product}
	The following statements hold true about the product graph $\autg\lop\autg'=(P,C,\delta\lop\delta',I\times I',F\times F')$ of two trim labelled graphs $\autg,\autg'$ as defined in Def.~\ref{DEF:product}. 
	\begin{enumerate}
		\item $\card P=O(\card{\delta}\card{\delta'})$ and $\card{\delta\lop\delta'}\le \card{\delta}\card{\delta'}$.
		\item If the value $\beta\lop\beta'$ can be computed from the labels $\beta$ and $\beta'$ in time, and is of size, $O(\szabc\beta+\szabc{\beta'})$, then $\szw{\delta\lop\delta'}$ is of magnitude $O(\card\delta\szw{\delta'}+\card{\delta'}\szw{\delta})$ and $\delta\lop\delta'$ can be computed within time of the same order of magnitude. 
	\end{enumerate}
\end{lemma}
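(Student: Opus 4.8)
The plan is to establish both parts by straightforward counting; the only points needing care are that Definition~\ref{DEF:product} internally replaces $\autg$ or $\autg'$ by $\autg^{\ew}$ or $(\autg')^{\ew}$, and that the product construction must be organized so that its work is billed against pairs of edges that share a reachable source pair. First I would dispose of the modification. By Remark~\ref{REM:eloops}, $\szabc{\autg^{\ew}}=O(\szabc{\autg})$; concretely, passing from $\autg$ to $\autg^{\ew}$ leaves $Q$ unchanged and adds at most $\card Q$ transitions, each of size $O(1)$, so $\card{\delta^{\ew}}\le\card\delta+\card Q$ and $\szw{\delta^{\ew}}=\szw\delta+O(\card Q)$. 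Since $\autg,\autg'$ are trim, Lemma~\ref{LEM:trim} gives $\card Q=O(\card\delta+1)$ and $\card{Q'}=O(\card{\delta'}+1)$, so each of $\card\delta,\card{\delta'},\szw\delta,\szw{\delta'}$ changes only by a constant factor under the modification, and deciding which modification to apply is one scan costing $O(\card\delta+\card{\delta'})$ as long as $\ew_{\mon B}\in\cali(\beta)$ is a constant-time test per label (as it is for the label sets considered here). Hence from now on I assume $\autg,\autg'$ are already in the (possibly) modified form, with $\delta,\delta'$ the corresponding transition sets, and all bounds are read in these terms.

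For part~1, an immediate induction on the clauses defining $P$ shows $P\subseteq Q\times Q'$, so $\card P\le\card Q\,\card{Q'}=O(\card\delta\,\card{\delta'})$ by Lemma~\ref{LEM:trim}. For the transitions, every element of $\delta\lop\delta'$ is created in clause~2 of Definition~\ref{DEF:product} from at least one pair $\big((p,\beta,q),(p',\beta',q')\big)\in\delta\times\delta'$ with $\beta\lop\beta'\neq\undef$ and $(p,p')\in P$; as $\delta\lop\delta'$ is a set, $\card{\delta\lop\delta'}\le\card\delta\,\card{\delta'}$.

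For part~2, consider first the size. Write $\szw{\delta\lop\delta'}=\sum(1+\szabc\gamma)$, the sum taken over the product edges $\big((p,p'),\gamma,(q,q')\big)$. Each such $\gamma$ equals $\beta\lop\beta'$ for some pair of source edges with labels $\beta,\beta'$, so by the hypothesis on $\lop$ we have $1+\szabc\gamma=O(1+\szabc\beta+\szabc{\beta'})$; fixing one such witnessing pair for each product edge and then enlarging the sum to range over all of $\delta\times\delta'$,
\[
\szw{\delta\lop\delta'}=O\Big(\card\delta\,\card{\delta'}+\card{\delta'}\sum_{e\in\delta}\szabc{\beta_e}+\card\delta\sum_{e'\in\delta'}\szabc{\beta'_{e'}}\Big).
\]
Now $\sum_{e\in\delta}\szabc{\beta_e}=\szw\delta-\card\delta\le\szw\delta$ and likewise $\sum_{e'\in\delta'}\szabc{\beta'_{e'}}\le\szw{\delta'}$, while $\szw{\delta'}\ge\card{\delta'}$ lets the term $\card\delta\,\card{\delta'}$ be absorbed into $\card\delta\,\szw{\delta'}$; hence $\szw{\delta\lop\delta'}=O(\card\delta\,\szw{\delta'}+\card{\delta'}\,\szw\delta)$.

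Finally, the running time. I would index $\delta$ and $\delta'$ by source state (time $O(\szw\delta+\szw{\delta'})$, which is absorbed into the bound below once both graphs have an edge), then explore $P$ by breadth-first search from $I\times I'$, maintaining membership in $P$ in $O(1)$ per pair; each reachable $(p,p')$ is expanded exactly once, its expansion iterating over the edges $\delta_p$ out of $p$ and $\delta'_{p'}$ out of $p'$ and spending $O(1+\szabc{\beta_e}+\szabc{\beta'_{e'}})$ on each pair (to compute $\beta_e\lop\beta'_{e'}$ and, when defined, emit the product edge and enqueue its target if new). Thus the total time is at most
\[
O\Big(\sum_{(p,p')\in Q\times Q'}\ \sum_{e\in\delta_p,\,e'\in\delta'_{p'}}\big(1+\szabc{\beta_e}+\szabc{\beta'_{e'}}\big)\Big),
\]
which factorizes using $\sum_p\card{\delta_p}=\card\delta$, $\sum_p\sum_{e\in\delta_p}\szabc{\beta_e}=\sum_{e\in\delta}\szabc{\beta_e}\le\szw\delta$ and the symmetric identities, yielding $O(\card\delta\,\card{\delta'}+\card{\delta'}\,\szw\delta+\card\delta\,\szw{\delta'})=O(\card\delta\,\szw{\delta'}+\card{\delta'}\,\szw\delta)$, which matches the size bound. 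The main obstacle is thus purely the accounting: absorbing the $\ew$-loop modification (where trimness and Remark~\ref{REM:eloops} are used) and billing the work to pairs of co-reachable edges rather than to all of $P$ times $\delta\cup\delta'$; the estimate $\sum_{(p,p')}\card{\delta_p}\,\card{\delta'_{p'}}\le\big(\sum_p\card{\delta_p}\big)\big(\sum_{p'}\card{\delta'_{p'}}\big)=\card\delta\,\card{\delta'}$ is exactly what keeps the bound tight.
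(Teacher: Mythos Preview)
Your proof is correct and follows essentially the same approach as the paper: bound $\card P$ via $P\subseteq Q\times Q'$ and Lemma~\ref{LEM:trim}, bound $\card{\delta\lop\delta'}$ by the number of edge pairs, and estimate both the size and the computation time by summing $O(\szabc\beta+\szabc{\beta'})$ over all pairs in $\delta\times\delta'$. You are in fact more careful than the paper's own proof, which silently ignores the $\autg^{\ew}$ modification of Definition~\ref{DEF:product} and does not spell out the BFS organization; your explicit treatment of both points, together with the absorption of $\card\delta\,\card{\delta'}$ into $\card\delta\,\szw{\delta'}$, is a welcome clarification rather than a deviation.
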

\begin{proof}
	As $P\sseq Q\times Q'$, Lemma~\ref{LEM:trim} implies that $\card P\le (2\card\delta+1)(2\card{\delta'}+1)$, so $\card{P}=O(\card\delta\card{\delta'})$. As we get at most one transition in $\delta\lop\delta'$ for each pair of transitions in $\delta$ and $\delta'$, we have that $\card{\delta\lop\delta'}\le \card{\delta}\card{\delta'}$. For the second statement, we have that $\delta\lop\delta'$ can be computed in time
		\[
		\sum_{(p,\beta,q)\in\delta}\>\sum_{(p',\beta',q')\in\delta'}C_{\beta,\beta'}
		\]
		where $C_{\beta,\beta'}$ is the cost of computing the value $\beta\lop\beta'$ from the labels $\beta$ and $\beta'$. Then, the statement follows using standard summation manipulations and the premise that $C_{\beta,\beta'}$ is of magnitude $O(\szabc\beta+\szabc{\beta'})$. 
\end{proof}

\begin{theorem}\label{TH:product}\prlabel{TH:product}
	If ``$\lop:B\times B'\Rightarrow C$'' is a polymorphic operation and $\autg,\autg'$ are type $B,B'$, respectively, graphs   then $\autg\lop\autg'$ is a type $C$ graph such that
	\[
	\cali(\autg\lop\autg')=\cali(\exp\autg\lop\exp\autg').
	\]
\end{theorem}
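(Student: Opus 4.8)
The plan is to establish the two assertions of the theorem in turn. That $\autg\lop\autg'$ is a type $C$ graph is immediate: by Definition~\ref{DEF:product} every transition of $\autg\lop\autg'$ carries a label of the form $\beta\lop\beta'$, and by the first clause of Definition~\ref{DEF:polymorphic} such a value, when it is not $\undef$, belongs to $C$; hence $\Labels(\autg\lop\autg')\sseq C$. (The $\ew$-self-loops that the modification step of Definition~\ref{DEF:product} may add carry the canonical label $\underline{\ew_{\mon B}}$, which by the Notational Convention of Section~\ref{SEC:labelsets} may be regarded as a label with behaviour $\{\ew_{\mon B}\}$, and on which $\lop$ acts by $\underline{\ew_{\mon B}}\lop\underline{m'}=\underline{\,\ew_{\mon B}\lop m'\,}$.) The substance is the behaviour identity, which I would prove by two inclusions, arguing throughout in terms of accepting paths and the monoid products along them.

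The preliminary observation I would make first is that the $\ew$-self-loop modification commutes with expansion. Since $\cali(\underline{\ew_{\mon B}})=\{\ew_{\mon B}\}$, a graph has a label whose behaviour contains $\ew_{\mon B}$ if and only if its expansion has the label $\underline{\ew_{\mon B}}$; therefore the modification prescribed by Definition~\ref{DEF:product} is applied to the same side of $(\autg,\autg')$ as to $(\exp\autg,\exp\autg')$. Writing $\autg_1,\autg_1'$ for the modified graphs, this yields $\exp(\autg_1)=(\exp\autg)_1$ and $\exp(\autg_1')=(\exp\autg')_1$, so $\exp\autg\lop\exp\autg'$ is exactly the product of $\exp(\autg_1)$ and $\exp(\autg_1')$ built with the monoid-level part of $\lop$; its transitions are labelled $\underline m\lop\underline{m'}=\underline{\,m\lop m'\,}$, so by Definition~\ref{DEF:graph:behave} an element $n$ lies in $\cali(\exp\autg\lop\exp\autg')$ precisely when there is an accepting path $\sequ{(p_{i-1},p_{i-1}'),\underline{\,m_i\lop m_i'\,},(p_i,p_i')}_{i=1}^\ell$ in it with each $m_i\lop m_i'\neq\undef$ and $n=(m_1\lop m_1')\cdots(m_\ell\lop m_\ell')$.

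For the inclusion $\sseq$, I would start from $n\in\cali(\autg\lop\autg')$ and a witnessing accepting path of $\autg\lop\autg'$ with labels $\beta_i\lop\beta_i'$; the polymorphic identity $\cali(\beta_i\lop\beta_i')=\cali_1(\beta_i)\lop\cali_2(\beta_i')$ of Definition~\ref{DEF:polymorphic} then lets me choose $m_i\in\cali_1(\beta_i)$ and $m_i'\in\cali_2(\beta_i')$ with $m_i\lop m_i'\neq\undef$ and $n=(m_1\lop m_1')\cdots(m_\ell\lop m_\ell')$. Since $m_i\in\cali_1(\beta_i)$ and $(p_{i-1},\beta_i,p_i)$ is an edge of $\autg_1$, the triple $(p_{i-1},\underline{m_i},p_i)$ is an edge of $\exp(\autg_1)$, and similarly on the primed side; an induction on $i$ along the path (base case $I\times I'\sseq P$, inductive step the second clause of Definition~\ref{DEF:product}) then shows that the lifted sequence $\sequ{(p_{i-1},p_{i-1}'),\underline{\,m_i\lop m_i'\,},(p_i,p_i')}_{i=1}^\ell$ is an accepting path of $\exp\autg\lop\exp\autg'$, which delivers $n$. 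For the reverse inclusion $\supseteq$ I would run the mirror-image argument: from a witnessing path of $\exp\autg\lop\exp\autg'$ I pick, for each $i$, an edge $(p_{i-1},\beta_i,p_i)$ of $\autg_1$ with $m_i\in\cali_1(\beta_i)$ and an edge $(p_{i-1}',\beta_i',p_i')$ of $\autg_1'$ with $m_i'\in\cali_2(\beta_i')$; then $m_i\lop m_i'\in\cali_1(\beta_i)\lop\cali_2(\beta_i')=\cali(\beta_i\lop\beta_i')$, so $\beta_i\lop\beta_i'\neq\undef$, and the same inductive bookkeeping places $\sequ{(p_{i-1},p_{i-1}'),\beta_i\lop\beta_i',(p_i,p_i')}_{i=1}^\ell$ inside $\autg\lop\autg'$ with $n\in\cali(\beta_1\lop\beta_1')\cdots\cali(\beta_\ell\lop\beta_\ell')\sseq\cali(\autg\lop\autg')$.

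I expect the main obstacle to be purely the bookkeeping of the product state set, not anything conceptual. Because $P$ in Definition~\ref{DEF:product} is generated inductively, and is in general a proper subset of $Q\times Q'$ --- and, moreover, $P$ need not coincide with the state set of $\exp\autg\lop\exp\autg'$, since a label $\beta\lop\beta'$ may be defined yet have empty behaviour --- one cannot simply assert that a constructed sequence of consecutive edges is a path of the product graph: each intermediate state pair must be shown reachable, which is exactly what the induction along the path in each inclusion accomplishes. The only other point needing care is the ``commutes with expansion'' observation about the $\ew$-self-loop modification; the rest is an unwinding of Definitions~\ref{DEF:graph:behave}, \ref{DEF:polymorphic} and~\ref{DEF:product} together with the polymorphic identity.
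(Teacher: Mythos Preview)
Your proposal is correct and follows essentially the same approach as the paper: both directions are proved by translating accepting paths of one product into accepting paths of the other, using the polymorphic identity $\cali(\beta\lop\beta')=\cali_1(\beta)\lop\cali_2(\beta')$ to pass between label-level and monoid-level products. You are in fact more careful than the paper on two points it glosses over---the commutation of the $\ew$-self-loop modification with expansion, and the inductive verification that the constructed state pairs actually lie in the inductively defined set $P$---but the core argument is the same.
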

\begin{proof}
	Recall that  each transition $(p,\underline m,q)$ of $\exp\autg$ comes from a corresponding transition $(p,\beta,q)$ of $\autg$ such that $m\in\cali_1(\beta)$; and similarly each transition $(p',\underline{m'},q')$ of $\exp\autg'$ comes from a corresponding transition $(p',\beta',q')$ of $\autg'$ such that $m'\in\cali_2(\beta')$; where we used $\cali_1,\cali_2$ for the behaviours of $B,B'$. Also, if $\beta\lop\beta'\neq\undef$ and $m\lop m'\neq\undef$ then 
	\pssi $\big((p,p'),\beta\lop\beta',(q,q')\big)$ is a transition of $\autg\lop\autg'$ and
	\pnsi $\big((p,p'),\underline{m\lop m'},(q,q')\big)$ is a transition of $(\exp\autg\lop\exp\autg')$.
	\pssn
	First consider any $m\in \cI(\exp\autg\lop\exp\autg')$. Then $\exp\autg\lop\exp\autg'$  has an  accepting path 
	\[\sequ{(q_{i-1},q_{i-1}'),\underline{m_i\lop m_i'},(q_i,q_i')}_{i=1}^\ell \text{ such that } m=(m_1\lop m_1')\cdots(m_\ell\lop m_{\ell}').\]
	Then, for each $i=1,\ldots,\ell$, there is a transition $(q_{i-1},\beta_i,q_i)$ of $\autg$ with $m_i\in\cali_1(\beta_i)$; and similarly for $\autg'$, we have $m_i'\in\cali_2(\beta_i')$. Then,
	\[
	(m_i\lop m_i')\in\cali(\beta_i)\lop\cali(\beta_i')=\cI(\beta_i\lop\beta_i')
	\]
	Moreover, $\autg\lop\autg'$ has the accepting path
	\[
	\sequ{(q_{i-1},q_{i-1}'),\beta_i\lop \beta_i',(q_i,q_i')}_{i=1}^\ell
	\]
	which implies that $\cI(\beta_1\lop\beta_1')\cdots\cI(\beta_\ell\lop\beta_\ell')\sseq\cI(\autg\lop\autg')$. Hence, $m\in\cI(\autg\lop\autg')$.
	\underline{Conversely}, consider any $m\in\cI(\autg\lop\autg')$. Then $\autg\lop\autg'$  has an  accepting path 
	\[\sequ{(q_{i-1},q_{i-1}'),\beta_i\lop \beta_i',(q_i,q_i')}_{i=1}^\ell \text{ such that } m=m_1\cdots m_\ell\]
	and each $m_i\in\cI(\beta_i\lop\beta_i')=\cali_1(\beta_i)\lop\cali_2(\beta_i')$, which implies that 
	\[\mbox{each $m_i=k_i\lop k_i'$ with $k_i\in\cali_1(\beta_i)$ and $k_i'\in\cali_2(\beta_i')$.}\]
	Then, for each $i=1,\ldots,\ell$, there is a transition $(q_{i-1},\underline{k_i},q_i)$ of $\exp\autg$  and similarly there is a transition $(q_{i-1}',\underline{k_i'},q_i')$  of $\autg'$. 	Then $\exp\autg\lop\exp\autg'$ has the accepting path
	\[
	\sequ{(q_{i-1},q_{i-1}'),\underline{k_i\lop k_i'},(q_i,q_i')}_{i=1}^\ell
	\]
	which implies that $(k_1\lop k_1')\cdots(k_\ell\lop k_\ell')\in\cI(\exp\autg\lop\exp\autg')$. Hence, $m\in\cI(\exp\autg\lop\exp\autg')$. $\Box$
\end{proof}

\pnsn\textbf{How to apply the above theorem.} We can apply the theorem when we have a known product construction $\odot$ on labelled graphs $\tru,\tru'$ over monoids $M,M'$ (see Ex.~\ref{EX:product}) and we wish to apply a `higher level' version of $\odot$; that is, apply $\odot$ on labelled graphs $\autg,\autg'$ with behaviours in the monoids $M,M'$. This would avoid expanding $\autg$ and $\autg'$. We apply the theorem in Lemma~\ref{LEM:aut}.2, in Theorem~\ref{TH:compose} and in Corollary~\ref{LEM:transd:restr}.

\section{Automata and Transducers with Set Specifications}\label{SEC:sym:nfa}\prlabel{SEC:sym:nfa}
Here we present some basic algorithms on automata and transducers with set specs. These can be applied to answer the satisfaction question about independent languages (see Section~\ref{SEC:independence}).

\begin{remark}
	For every $\ew$-NFA $\auta  = ( Q, \Gamma_{\ews}, \delta, I, F )$, one can make in linear time an automaton with set specs $\auta'  = ( Q, \sspec[\Gamma], \delta', I, F )$ such that, $\delta'$ consists of all transitions $(p,\ews,q)\in\delta$ union all transitions $(p,\sone g,q)$ where $(p,g,q)\in\delta$ and $g\in\Gamma$.
\end{remark}

\begin{lemma}\label{LEM:aut}\prlabel{LEM:aut}
In the statements below, 
$$\autb=(Q, \sspec[\Gamma], \delta, I, F ) \quad\text{and}\quad \autb'=(Q', \sspec[\Gamma], \delta', I', F')$$ 
are trim automata with set specs and $w$ is a string.
\begin{enumerate}
	\item There is a $O(\sz{\autb})$ algorithm $\nonEmptyW(\autb)$ returning either a word in $\lof{\autb}$, or \None if $\lof{\autb}=\emptyset$. The decision version of this algorithm, $\emptyP(\autb)$, simply returns whether $\lof\autb$ is empty.
	\item There is a $O(\card\Gamma+\card{\delta}\szw{\delta'}+\card{\delta'}\szw{\delta})$ algorithm returning the automaton with set specs $\autb\cap\autb'$ such that $\lof{\autb\cap\autb'}=\lof{\autb}\cap\lof{\autb'}$.
	\item There is a $O(|w|\sz{\autb})$ algorithm returning whether $w\in\lof{\autb}$.
\end{enumerate}
\end{lemma}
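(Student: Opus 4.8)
The plan is to prove the three items separately, each by an explicit algorithm resting on the machinery already developed; in every case correctness is a short argument and the running time comes from Lemma~\ref{LEM:ssets}, Lemma~\ref{LEM:ssets:algos} and Lemma~\ref{LEM:product}. For item~1, the algorithm first runs a breadth-first search from $I$ over the underlying directed graph of $\autb$, recording for every visited state a predecessor transition; this costs $O(\sz\autb)$ since it touches each state and each transition once. If no state of $F$ is visited it returns \None; for a trim $\autb$ every state (in particular every initial one) is useful, so this case means $\lof\autb=\eset$, and in general the search correctly detects that no accepting path exists. Otherwise it follows predecessor transitions back from a reached final state to a state of $I$, obtaining a \emph{simple} accepting path $\sequ{q_{i-1},\beta_i,q_i}_{i=1}^\ell$ with $\ell\le\card Q$, and for each $i$ it produces a string $w_i\in\lof{\beta_i}$: $w_i=\ew$ if $\beta_i=\ews$, otherwise $w_i$ is a single $\Gamma$-symbol computed in time $O(\szabc{\beta_i})$ by Lemma~\ref{LEM:ssets:algos}(5). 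It returns $w=w_1\cdots w_\ell$; correctness is immediate since $w\in\cali(\beta_1)\cdots\cali(\beta_\ell)\sseq\lof\autb$, and as the $\beta_i$ are distinct transitions of a simple path the extra cost $\sum_i O(\szabc{\beta_i})$ is $O(\szw\delta)=O(\sz\autb)$ and $\card w\le\ell=O(\sz\autb)$. The decision version $\emptyP$ merely reports whether a final state was reached.

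For item~2, I would take $\autb\cap\autb'$ to be the product graph of Definition~\ref{DEF:product} for the polymorphic operation ``$\cap:\sspec[\Gamma]\times\sspec[\Gamma]\Rightarrow\sspec[\Gamma]$'' of Example~\ref{EX:polymor:new}; by construction this is again an automaton with set specs. Correctness follows from Theorem~\ref{TH:product}, which gives $\cali(\autb\cap\autb')=\cali(\exp\autb\cap\exp\autb')$: by Lemma~\ref{LEM:exp} the expansions $\exp\autb,\exp\autb'$ are $\ew$-NFAs, and the ``$\cap$''-product of two $\ew$-NFAs is exactly their intersection automaton (Example~\ref{EX:product}(1)), so $\cali(\exp\autb\cap\exp\autb')=\lof{\exp\autb}\cap\lof{\exp\autb'}=\lof\autb\cap\lof{\autb'}$, the last equality because a labelled graph and its expansion have the same behaviour. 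For the running time, Lemma~\ref{LEM:ssets} gives that $\beta\cap\beta'$ is computable in time $O(\szabc\beta+\szabc{\beta'})$, and inspection of Definition~\ref{DEF:ssets} shows its output has size $O(\szabc\beta+\szabc{\beta'})$ as well; hence Lemma~\ref{LEM:product}(2) yields that the state set and transition set of $\autb\cap\autb'$ are computed in time $O(\card\delta\szw{\delta'}+\card{\delta'}\szw\delta)$, the additive $O(\card\Gamma)$ absorbing the one-off cost of having $\card\Gamma$ and the array $\garr$ available.

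For item~3, the algorithm is the standard on-the-fly subset simulation, adapted to set-spec labels. It maintains the set $S_i\sseq Q$ of states reachable on reading the prefix $w[1..i]$, with $S_0$ the closure of $I$ under transitions labelled $\ews$, and obtains $S_{i+1}$ from $S_i$ by scanning, for each $p\in S_i$, every outgoing transition $(p,\beta,q)$ with $\beta\neq\ews$, testing $w[i{+}1]\in\lof\beta$ in time $O(\log\szabc\beta)$ by Lemma~\ref{LEM:ssets:algos}(1) (no transition fires if $w[i{+}1]\notin\Gamma$), collecting the successors and then taking the $\ews$-closure; it accepts iff $S_{\card w}\cap F\neq\eset$. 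Correctness is the usual subset-construction invariant, using that for $\beta\neq\ews$ the language $\lof\beta$ consists of single $\Gamma$-symbols while $\lof\ews=\{\ew\}$, so one input symbol corresponds to traversing exactly one non-$\ews$ transition whose label contains it, interleaved with $\ews$-moves. Each of the $O(\card w)$ rounds costs $O(\card Q+\card\delta+\sum_{(p,\beta,q)\in\delta}\log\szabc\beta)=O(\sz\autb)$ since $\log\szabc\beta\le\szabc\beta$, so the total is $O(\card w\,\sz\autb)$ (more precisely $O((\card w+1)\sz\autb)$, which also covers the initial closure).

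I expect the only step needing real care to be item~2: one must separately verify \emph{both} the time bound \emph{and} the output-size bound for the set-spec meet $\beta\cap\beta'$ so that Lemma~\ref{LEM:product}(2) applies as stated, and one must pass through Theorem~\ref{TH:product} together with Example~\ref{EX:product}(1) to be certain that the ``high-level'' product computes exactly $\lof\autb\cap\lof{\autb'}$ rather than some coarser set. Items~1 and~3 are routine simulations, the only subtlety being to keep the per-round cost linear in $\sz\autb$ by bounding each membership or element query on a set spec $\beta$ by $O(\szabc\beta)$.
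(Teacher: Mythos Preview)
Your arguments for items~1 and~2 are essentially the paper's own: the same BFS with on-the-fly extraction of a symbol from each label via Lemma~\ref{LEM:ssets:algos}(5), and the same appeal to the polymorphic product (Definition~\ref{DEF:product}, Example~\ref{EX:polymor:new}, Theorem~\ref{TH:product}, Example~\ref{EX:product}(1), Lemma~\ref{LEM:product}) for the intersection. The only cosmetic difference in item~1 is that the paper extracts the symbol at the moment the BFS first traverses the transition and stores it in an auxiliary back-edge graph, whereas you extract it afterwards along the reconstructed path; both cost $O(\sz\autb)$.

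For item~3 you take a genuinely different route. The paper does not simulate $\autb$ on $w$ at all: it builds a tiny automaton $\autb_w$ with set specs accepting $\{w\}$, forms $\auta=\autb_w\cap\autb$ via item~2, and returns $\emptyP(\auta)$ via item~1. Since $\autb_w$ has $|w|$ transitions each of constant size, the bound $O(\card\delta\szw{\delta_w}+\card{\delta_w}\szw{\delta})=O(|w|\,\sz\autb)$ drops out of item~2 directly. Your direct subset-simulation is also correct and meets the same bound; it is more self-contained and even exploits the $O(\log\szabc\beta)$ membership test of Lemma~\ref{LEM:ssets:algos}(1), while the paper's reduction is shorter and reuses items~1--2 as black boxes.
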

\begin{proof}
	For the first statement, we simply use a breadth-first search (BFS) algorithm, starting from any initial state $s\in I$, which is considered visited, and stopping, either when a final state is reached (trying if necessary all initial states), or all states have been visited. In the latter case the desired algorithm returns \None (or \False). For the algorithm $\emptyP(\autb)$ nothing further is needed.  For $\nonEmptyW(\autb)$, when a non-visited state $q$ is visited from a previously visited state $p$ using a transition $e=(p,\beta,q)$, an element $x\in\lof{\beta}$ is computed in time $O(\szabc\beta)=O(\szabc e)$, using Lemma~\ref{LEM:ssets:algos}. The algorithm also constructs a string graph $G$ that will be used to find the desired word in $\lof\autb$. When the above transition is accessed and $x$ is computed then the edge $(q,x,p)$ is added to $G$. If the algorithm stops because it reached a final state $f$, then there is a unique path in $G$ from $f$ to the initial state $s$, which can be used to find the desired word in $\lof\autb$ (the path is unique as every state is visited only once). The cost of BFS is $O(|Q|+|\delta|)$, but here when an edge $e\in\delta$ is accessed the algorithm spends time $O(\szabc e)$, so the cost is 
	$$O(|Q|+\sum_{e\in\delta}\szabc e).$$
	\pnsi For the \underline{second} statement, we compute the product $\autb\cap\autb'$. As the value $\beta\cap\beta'$ of two labels can be computed in linear time, Lemma~\ref{LEM:product} implies that $\autb\cap\autb'$ can be computed in time $O(\card\Gamma+\card{\delta}\szw{\delta'}+\card{\delta'}\szw{\delta})$. Now we have
	\begin{align}
		\lof{\autb\cap\autb'} &= \lof{\exp\autb\cap\exp\autb'}\label{AL:autom:spec}\\
		  &= \lof{\exp\autb}\cap\lof{\exp\autb'}\label{AL:autom:specB}\\
		  &= \lof{\autb}\cap\lof{\autb'}
	\end{align}
	Statement \eqref{AL:autom:spec} follows from the fact that ``$\cap:\sspec[\Gamma]\times\sspec[\Gamma]\Rightarrow \sspec[\Gamma]$'' is a polymorphic operation---see Theorem~\ref{TH:product} and Ex.~\ref{EX:polymor:new}. Statement \eqref{AL:autom:specB} follows from the fact that each $\exp\autb,\exp\autb'$ is an $\ew$-NFA and the operation $\cap$ is well-defined on these objects---see Lemma~\ref{LEM:exp} and Ex.~\ref{EX:product}.
	\pnsi For the \underline{third} statement, one makes an automaton with set specs $\autb_w$ accepting $\{w\}$, then computes $\auta=\autb_w\cap\autb$, and then uses $\emptyP(\auta)$ to get the desired answer.
\end{proof}

\begin{lemma}\label{LEM:transd}\prlabel{LEM:transd}
In the statements below, $\trs=(Q, \srel[\Gamma], \delta, I, F )$ is a trim transducer with set specs and $\auta=(Q', \Gamma_{\ews}, \delta', I', F')$ is a trim $\ew$-NFA  and $(u,v)$ is a pair of words.
\begin{enumerate}
	\item There is a $O(\sz{\trs})$ algorithm $\nonEmptyW(\trs)$ returning either a word pair in $\rel{\trs}$, or \None if $\rel{\trs}=\emptyset$. The decision version of this algorithm, $\emptyP(\trs)$, simply returns whether $\rel\trs$ is empty.
	\item There is a $O(\card\Gamma+\card{\delta}\szw{\delta'}+\card{\delta'}\szw{\delta})$ algorithm returning the transducer with set specs $\trs\rinp\auta$ such that $\rel{\trs\rinp\auta}=\rel{\trs}\rinp\lof{\auta}$.
	\item There is a $O(|u||v|\sz{\trs})$ algorithm returning whether $(u,v)\in\rel{\trs}$.
\end{enumerate}
\end{lemma}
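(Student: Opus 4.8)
The three statements are obtained by lifting, to the level of set specs, the corresponding constructions for ordinary transducers, in the same spirit as Lemma~\ref{LEM:aut}. For part~1, the plan is to run a breadth-first search of $\trs$ from the initial states (all marked visited), exactly as in the proof of Lemma~\ref{LEM:aut}.1, with two per-edge adjustments. When an edge $e=(p,\spp,q)$ is examined, we first discard it whenever $\rel{\spp}=\eset$; by Lemma~\ref{LEM:sym:inverse}.1 this holds precisely when $\spp=\sdiff{F}{G}$ with $\lof F=\lof G=\{g\}$ for a single symbol $g$, a condition testable in time $O(\szabc\spp)$ using Lemma~\ref{LEM:ssets:algos}.4 (test $\card{\lof F}=1$, test $\card{\lof G}=1$, compare the two symbols). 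If $\rel{\spp}\neq\eset$ and $q$ has not been visited, we also compute a witness $(x,y)\in\rel{\spp}$ in time $O(\szabc\spp)$ using Lemma~\ref{LEM:ssets:algos}: for $\spp\in\{\ews/\ews,\ews/G,F/\ews,F/G,\ssame F\}$ one element of each relevant set-spec language (Lemma~\ref{LEM:ssets:algos}.5) suffices; for $\spp=\sdiff F G$, if $\card{\lof F}=1$, say $\lof F=\{g\}$, we take $g$ together with an element of $\lof G\setminus\{g\}$ (nonempty since $\rel{\spp}\neq\eset$, obtained as in Lemma~\ref{LEM:ssets:algos}.6), and otherwise we take two distinct elements $g_1,g_2$ of $\lof F$ (Lemma~\ref{LEM:ssets:algos}.6) and any $y\in\lof G$ (Lemma~\ref{LEM:ssets:algos}.5), outputting whichever of $(g_1,y),(g_2,y)$ has distinct components. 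As in Lemma~\ref{LEM:aut}.1, for each state $q$ first reached from $p$ via $e$ we record the reversed arc $(q,(x,y),p)$ in an auxiliary trace graph; upon reaching a final state $f$ the unique trace path from $f$ back to an initial state yields pairs $(x_1,y_1),\dots,(x_k,y_k)$, and $\nonEmptyW(\trs)$ returns $(x_1\cdots x_k,\ y_1\cdots y_k)$ (returning $(\ew,\ew)$ when $I\cap F\neq\eset$, and \None when no final state is reached). The total cost is $O(\card Q+\sum_{e\in\delta}\szabc e)=O(\sz{\trs})$; $\emptyP(\trs)$ needs nothing beyond the search.

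For part~2, the plan is to invoke the product construction of Def.~\ref{DEF:product} with the polymorphic operation ``$\rinp:\srel[\Gamma]\times\Gamma_{\ews}\Rightarrow\srel[\Gamma]$'' of Ex.~\ref{EX:polymor:new}, which makes $\trs\rinp\auta$ again a transducer with set specs. Correctness is then a three-line chain:
\begin{align*}
\rel{\trs\rinp\auta}
 &=\rel{\exp\trs\rinp\exp\auta}&&\text{(Theorem~\ref{TH:product}, as $\rinp$ is polymorphic)}\\
 &=\rel{\exp\trs}\rinp\lof{\exp\auta}&&\text{(Ex.~\ref{EX:product}.3, as $\exp\trs$ is a transducer, $\exp\auta$ an $\ew$-NFA)}\\
 &=\rel{\trs}\rinp\lof{\auta}&&\text{(Lemma~\ref{LEM:exp} and Def.~\ref{DEF:graph:behave}).}
\end{align*}
For the complexity, Ex.~\ref{EX:polymor:new} gives that $\spp\rinp x$ is computed from $\spp,x$ in time, and is of size, $O(\szabc\spp)=O(\szabc\spp+\szabc x)$, so Lemma~\ref{LEM:product}.2 yields $\trs\rinp\auta$ in time $O(\card\delta\,\szw{\delta'}+\card{\delta'}\,\szw{\delta})$; the additive $O(\card\Gamma)$ term (as in Lemma~\ref{LEM:aut}.2) absorbs preprocessing of $\Gamma$ and the size-preserving $\ew$-loop adjustments of Def.~\ref{DEF:product}.

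For part~3, assuming $u,v\in\Gamma^*$ (otherwise return \False, since $\rel{\trs}\sseq\Gamma^*\times\Gamma^*$), the plan is to build the obvious chain $\ew$-NFAs $\auta_u,\auta_v$ over $\Gamma_{\ews}$ with $\lof{\auta_u}=\{u\}$, $\lof{\auta_v}=\{v\}$, of sizes $O(\card u)$ and $O(\card v)$; set $\trs_1=\mathrm{trim}(\trs\rinp\auta_u)$ and $\trs_2=\mathrm{trim}(\trs_1\rout\auta_v)$ using the polymorphic operations of Ex.~\ref{EX:polymor:new}; and return $\neg\,\emptyP(\trs_2)$. Applying the argument of part~2 twice (the second time for $\rout$), together with \eqref{EQ:rinp} and \eqref{EQ:rout},
\[
\rel{\trs_2}=\bigl(\rel{\trs}\rinp\{u\}\bigr)\rout\{v\}=\rel{\trs}\cap\bigl(\{u\}\times\{v\}\bigr),
\]
which is $\{(u,v)\}$ if $(u,v)\in\rel{\trs}$ and $\eset$ otherwise, so the test is correct. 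For the bound, Lemma~\ref{LEM:product}.2 gives $\szw{\delta(\trs\rinp\auta_u)}=O(\card\delta\cdot\card u+\card u\cdot\szw{\delta})$; since $\card\delta\le\szw{\delta}\le\sz{\trs}$ this is $O(\card u\,\sz{\trs})$, and trimming only shrinks it, so $\sz{\trs_1}=O(\card u\,\sz{\trs})$. Using $\card{\delta(\trs_1)}\le\szw{\delta(\trs_1)}$, a second application of Lemma~\ref{LEM:product} gives $\sz{\trs_2}=O(\card v\cdot\card u\,\sz{\trs})$, and $\emptyP(\trs_2)$ runs in linear time, for a total of $O(\card u\,\card v\,\sz{\trs})$.

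The only delicate point is the complexity of part~3: the two successive product/trim steps must pick up the $\card u$ and $\card v$ factors on top of $\sz{\trs}$ without any $\card\Gamma$ factor, which is why the size estimates must be routed through $\card\delta\le\szw{\delta}\le\sz{\trs}$ rather than through the cruder $\szw{\delta}\le(2\card\Gamma+3)\card\delta$ of Remark~\ref{REM:deltaszw}. Everything else is bookkeeping over Theorem~\ref{TH:product} and the per-label cost estimates already established in Ex.~\ref{EX:polymor:new} and Lemma~\ref{LEM:ssets:algos}.
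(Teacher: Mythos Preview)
Your proposal is correct and follows essentially the same approach as the paper's proof: BFS with per-edge witness extraction for part~1, the product construction via the polymorphic operation $\rinp$ and Theorem~\ref{TH:product} for part~2, and two restrictions followed by an emptiness test for part~3. You supply considerably more detail than the paper, which for part~1 simply says ``completely analogous'' to Lemma~\ref{LEM:aut}.1 and for part~3 does not justify the complexity bound at all; in particular, your explicit handling of the degenerate case $\rel{\spp}=\eset$ in part~1 and your care in part~3 to route the size estimates through Lemma~\ref{LEM:product} directly (so as not to pick up a stray $\card\Gamma$ summand from two invocations of the part~2 bound) are genuine refinements over the paper's sketch.
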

\begin{proof}
	The first statement is completely analogous to the first statement of Lemma~\ref{LEM:aut}. For the \underline{second} statement, we compute the product $\trs\rinp\auta$. As the product $\spp\rinp x$ of two labels can be computed in linear time, Lemma~\ref{LEM:product} implies that $\trs\rinp\auta$ can be computed in time $O(\card\Gamma+\card{\delta}\szw{\delta'}+\card{\delta'}\szw{\delta})$. Now we have
	\begin{align}
		\rel{\trs\rinp\auta} &= \rel{\exp\trs\rinp\exp\auta}\label{AL:transd:spec}\\
		  &= \rel{\exp\trs}\rinp\lof{\exp\auta}\label{AL:transd:specB}\\
		  &= \rel{\trs}\rinp\lof{\auta}
	\end{align}
	Statement \eqref{AL:transd:spec} follows from the fact that `$\rinp:\srel[\Gamma]\times\Gamma_{\ews}\Rightarrow \srel[\Gamma]$'' is a polymorphic operation---see Theorem~\ref{TH:product} and Ex.~\ref{EX:polymor:new}. Statement \eqref{AL:transd:specB} follows from the fact that  $\exp\trs$ is a transducer and $\exp\auta$ is an $\ew$-NFA and the operation $\rinp$ is well-defined on these objects---see Lemma~\ref{LEM:exp} and Ex.~\ref{EX:product}.
	\pnsi For the \underline{third} statement, first make two automata with set specs $\autb_u$ and $\autb_v$ accepting $\{u\}$ and $\{v\}$ respectively, then compute $\trt=\trs\rinp\auta_u\rout\auta_v$, and then use $\emptyP(\trt)$ to get the desired answer.
\end{proof}

\section{Composition of Transducers with Set Specifications} \mylabel{SEC:compose}
\pnsn
Next we are interested in defining the composition $\spp_1\circ\spp_2$ of two pairing specs in a way that $\rel{\spp_1}\circ\rel{\spp_2}$ is equal to $\rel{\spp_1\circ\spp_2}$. By Definition~\ref{DEF:spairs:rel}, the operator $\rel{}$ is defined with respect to an alphabet of reference $\Gamma$, so the value of $\spp_1\circ\spp_2$ should depend on $\Gamma$. It turns out that, for a particular subcase about the structure of $\spp_1,\spp_2$, the operation  $\spp_1\circ\spp_2$ can produce two or three pairing specs. To account for this, we define a new label set:
\pssi $\srel_+[\Gamma]$ consists of strings $\spp_1\bmoplus\cdots\bmoplus\spp_\ell$,
\pssn where $\ell\in\N$ and each $\spp_i\in\srel[\Gamma]$. Moreover we have the (fixed) label behaviour $\calr:\srel_+[\Gamma]\to2^{\Gamma^*\times\Gamma^*}$ such that
\[\rel{\spp_1\bmoplus\cdots\bmoplus\spp_\ell}= \rel{\spp_1}\cup\cdots\cup\rel{\spp_\ell}. \]

\begin{definition}\mylabel{DEF:sym:comp}
	Let $\Gamma$ be an alphabet of reference. The partial operation 
	\[\circ: \srel[\Gamma]\times \srel[\Gamma]\pto\srel_+[\Gamma] \]	
	is defined between any two pairing specs $\spp_1,\spp_2$ respecting $\Gamma$ as follows, where again $\undef$ means undefined. 
	\pssn\qquad $\spp_1\circ\spp_2=\undef$,\quad if 
	$\lof{\sprset\spp_1}\cap\lof{\spleft\spp_2}=\emptyset$.
	\pmsn Now we assume that the above condition is not true and we consider the structure of $\spp_1$ and $\spp_2$ according to Def.~\ref{DEF:spairs}\eqref{eq:pairspec} using $A,B,F,G,W,X,Y,Z$ as  set specs, where $A,B,F,G\neq\ews$---thus, we assume below that $\lof{B}\cap\lof{F}\neq\emptyset$ and $\lof{X}\cap\lof{Y}\neq\emptyset$.
	\pmsi $(W/X)\circ (Y/Z) = W/Z$
	\pmsi $(W/B)\circ (\ssame F) = W/B\cap F$
	\pssi $(W/B)\circ \big(\sdiff F G\big)  =
	\begin{cases}
		W/G,  & \text{if } |\lof{B\cap F}|\ge2\\
		W/\,G\cap\snone b,&  \text{if } \lof{B\cap F}=\{b\} \text{ and } \lof{G}\setminus\{b\}\not=\emptyset \\
		\undef, & \text{otherwise}.
	\end{cases}$
	\pmsi $(\ssame B) \circ (F/Z) \>=\> B\cap F/Z$ 
	\pmsi $(\ssame B) \circ (\ssame F) \>=\> \ssame{B\cap F}$
	

	\pmsi $(\ssame B)\circ (\sdiff F G) = \begin{cases}
 	\undef, & \text{if } \lof G=\lof{B\cap F}=\{g\} \\
 	\sdiff{B\cap F}{G}, & \text{otherwise} 
    \end{cases}$

	\pssi $(\sdiff A B)\circ (F/Z) =\begin{cases}
		A/Z, & \text{if } |\lof{B\cap F}|\ge2 \\
		A\cap\snone b/Z, & \text{if } \lof{B\cap F}=\{b\} \text{ and } \lof A\setminus\{b\}\not=\emptyset \\
		\undef& \text{otherwise.}
	\end{cases}$
	\pssi $(\sdiff A B)\circ (\ssame F) =\begin{cases}
		\undef, & \text{if } \lof A=\lof{B\cap F}=\{a\}\\		
		\sdiff{A}{B\cap F}, & \text{otherwise}
	\end{cases}$
	\pssi $(\sdiff A B) \circ (\sdiff F G)  =
	\begin{cases}
		A/G,  & \text{if } |\lof{B\cap F}|\ge3 \\
		A\cap\snone b/G\cap\snone b, & \text{if } \lof{B\cap F} = \{b\} \text{ and } \lof{A}\setminus\{b\}\\
		& \quad\quad \not=\emptyset \text{ and } \lof{G}\setminus\{b\}\not=\emptyset\\
		\bm D, & \text{if } \lof{B\cap F}=\{b_1,b_2\} 
		\\
		\undef, & \text{otherwise } 
	\end{cases}$
	\pmsn
	where $\bm D$ consists of up to three $\bmoplus$-terms as follows:
	\pssi $\bm D$ includes $A\cap\snone b_1b_2/G$, if $\lof A\setminus\{b_1,b_2\}\neq\eset$;
	\pnsi $\bm D$ includes $\sone b_1/G\cap\snone b_2$, if $b_1\in\lof A$ and $\lof G\setminus\{b_2\}\neq\eset$;
	\pnsi $\bm D$ includes $\sone b_2/G\cap\snone b_1$, if $b_2\in\lof A$ and $\lof G\setminus\{b_1\}\neq\eset$;
	\pssn and $\bm D=\undef$ if none of the above three conditions is true.
\end{definition}

\begin{remark}
	In the above definition, we have omitted cases where  $\spp_1\circ\spp_2$ is obviously undefined. For example, as $\ssame{F}$ and $\sdiff{F}{G}$ are only defined when $F,G\neq\ews$, we omit the case $(W/\ews)\circ(\ssame{F})$.
\end{remark}

\begin{remark}\mylabel{REM:sym:comp}
If we allowed $\undef$ to be a pairing spec, then the set $\srel[\Gamma]$ with the composition operation `$\circ$' would be `nearly' a semigroup: the subcase ``$(\sdiff A B) \circ (\sdiff F G)$ with $\lof{B\cap F}=\{b_1,b_2\}$'' in the above definition is the only one where the result of the composition is not necessarily a single pairing spec. For example, let the alphabet $\Gamma$ be $\{0,1,2\}$ and $A=\sone01$, $B=F=\sone12$, and $G=\sone012$. Then,
\[ \rel{\sdiff A B} \circ \rel{\sdiff F G}=\{(0,0),(0,1),(0,2),(1,0),(1,1)\}, \]
which is equal to $\rel{\{\sone0/\sone012,\,\sone1/\sone01 \}}$. This relation is not equal to $\rel{\spp}$, for any pairing spec $\spp$. 
\end{remark}

\begin{lemma}\mylabel{LEM:sym:comp}
The relation $\rel{\spp_1\circ\spp_2}$ is equal to $\rel{\spp_1}\circ\rel{\spp_2},$ for any pairing specs $\spp_1,\spp_2$ respecting $\Gamma$. 
\end{lemma}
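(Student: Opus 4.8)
The plan is to argue by cases on the structural forms of $\spp_1,\spp_2$ from Definition~\ref{DEF:spairs}, each time unfolding $\rel{\spp_1}\circ\rel{\spp_2}$ from Definition~\ref{DEF:spairs:rel} and matching it against the pairing spec(s) produced by Definition~\ref{DEF:sym:comp}. Two facts are used repeatedly: for set specs $B,F\neq\ews$ one has $\lof{B\cap F}=\lof B\cap\lof F$ whenever $B\cap F\neq\undef$, and $B\cap F\neq\undef$ exactly when $\lof B\cap\lof F\neq\eset$ (Definition~\ref{DEF:ssets} together with Lemma~\ref{LEM:ssets:lang}); likewise $\lof{G\cap\snone b}=\lof G\setminus\{b\}$, with $G\cap\snone b\neq\undef$ iff $\lof G\setminus\{b\}\neq\eset$. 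Writing $R_i=\rel{\spp_i}$, inclusion~\eqref{EQ:leftright} gives $R_1\sseq\lof{\spleft\spp_1}\times\lof{\sprset\spp_1}$ and $R_2\sseq\lof{\spleft\spp_2}\times\lof{\sprset\spp_2}$, so $R_1\circ R_2=\eset$ whenever $\lof{\sprset\spp_1}\cap\lof{\spleft\spp_2}=\eset$; hence the first clause of Definition~\ref{DEF:sym:comp} is correct, under the convention $\rel\undef=\eset$.

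Assume now $\lof{\sprset\spp_1}\cap\lof{\spleft\spp_2}\neq\eset$. A quick inspection of the six forms shows that $(\spp_1,\spp_2)$ then matches one of the remaining eight clauses of Definition~\ref{DEF:sym:comp}, the situations with a component $\ews$ on the wrong side being already excluded by the first clause, and $(W/B)\circ(F/Z)$ being subsumed by the $(W/X)\circ(Y/Z)$ clause. For the latter, $R_1\circ R_2=\{(x,z)\mid x\in\lof W,\ z\in\lof Z\}=\rel{W/Z}$ since $\lof X\cap\lof Y\neq\eset$. Each of the clauses $(W/B)\circ(\ssame F)$, $(\ssame B)\circ(F/Z)$, $(\ssame B)\circ(\ssame F)$, $(W/B)\circ(\sdiff F G)$, $(\ssame B)\circ(\sdiff F G)$, $(\sdiff A B)\circ(F/Z)$, $(\sdiff A B)\circ(\ssame F)$ is treated the same way: unfold the three relations, eliminate the middle coordinate, and rewrite the constraint on it via $\lof{B\cap F}=\lof B\cap\lof F$. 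For instance $(W/B)\circ(\sdiff F G)$ gives $R_1\circ R_2=\{(x,z)\mid x\in\lof W,\ z\in\lof G,\ \exists\,y\in\lof{B\cap F}\ \text{with}\ y\neq z\}$, which is $\lof W\times\lof G=\rel{W/G}$ if $|\lof{B\cap F}|\ge2$, is $\lof W\times(\lof G\setminus\{b\})=\rel{W/(G\cap\snone b)}$ if $\lof{B\cap F}=\{b\}$ and $\lof G\setminus\{b\}\neq\eset$, and is $\eset$ otherwise: precisely the three listed branches. Several of these clauses are mirror images of one another under $\spp\mapsto\sppinv$ (Lemma~\ref{LEM:sym:inverse}, part~2), and in each of them the branch that returns $\undef$ is exactly the situation where the computed relation $R_1\circ R_2$ turns out to be empty (using Lemma~\ref{LEM:sym:inverse} part~1 for the outputs of the form $\sdiff{F'}{G'}$).

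The substantive case is $(\sdiff A B)\circ(\sdiff F G)$. Eliminating the middle coordinate $b$ gives
\[
R_1\circ R_2=\{(a,c)\mid a\in\lof A,\ c\in\lof G,\ \exists\,b\in\lof{B\cap F}\setminus\{a,c\}\}.
\]
Put $H=\lof{B\cap F}\neq\eset$. If $|H|\ge3$ the existential is automatic and $R_1\circ R_2=\lof A\times\lof G=\rel{A/G}$. If $H=\{b\}$ the condition reads $a\neq b$ and $c\neq b$, so $R_1\circ R_2=(\lof A\setminus\{b\})\times(\lof G\setminus\{b\})$, which is $\rel{(A\cap\snone b)/(G\cap\snone b)}$ when both factors are nonempty and $\eset$ otherwise. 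If $H=\{b_1,b_2\}$, the condition fails exactly when $\{a,c\}=\{b_1,b_2\}$ (forcing $a\neq c$), hence $R_1\circ R_2=(\lof A\times\lof G)\setminus\big(\{(b_1,b_2),(b_2,b_1)\}\cap(\lof A\times\lof G)\big)$. To finish this subcase, write $\lof A$ as the disjoint union of $\lof A\setminus\{b_1,b_2\}$, $\lof A\cap\{b_1\}$ and $\lof A\cap\{b_2\}$, distribute $\times\lof G$, and check term by term that the three $\bmoplus$-summands $A\cap\snone{b_1b_2}/G$, $\sone b_1/(G\cap\snone b_2)$, $\sone b_2/(G\cap\snone b_1)$ of $\bm D$ — present exactly under the stated side conditions, and each denoting the empty relation whenever its side condition fails — together yield precisely this set, so that $\bm D=\undef$ (i.e.\ the empty relation) occurs exactly when all three summands would be empty. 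Keeping track of which of the at most two removed pairs actually lies in $\lof A\times\lof G$ is the one genuinely delicate piece of bookkeeping; all the other clauses are routine unfolding of Definition~\ref{DEF:spairs:rel}.
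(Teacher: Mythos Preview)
Your proof is correct and follows essentially the same approach as the paper: a case analysis on the structural forms of $\spp_1,\spp_2$, unfolding $\rel{\cdot}$ and matching against the clauses of Definition~\ref{DEF:sym:comp}, with the $(\sdiff A B)\circ(\sdiff F G)$ case and its $|H|=2$ subcase singled out as the nontrivial part. One cosmetic slip: there are nine non-$\undef$ clauses in Definition~\ref{DEF:sym:comp}, not eight, though you do in fact treat all nine.
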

\begin{proof} We shall use the following shorthand notation:
\pssi $Q=\rel{\spp_1}\circ\rel{\spp_2}$ and $R=\rel{\spp_1\circ\spp_2}$ and $\rel{\undef}=\eset$ . 
\pssn We shall distinguish several cases about the form of $\spp_1\circ\spp_2$ according to Def.~\ref{DEF:sym:comp}. By looking at that definition and using \eqref{EQ:leftright}, we have that
\begin{equation}\label{EQ:product}\prlabel{EQ:product}
Q,R\>\>\sseq\>\>\lof{\spleft\spp_1}\times\lof{\sprset\spp_2} 
\end{equation}
\pmsn
    \textsl{Case} $(W/X)\circ (Y/Z) = W/Z$. We have that $R=\rel{W/Z}= \lof W\times\lof Z$. As $Q$ consists of all pairs $(w,z)=(w,x)\circ(x,z)$ with $w\in\lof W, z\in\lof Z$ and $x\in\lof X\cap\lof Y$, we have that $Q=R$. 
	\pmsn 
	\textsl{Case} $(W/B)\circ (\ssame F) = W/B\cap F$. We have that $R=\rel{W/B\cap F}=\lof W\times\lof{B\cap F}$. As $Q$ consists of all pairs $(w,f)=(w,b)\circ(f,f)$ with $w\in\lof W, b\in\lof B, f\in\lof F$ and $b=f$, we have that $Q=R$.
    \pmsn
	\textsl{Case} $(W/B)\circ \big(\sdiff F G\big)$. We have three subcases. First, when $|\lof{B\cap F}|\ge2$. Then, $R=\lof{W}\times \lof{G}$.  By~\eqref{EQ:product}, $Q\sseq R$.  Now let $(w,g)\in R=\lof{W}\times\lof{G}$, and pick any $b\in \lof{B\cap F}\setminus\{g\}$. Then, $(w,g)=(w,b)\circ(b,g)\in Q$. Hence, $R\sseq Q$. 
	In the \underline{second} subcase, $\lof{B\cap F}=\{b\}$, for some $b\in\Gamma$, and $\lof G\setminus\{b\}\neq\eset$. Then, $R=\lof{W}\times\lof{G\cap\snone b}$. The claim $R=Q$ follows by noting that $Q$ consists of all pairs $(w,g)=(w,b)\circ(f,g)$ with $w\in\lof W, f\in\lof F,g\in\lof G$ and $f=b$ and $f\neq g$. In the \underline{third} subcase, $\lof G=\lof{B\cap F}=\{b\}$, so $Q=\eset$, so $Q=R$.
	\pmsn
    \textsl{Case} $(\ssame B) \circ (F/Z) \>=\> B\cap F/Z$. Analogous to case $(W/B)\circ (\ssame F) = W/B\cap F$. 
	\pmsn 
	\textsl{Case} $(\ssame B) \circ (\ssame F) \>=\> \ssame{B\cap F}$. Similar to case $(W/B)\circ (\ssame F) = W/B\cap F$, where here $R=\{(b,b)\mid b\in B\cap F\}$.
	\pmsn 
	\textsl{Case} $(\ssame B)\circ (\sdiff F G)$. First, note that $(b,g)\in Q$ iff ``$(b,g)=(b,b)\circ (f,g)$ with $g \neq f=b\in\lof{B\cap F}$''. Thus, if $\lof G=\lof{B\cap F}=\{g\}$, for some $g\in\Gamma$, then $Q=\eset=R$. Otherwise, any $(b,g)\in Q$ must be in $\rel{\sdiff{B\cap F}{G}}=R$; and conversely, if $(b,g)\in R$ then $g\neq b$ and $b\in\lof{B\cap F}$. As $(b,g)=(b,b)\circ(b,g)$ we have that $(b,g)\in\rel{\ssame{B}}\circ\rel{\sdiff{F}{G}}=Q$.
	\pmsn
	\textsl{Case} $(\sdiff A B)\circ (F/Z)$. Analogous to  case $(W/B)\circ \big(\sdiff F G\big)$.
	\pmsn
	\textsl{Case} $(\sdiff A B)\circ (\ssame F)$. First note that $(a,f)\in Q$ iff $(a,f)=(a,b)\circ (b,f)$ with $a\neq b=f\in\lof{B\cap F}$. Thus, if $\lof A=\lof{B\cap F}=\{a\}$, for some $a\in\Gamma$, then $Q=\eset=R$. Otherwise, any $(a,f)\in Q$ must be in $\rel{\sdiff{A}{B\cap F}}=R$; and conversely, if $(a,f)\in R$ then $a\neq f$ and $f\in\lof{B\cap F}$. As $(a,f)=(a,f)\circ(f,f)$ we have that $(a,f)\in\rel{\sdiff{A}{B}}\circ\rel{\ssame{F}}=Q$.
	\pmsn
	\textsl{Case} $(\sdiff A B) \circ (\sdiff F G)$. First note that, if $(a,g)\in Q$, then $(a,g)\in\lof A\times\lof G$ and
	\[(a,g)=(a,b)\circ(f,g)\text{ with } a\neq b=f\neq g, b\in\lof B\cap\lof F. \]
	 We have three subcases. First, $\lof{B\cap F}\ge3$. Then $R=\rel{A/G}$ and, by~\eqref{EQ:product}, $Q\sseq R$. Now let $(a,g)\in R$ and pick any $b\in\lof{B\cap F}\setminus\{a,g\}$. Then, $(a,g)=(a,b)\circ(b,g)\in\rel{\sdiff{A}{B}}\circ\rel{\sdiff{F}{G}}$. Hence, $R\sseq Q$. In the \underline{second} subcase, $\lof{B\cap F}=\{b\}$ and $\lof A\setminus\{b\}\neq\eset$ and $\lof G\setminus\{b\}\neq\eset$, for some $b\in\Gamma$. Then the claim $Q=R$ follows by simple inspection on the elements of $Q,R$.  In the \underline{third} subcase, $\lof{B\cap F}=\{b_1,b_2\}$, for some $b_1,b_2\in\Gamma$. The relation $Q$ can be partitioned into three subsets:
	\pssi $Q_0=\{(a,g)\mid a\in\lof A\setminus\{b_1,b_2\},g\in\lof G\}$
	\pssi $Q_1=\lof A\times\lof G\cap\{(b_1,g)\mid g\notin\lof G\setminus\{b_2\}\}$
	\pssi $Q_2=\lof A\times\lof G\cap\{(b_2,g)\mid g\notin\lof G\setminus\{b_1\}\}$
	\pssn
	Then we have that
	\pssi $\rel{A\cap\snone b_1b_2/G}=Q_0$, if $\lof A\setminus\{b_1,b_2\}\neq\eset$;
	\pnsi $\rel{\sone b_1/G\cap\snone b_2}=Q_1$, if $b_1\in\lof A$ and $\lof G\setminus\{b_2\}\neq\eset$;
	\pnsi $\rel{\sone b_2/G\cap\snone b_1}=Q_2$, if $b_2\in\lof A$ and $\lof G\setminus\{b_1\}\neq\eset$.
\end{proof}

\begin{corollary}
	The polymorphic operation ``$\circ:\srel[\Gamma]\times\srel[\Gamma]\Rightarrow \srel_+[\Gamma]$'' is well-defined by the partial operation $\circ$ in Def.~\ref{DEF:sym:comp} and the partial operation $\circ$ in Ex.~\ref{EX:monoidops}.
\end{corollary}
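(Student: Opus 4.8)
The plan is to verify the three defining conditions of a polymorphic operation from Definition~\ref{DEF:polymorphic}, instantiated with $B=B'=\srel[\Gamma]$, $C=\srel_+[\Gamma]$, and behaviours $\cali_1=\cali_2=\calr$ (on $\srel[\Gamma]$) and $\cali=\calr$ (on $\srel_+[\Gamma]$). The first two conditions are immediate from the definitions already in place: Definition~\ref{DEF:sym:comp} gives a partial mapping $\circ:\srel[\Gamma]\times\srel[\Gamma]\pto\srel_+[\Gamma]$, and Example~\ref{EX:monoidops} gives the partial mapping $\circ:(\Sigma_1^*\times\Delta^*)\times(\Delta^*\times\Sigma_2^*)\pto(\Sigma_1^*\times\Sigma_2^*)$, which here specializes to $\circ:(\Gamma^*\times\Gamma^*)\times(\Gamma^*\times\Gamma^*)\pto(\Gamma^*\times\Gamma^*)=\mon\srel[\Gamma]\times\mon\srel[\Gamma]\pto\mon\srel_+[\Gamma]$ since $\mon\srel[\Gamma]=\mon\srel_+[\Gamma]=\Gamma^*\times\Gamma^*$.

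The substantive step is the third condition: for all $\spp_1,\spp_2\in\srel[\Gamma]$,
\[
\calr(\spp_1\circ\spp_2)=\calr(\spp_1)\circ\calr(\spp_2),
\]
with the convention $\calr(\undef)=\eset$. But this is exactly the content of Lemma~\ref{LEM:sym:comp}: that lemma states $\rel{\spp_1\circ\spp_2}=\rel{\spp_1}\circ\rel{\spp_2}$ for all pairing specs $\spp_1,\spp_2$ respecting $\Gamma$, and the right-hand side $\rel{\spp_1}\circ\rel{\spp_2}$ is, by the definition of the monoidal $\circ$ in Example~\ref{EX:monoidops}, precisely the set $\{m\circ m'\mid m\in\rel{\spp_1},m'\in\rel{\spp_2},m\circ m'\neq\undef\}=\calr(\spp_1)\circ\calr(\spp_2)$ in the notation of Definition~\ref{DEF:polymorphic}. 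The one caveat to spell out is the $\undef$ case: when $\spp_1\circ\spp_2=\undef$ (i.e.\ when $\lof{\sprset\spp_1}\cap\lof{\spleft\spp_2}=\emptyset$, or one of the explicitly undefined subcases in Def.~\ref{DEF:sym:comp}), we need $\calr(\spp_1)\circ\calr(\spp_2)=\eset$; this again follows from Lemma~\ref{LEM:sym:comp} together with the convention $\rel{\undef}=\eset$ used in its proof, or can be read off directly since $(u,v)\circ(w,z)$ is defined only when $v=w$, and $v\in\lof{\sprset\spp_1}$, $w\in\lof{\spleft\spp_2}$ forces a nonempty intersection.

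I do not anticipate a real obstacle here: the corollary is essentially a repackaging of Lemma~\ref{LEM:sym:comp} in the vocabulary of Definition~\ref{DEF:polymorphic}. The only place requiring a touch of care is matching the notational conventions—confirming that the ``$S\lop S'$'' notation of Definition~\ref{DEF:polymorphic} applied to the monoidal $\circ$ of Example~\ref{EX:monoidops} unwinds to the ordinary relational composition $\rel{\spp_1}\circ\rel{\spp_2}$ of Example~\ref{EX:monoidops}, and that both label sets involved have behaviour monoid $\Gamma^*\times\Gamma^*$ so that the ``$\mon B$'' bookkeeping in the definition is satisfied. Hence the proof is a short paragraph citing Definition~\ref{DEF:sym:comp}, Example~\ref{EX:monoidops}, and Lemma~\ref{LEM:sym:comp}.
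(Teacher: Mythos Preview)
Your proposal is correct and matches the paper's approach: the corollary is stated without proof in the paper, as an immediate consequence of Lemma~\ref{LEM:sym:comp}, and you have correctly unpacked the three conditions of Definition~\ref{DEF:polymorphic} and identified that the only nontrivial one is supplied by that lemma.
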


\begin{lemma}\mylabel{LEM:comp:complex}
	For any two pairing specs $\spp_1,\spp_2$, we have that $\spp_1\circ\spp_2$ can be computed in time $O\big(|\spp_1|+|\spp_2|\big).$
\end{lemma}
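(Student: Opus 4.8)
The plan is to go through the cases of Definition~\ref{DEF:sym:comp} one at a time and argue that each one can be decided and computed in time $O(|\spp_1|+|\spp_2|)$, so that the overall bound follows since there is a bounded number of cases. The two basic tools are Lemma~\ref{LEM:ssets} (computing $G\cap F$ in time $O(|G|+|F|)$) and Lemma~\ref{LEM:ssets:algos} (testing $g\in\lof F$, testing $|\lof F|\ge k$ for fixed $k$, testing $|\lof F|=1$ and extracting the element, extracting one or two elements of $\lof F$, each in time $O(|F|)$ or $O(|F|+\log|\Gamma|)$). Note that by Lemma~\ref{LEM:sym:restrict} we have $|\spp_i|\le 2|\Gamma|+2$, so $\log|\Gamma|=O(|\spp_i|)$; hence every per-step cost coming from Lemma~\ref{LEM:ssets:algos} is absorbed into $O(|\spp_1|+|\spp_2|)$.

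First I would handle the preliminary test $\lof{\sprset\spp_1}\cap\lof{\spleft\spp_2}=\emptyset$. Parsing $\spp_1,\spp_2$ into their left/right parts and extracting $\sprset\spp_1$ and $\spleft\spp_2$ is $O(|\spp_1|+|\spp_2|)$; these are set specs (or $\ews$), and by Lemma~\ref{LEM:ssets} their intersection is computed within the same bound, so the disjointness test is within budget. Then I would run through the remaining cases. The cases $(W/X)\circ(Y/Z)=W/Z$, $(W/B)\circ(\ssame F)=W/B\cap F$, $(\ssame B)\circ(F/Z)$, $(\ssame B)\circ(\ssame F)$ only require copying set specs and at most one call to the $\cap$ operation, hence are linear. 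The cases involving a $\sdiff{}{}$ operand — $(W/B)\circ(\sdiff F G)$, $(\ssame B)\circ(\sdiff F G)$, $(\sdiff A B)\circ(F/Z)$, $(\sdiff A B)\circ(\ssame F)$ — require computing $B\cap F$ (linear), then testing whether $|\lof{B\cap F}|\ge 2$ (linear by Lemma~\ref{LEM:ssets:algos}.3), and in the borderline subcase extracting the unique element $b$ of $\lof{B\cap F}$ (linear by Lemma~\ref{LEM:ssets:algos}.4), forming $\snone b$ (which has length $O(1)+|b|=O(1)$), intersecting with $G$ or $A$ (linear), and a further emptiness test $\lof G\setminus\{b\}\ne\eset$, which is exactly Lemma~\ref{LEM:ssets:algos}.2 (linear). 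All of this stays within $O(|\spp_1|+|\spp_2|)$.

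The last and heaviest case is $(\sdiff A B)\circ(\sdiff F G)$. Here I would first compute $B\cap F$ (linear), then decide among the four branches by testing $|\lof{B\cap F}|\ge 3$ and $|\lof{B\cap F}|\ge 2$, both fixed-$k$ tests and hence linear by Lemma~\ref{LEM:ssets:algos}.3. In the branch $\lof{B\cap F}=\{b\}$ we extract $b$ (linear), form $A\cap\snone b$ and $G\cap\snone b$ (linear), and test the two nonemptiness conditions (linear, Lemma~\ref{LEM:ssets:algos}.2). In the branch $\lof{B\cap F}=\{b_1,b_2\}$ we use Lemma~\ref{LEM:ssets:algos}.6 to extract two distinct elements $b_1,b_2$ in linear time, then for the (at most) three $\bmoplus$-terms of $\bm D$ we form the set specs $A\cap\snone{b_1b_2}$, $G\cap\snone{b_2}$, $G\cap\snone{b_1}$, $\sone{b_1}$, $\sone{b_2}$ — each by a single $\cap$ call or by writing a fixed constant — and perform the three membership/nonemptiness guards $b_i\in\lof A$ (Lemma~\ref{LEM:ssets:algos}.1), $\lof G\setminus\{b_j\}\ne\eset$ (Lemma~\ref{LEM:ssets:algos}.2), $\lof A\setminus\{b_1,b_2\}\ne\eset$ (which reduces to first forming $A\cap\snone{b_1b_2}$ and then testing emptiness of that set spec, again linear). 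Since $|\snone{b_1b_2}|=O(1)$, every operation here costs $O(|\spp_1|+|\spp_2|)$, and there are only a constant number of them.

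The main obstacle is bookkeeping rather than mathematical depth: one must verify that in the $\sdiff{}{}\circ\sdiff{}{}$ branch producing $\bm D$, the auxiliary quantities one needs to test ($\lof A\setminus\{b_1,b_2\}$, $\lof G\setminus\{b_1\}$, $\lof G\setminus\{b_2\}$) can each be decided by an emptiness test on an explicitly computable set spec of the form $X\cap\snone w$ with $|w|\le 2$, so that Lemma~\ref{LEM:ssets}, together with the emptiness observation in Lemma~\ref{LEM:ssets:lang}.1 (a set spec respecting $\Gamma$ always has nonempty language, and the $\cap$ operation returns $\undef$ exactly when the intersection is empty), yields a linear-time decision. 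Once that is in place, summing a constant number of $O(|\spp_1|+|\spp_2|)$ costs gives the claimed bound. $\Box$
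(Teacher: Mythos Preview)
Your proof is an expanded version of the paper's own one-line argument, which reads in full: ``Follows from Lemma~\ref{LEM:ssets:algos} and the fact that $|\spp_1\circ\spp_2|=O(|\spp_1|+|\spp_2|)$ as seen in Def.~\ref{DEF:sym:comp}.'' Your case-by-case verification is exactly the right way to unpack this, and the tools you invoke (Lemma~\ref{LEM:ssets} for $\cap$, the various items of Lemma~\ref{LEM:ssets:algos} for the cardinality and membership tests) are the intended ones.

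One logical slip worth fixing: you argue ``by Lemma~\ref{LEM:sym:restrict} we have $|\spp_i|\le 2|\Gamma|+2$, so $\log|\Gamma|=O(|\spp_i|)$.'' This inference is backwards --- an \emph{upper} bound on $|\spp_i|$ cannot yield a \emph{lower} bound on $|\spp_i|$ in terms of $|\Gamma|$. Indeed $|\spp_i|$ can be a fixed constant (take $\spp_i=\sdiff{\sany}{\sany}$, of length~$5$) while $|\Gamma|$ is arbitrarily large, so $\log|\Gamma|=O(|\spp_i|)$ is simply false in general. This does not actually damage your argument: the $\log|\Gamma|$ term from Lemma~\ref{LEM:ssets:algos}.3 only enters when comparing the precomputed value $|\Gamma|$ against a fixed $k\in\{2,3\}$, which is $O(1)$ under the conventions of Section~\ref{SEC:Terminology}. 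But the justification you gave for absorbing it is wrong and should be removed or replaced.
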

\begin{proof}
	Follows from Lemma~\ref{LEM:ssets:algos} and the fact that $|\spp_1\circ\spp_2|=O(|\spp_1|+|\spp_2|)$ as seen in Def.~\ref{DEF:sym:comp}.
\end{proof}

\begin{definition}\label{DEF:transd:compose}\prlabel{DEF:transd:compose}
	Let $\trt=(Q,\srel[\Gamma],\delta,I,F)$ and $\trs=(Q',\srel[\Gamma],\delta',I',F')$ be transducers with set specs. The transducer $\trt\tcomp\trs$ with set specs is defined as follows. First compute the transducer $\trt\circ\trs$ with labels in $\srel_+[\Gamma]$. Then, $\trt\tcomp\trs$ results when each  transition $(p,\spp_1\bmoplus\cdots\bmoplus\spp_\ell,q)$ of  $\trt\circ\trs$, with $\ell>1$, is replaced with the $\ell$ transitions $(p,\spp_i,q)$.
\end{definition}

\begin{lemma}
	We have that $\rel{\trt\tcomp\trs}=\rel{\trt\circ\trs}$.
\end{lemma}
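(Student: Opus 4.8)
The plan is to prove directly that $\cali(\trt\tcomp\trs)=\cali(\trt\circ\trs)$, which is the same as the claimed identity since for a labelled graph whose labels lie in $\srel[\Gamma]$ or in $\srel_+[\Gamma]$ the realized relation $\rel{}$ is exactly its behaviour $\cali$ in the sense of Def.~\ref{DEF:graph:behave} (both label sets have monoid $\Gamma^*\times\Gamma^*$ and agree on single pairing specs, so by the Notational Convention of Sect.~\ref{SEC:labelsets} one behaviour symbol covers both, and $\rel{\spp}=\cali(\spp)$ for a single pairing spec $\spp$). I would begin by recording the structural relationship between the two graphs that is immediate from Def.~\ref{DEF:transd:compose}: $\trt\tcomp\trs$ and $\trt\circ\trs$ have the same state set, the same initial states and the same final states; each transition of $\trt\circ\trs$ is either a transition $(p,\spp,q)$ whose label is a single pairing spec, in which case it also belongs to $\trt\tcomp\trs$, or a transition $e=(p,\spp_1\bmoplus\cdots\bmoplus\spp_\ell,q)$ with $\ell>1$, which is replaced in $\trt\tcomp\trs$ by the $\ell$ transitions $(p,\spp_i,q)$; and conversely every transition of $\trt\tcomp\trs$ arises in exactly one of these two ways. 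The only property of the behaviour on $\srel_+[\Gamma]$ that the argument uses is its defining equation $\cali(\spp_1\bmoplus\cdots\bmoplus\spp_\ell)=\cali(\spp_1)\cup\cdots\cup\cali(\spp_\ell)$.

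For the inclusion $\cali(\trt\circ\trs)\sseq\cali(\trt\tcomp\trs)$ I would take $m\in\cali(\trt\circ\trs)$, fix an accepting path $\sequ{q_{i-1},\lbl_i,q_i}_{i=1}^n$ of $\trt\circ\trs$ together with a factorization $m=m_1\cdots m_n$ with $m_i\in\cali(\lbl_i)$, and transform it transition by transition: a transition with single-pairing-spec label is kept; a transition with compound label $\lbl_i=\spp_{i,1}\bmoplus\cdots\bmoplus\spp_{i,k_i}$ contributes $m_i\in\cali(\lbl_i)=\bigcup_j\cali(\spp_{i,j})$, so $m_i\in\cali(\spp_{i,j_i})$ for some index $j_i$, and I replace it by the transition $(q_{i-1},\spp_{i,j_i},q_i)$ of $\trt\tcomp\trs$. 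Concatenating the results gives an accepting path of $\trt\tcomp\trs$ with label sequence $\beta_1',\dots,\beta_n'$ such that $m_i\in\cali(\beta_i')$ for all $i$, whence $m\in\cali(\beta_1')\cdots\cali(\beta_n')\sseq\cali(\trt\tcomp\trs)$. The reverse inclusion is the symmetric construction: given an accepting path of $\trt\tcomp\trs$ with labels $\beta_1',\dots,\beta_n'$ and $m=m_1\cdots m_n$, $m_i\in\cali(\beta_i')$, I replace each transition $(q_{i-1},\spp_j,q_i)$ that was produced by splitting some $e=(q_{i-1},\spp_1\bmoplus\cdots\bmoplus\spp_\ell,q_i)$ of $\trt\circ\trs$ by $e$ itself, and keep the remaining transitions; since $\cali(\beta_i')=\cali(\spp_j)\sseq\cali(\spp_1\bmoplus\cdots\bmoplus\spp_\ell)$, the element $m_i$ still lies in the behaviour of the new label, so the resulting accepting path of $\trt\circ\trs$ witnesses $m\in\cali(\trt\circ\trs)$. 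Combining the two inclusions yields $\rel{\trt\tcomp\trs}=\cali(\trt\tcomp\trs)=\cali(\trt\circ\trs)=\rel{\trt\circ\trs}$.

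I do not anticipate a genuine obstacle: the whole statement is an instance of distributivity of the monoid product over unions, which is already built into Def.~\ref{DEF:graph:behave}, since the behaviour of a path there is a product of the behaviours of its labels. The one point that wants a line of care is the path-splitting/merging bookkeeping — in particular that a transition of $\trt\tcomp\trs$ which happens to coincide with both an unchanged transition and a split-off transition may be traced back to $\trt\circ\trs$ in either way; but both choices produce a legal accepting path of $\trt\circ\trs$ with $m_i$ in the behaviour of the chosen label, so the argument goes through unchanged.
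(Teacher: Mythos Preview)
Your proposal is correct and follows essentially the same approach as the paper: both arguments exploit that $\trt\tcomp\trs$ and $\trt\circ\trs$ share the same states, initial and final sets, and that a compound label $\spp_1\bmoplus\cdots\bmoplus\spp_\ell$ has behaviour equal to the union of the $\rel{\spp_i}$, so accepting paths can be converted back and forth by picking or merging the appropriate $\bmoplus$-summand. The only difference is that the paper writes out just the inclusion $\rel{\trt\tcomp\trs}\subseteq\rel{\trt\circ\trs}$ and declares the other direction similar, whereas you spell out both.
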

\begin{proof}
	We show the direction $\rel{\trt\tcomp\trs}\sseq\rel{\trt\circ\trs}$; the other direction is similar. Let $(u,v)\in \rel{\trt\tcomp\trs}$. Then there is an accepting path $P=\sequ{q_{i-1},\spp_i,q_i}_{i=1}^\ell$ in $\trt\tcomp\trs$ such that 
	$$(u,v)\in\rel{\spp_1}\cdots\rel{\spp_\ell}. $$
	For each transition $e=(q_{i-1},\spp_i,q_i)$, define the triple $(q_{i-1},\spp_i',q_i)$ as follows: $\spp_i'=\spp_i$, if $e$ is in $\trt\circ\trs$; else, by Def.~\ref{DEF:transd:compose},  there is a transition $(q_{i-1},\spp_i',q_i)$ in $\trt\circ\trs$ such that $\spp_i'$ is a $\bmoplus$-sum of terms that include $\spp_i$. Then, the sequence $P'=\sequ{q_{i-1},\spp_i',q_i}_{i=1}^\ell$ is an accepting path of $\trt\circ\trs$ such that
	$$(u,v)\in\rel{\spp_1'}\cdots\rel{\spp_\ell'}. $$
	Thus, $(u,v)\in\rel{\trt\circ\trs}$.
\end{proof}

\begin{theorem}\label{TH:compose}\prlabel{TH:compose}
	For any two trim transducers $\trt=(Q,\srel[\Gamma],\delta,I,F)$ and $\trs=(Q',\srel[\Gamma],\delta',I',F')$ with set specs, $\trt\tcomp\trs$ of  can be computed in time $O(\card\Gamma+\card\delta\szw{\delta'}+\card{\delta'}\szw\delta)$. Moreover, $\rel{\trt\tcomp\trs}=\rel{\trt}\circ\rel{\trs}$.
\end{theorem}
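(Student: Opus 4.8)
The plan is to derive both claims by assembling results already established, with essentially no new computation. For the relation identity, I would chain four equalities. By the lemma directly preceding this theorem, $\rel{\trt\tcomp\trs}=\rel{\trt\circ\trs}$, so it suffices to analyse $\trt\circ\trs$. Since ``$\circ:\srel[\Gamma]\times\srel[\Gamma]\Rightarrow\srel_+[\Gamma]$'' is a polymorphic operation (the corollary following Lemma~\ref{LEM:sym:comp}, which rests on Lemma~\ref{LEM:sym:comp} itself), Theorem~\ref{TH:product} gives $\rel{\trt\circ\trs}=\rel{\exp\trt\circ\exp\trs}$. By Lemma~\ref{LEM:exp} the graphs $\exp\trt$ and $\exp\trs$ are ordinary transducers, and on such objects the product $\circ$ (Definition~\ref{DEF:product} instantiated with the label operation of Example~\ref{EX:polymor:old}) is exactly classical transducer composition, so Example~\ref{EX:product}.2 yields $\rel{\exp\trt\circ\exp\trs}=\rel{\exp\trt}\circ\rel{\exp\trs}$. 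Finally $\rel{\exp\trt}=\rel{\trt}$ and $\rel{\exp\trs}=\rel{\trs}$ by Lemma~\ref{LEM:expanded} (a labelled graph has the same behaviour as its expansion). Concatenating gives $\rel{\trt\tcomp\trs}=\rel{\trt}\circ\rel{\trs}$.

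For the complexity bound I would mirror the argument used for the analogous products in Lemma~\ref{LEM:transd}.2. By Definition~\ref{DEF:transd:compose}, $\trt\tcomp\trs$ is built in two phases: first form the product $\trt\circ\trs$ over $\srel_+[\Gamma]$ according to Definition~\ref{DEF:product}, then split each transition whose label is a proper $\bmoplus$-sum into its summands. By Lemma~\ref{LEM:comp:complex} a single value $\spp_1\circ\spp_2$ is computable in time $O(|\spp_1|+|\spp_2|)$, and inspecting Definition~\ref{DEF:sym:comp} shows it is also of size $O(|\spp_1|+|\spp_2|)$ (at most three pairing-spec terms, each of length linear in the inputs); hence the hypothesis of Lemma~\ref{LEM:product}.2 holds for $\lop=\circ$, so $\trt\circ\trs$ and its transition set are computable in time $O(\card\delta\szw{\delta'}+\card{\delta'}\szw\delta)$, with the extra $O(\card\Gamma)$ absorbing the one-time preprocessing of $\Gamma$ (availability of $|\Gamma|$ and of the sorted array $\garr$) needed by the set-spec subroutines of Definition~\ref{DEF:sym:comp} and by the $\ew$-loop modification of Definition~\ref{DEF:product}. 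The second phase multiplies the number of transitions by at most a constant factor and runs in time linear in $\sz{\trt\circ\trs}$, so the whole construction stays within the stated bound.

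The routine parts here are the two chains of (in)equalities; the only place needing a little care, and the nearest thing to an obstacle, is the seam between the $\srel_+[\Gamma]$-valued composition and the machinery being invoked: one must check that a composed label, even when it is a sum of several pairing specs, still has length linear in the two input labels (so that Lemma~\ref{LEM:product}.2 applies and that, in the correctness chain, the expanded product $\exp\trt\circ\exp\trs$ really is the classical composition of ordinary transducers), and that splitting those sums afterwards changes neither the realized relation (the preceding lemma) nor the order of the graph size. All of this is bookkeeping on top of Definitions~\ref{DEF:sym:comp} and~\ref{DEF:transd:compose}, and I expect no substantive difficulty.
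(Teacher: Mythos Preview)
Your proposal is correct and follows essentially the same route as the paper: the same four-step chain $\rel{\trt\tcomp\trs}=\rel{\trt\circ\trs}=\rel{\exp\trt\circ\exp\trs}=\rel{\exp\trt}\circ\rel{\exp\trs}=\rel{\trt}\circ\rel{\trs}$, and the same two-phase complexity argument (product via Lemma~\ref{LEM:product}.2 using Lemma~\ref{LEM:comp:complex}, then linear-time splitting of $\bmoplus$-sums). Your discussion of the ``seam'' and the explicit accounting for the $O(\card\Gamma)$ preprocessing are, if anything, slightly more careful than the paper's own write-up.
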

\begin{proof}
The algorithm computes the transducer $\trt\circ\trs$ using the product construction in Def.~\ref{DEF:product}. As the composition $\spp\circ\spp'$ of any two labels of $\trt,\trs$ can be computed in linear time,  we have that  $\trt\circ\trs$ can be computed in time $O(\card\delta\szw{\delta'}+\card{\delta'}\szw\delta)$. Then, in linear time, the algorithm replaces each  transition $(p,\spp_1\bmoplus\cdots\bmoplus\spp_\ell,q)$ of  $\trt\circ\trs$, with $\ell>1$, with the $\ell$ transitions $(p,\spp_i,q)$. Now we have
	\begin{align}
		\rel{\trt\tcomp\trs} &= \rel{\trt\circ\trs}\\
		 &= \rel{\exp\trt\circ\exp\trs}\label{AL:tproductA} \\
		 &= \rel{\exp\trt}\circ\rel{\exp\trs}\label{AL:tproductB} \\
		 &= \rel{\trt}\circ\rel{\trs}.
	\end{align} 
	Statement \eqref{AL:tproductA} follows from Theorem~\ref{TH:product} and Ex.~\ref{EX:product}, and statement~\eqref{AL:tproductB} follows from Lemma~\ref{LEM:exp}.
\end{proof}

\section{Transducer Identity and Functionality}\label{SEC:functionality}\prlabel{SEC:functionality}

The question of whether a given transducer is functional is of central importance in the theory of rational relations \cite{Sak:2009}. Also important is the question of whether a given transducer $\trt$ realizes an \emdef{identity}, that is, whether $\trt(w)=\{w\}$, when $|\trt(w)|>0$. 
In \cite{AllMoh:2003}, the authors present an algorithm $\identityP(\trt)$ that works in time $O(|\delta|+|Q||\Delta|)$ and tells whether $\trt=(Q,\Sigma,\Delta,\delta,I,F)$ realizes an identity. In view of Lemma~\ref{LEM:trim}, we have that  
\begin{equation}\label{eq:identityP}
\text{for trim $\trt$, }\>\identityP(\trt) \>\text{ works in time }\> O(|\delta||\Delta|).
\end{equation}
 The algorithm $\functionalityP(\trs)$  deciding functionality of a transducer $\trt=(Q,\Gamma,\delta,I,F)$ first  constructs the \emph{square transducer} $\tru$, \cite{BeCaPrSa2003}, in which the set of transitions $\delta_{\tru}$ consists of tuples $((p,p'),y/y',(q,q'))$ such that $(p,x/y,q)$ and $(p',x/y',q')$ are any transitions in $\trt^{\ew}$. Then, it follows that $\trt$ is functional if and only if $\tru$ realizes an identity. Note that $\tru$ has $O(\card\delta^2)$ transitions and its graph size is $O(\sz{\trt}^2)$. Thus, we have that
\begin{equation}\label{eq:functionalityP}
\text{for trim $\trt$, }\>\functionalityP(\trt) \>\text{ works in time }\> O(|\delta|^2|\Delta|).
\end{equation}

\begin{lemma}\mylabel{LEM:identity}
	Let $\trs=(Q,\srel[\Gamma],\delta,I,F)$ be a trim transducer with  set specs. If any label $\spp$ of $\trs$ satisfies one of the following conditions then $\trs$ does not realize an identity. (Below, $F,G$ are set specs other than $\ews$.) 
	\pssi (C1) $\spp$ is of the form $F/G$ or $F/\ews$ or $\ews/G$, and $\card{\lof{F}}>1$ or $\card{\lof{G}}>1$.
	\pssi In the following conditions, $\spp$ is of the form $\sdiff{F}{G}$.
	\pssi (C2) $\card{\lof{F}}>2$ or $\card{\lof{G}}>2$.
	\pssi (C3) $\card{\lof{F}}=2$ and $\card{\lof{G}}=2$.
	\pssi (C4) $\card{\lof{F}}=1$ and $\card{\lof{G}}=2$ and $\lof{F}\cap\lof{G}=\eset$.
	\pssi (C5) $\card{\lof{F}}=2$ and $\card{\lof{G}}=1$ and $\lof{F}\cap\lof{G}=\eset$.
	\pssn Testing whether there is a label of $\trs$ satisfying one of the above conditions can be done in time $O(\szw{\delta})$.
\end{lemma}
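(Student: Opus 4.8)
The plan is to convert identity‑realization into a word equation involving the relation of a single label. By the convention of Section~\ref{SEC:graphs}, ``$\trs$ realizes an identity'' means ``$\exp\trs$ realizes an identity'', and by Lemma~\ref{LEM:exp} $\exp\trs$ is a transducer in standard form. Every accepting path of a standard-form transducer realizes exactly one word pair, namely the concatenation of its input letters against that of its output letters (the empty-word symbol contributing nothing); hence $\trs$ realizes an identity iff, for every accepting path $\sequ{q_{i-1},\spp_i,q_i}_{i=1}^{\ell}$ of $\trs$ and every choice $(a_i,b_i)\in\rel{\spp_i}$, one has $a_1\cdots a_\ell=b_1\cdots b_\ell$. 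First I would dispose of empty-relation labels: by Lemma~\ref{LEM:sym:inverse} these are exactly the labels $\sdiff FG$ with $\lof F=\lof G=\{g\}$, they satisfy none of (C1)--(C5), and the transitions carrying them can be located and removed (then re-trimming, which leaves $\rel\trs$ unchanged) in time $O(\szw\delta)$. Thus I may assume every label of $\trs$ has non-empty relation, so every transition lies on an accepting path all of whose labels have non-empty relation.

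The core argument then runs as follows. Assume some transition $(p,\spp,q)$ carries a label $\spp$ satisfying one of (C1)--(C5), and suppose for contradiction that $\trs$ realizes an identity. Since $\trs$ is trim, $p$ occurs on an accepting path, so there is a path $R_1$ from an initial state to $p$; similarly there is a path $R_2$ from $q$ to a final state; splicing, $R_1\,(p,\spp,q)\,R_2$ is an accepting path through the transition. Choosing a pair in the relation of every label of $R_1$ and $R_2$ (possible, as all these relations are non-empty), let $u,v$ (resp.\ $u',v'$) be the concatenation of the chosen input (resp.\ output) letters along $R_1,R_2$. For each $(x,y)\in\rel\spp$ this data gives an accepting run of $\exp\trs$ with input $uxv$ and output $u'yv'$, so the identity hypothesis forces $uxv=u'yv'$ for all $(x,y)\in\rel\spp$. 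It therefore suffices to show that no words $u,u',v,v'$ can satisfy this once $\spp$ meets one of (C1)--(C5).

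For (C1), $\spp$ is $F/G$, $F/\ews$ or $\ews/G$ and $\card{\lof F}>1$ or $\card{\lof G}>1$, so $\rel\spp$ has two distinct pairs agreeing in one coordinate: e.g.\ if $\card{\lof F}>1$, pick distinct $a,a'\in\lof F$ and (using Lemma~\ref{LEM:ssets:lang}) some $b$ in the non-empty right language, so $uav=u'bv'=ua'v$, forcing $a=a'$ -- impossible; the other subcases are symmetric (two inputs against one output, or one input against two outputs). For (C2)--(C5), a direct count from $\card{\rel{\sdiff FG}}=\card{\lof F}\cdot\card{\lof G}-\card{\lof F\cap\lof G}$ shows these four conditions are precisely the cases with $\card{\rel{\sdiff FG}}\ge2$. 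Pick distinct $(x_1,y_1),(x_2,y_2)\in\rel\spp$; note $x_1,x_2,y_1,y_2\in\Gamma$ and $x_i\ne y_i$. Comparing lengths in $uxv=u'yv'$ gives $\card u+\card v=\card u'+\card v'$; assume w.l.o.g.\ $\card u\le\card u'$ (the reverse case is symmetric), so $\card{v'}\le\card v$, hence $u'=us$ and $v=tv'$ with $\card s=\card t$, and cancelling $u$ and $v'$ reduces the identity to $x_i\,t=s\,y_i$ for $i=1,2$. If $s=t=\ew$ then $x_1=y_1$, contradicting $x_1\ne y_1$; otherwise the first letters force $x_1=x_2$ (both equal the first letter of $s$) and the last letters force $y_1=y_2$ (both equal the last letter of $t$), contradicting $(x_1,y_1)\ne(x_2,y_2)$. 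I expect this word-equation step for $\sdiff FG$ to be the main obstacle; everything else is routine bookkeeping.

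Finally, for the $O(\szw\delta)$ bound: for each transition $(p,\spp,q)$ I determine the form of $\spp$ by one scan of the string (locate the symbol $/$, and test whether the part after it ends with ${\bm\neq}$ or equals ${\bm=}$); the forms $\ews/\ews$ and $\ssame F$ are rejected at once, since $\rel{\ews/\ews}$ and $\rel{\ssame F}$ are partial identities. For any remaining form I check the cardinality conditions on $\lof F$ and $\lof G$ against the fixed constants $1,2,3$ via Lemma~\ref{LEM:ssets:algos}, and in the single residual case of (C4)/(C5) I check $\lof F\cap\lof G=\eset$ by computing $F\cap G$ via Lemma~\ref{LEM:ssets} and testing whether it equals $\undef$ (legitimate because $F,G\ne\ews$, by Definition~\ref{DEF:ssets} and Lemma~\ref{LEM:ssets:lang}); each such test takes time $O(\card\spp)$. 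Summing $O(\card\spp)$ over all transitions yields $O(\szw\delta)$.
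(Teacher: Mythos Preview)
Your reduction step does not go through. You delete the transitions whose label has empty relation and then re-trim, noting correctly that $\rel\trs$ is unchanged and that the deleted labels themselves satisfy none of (C1)--(C5). But re-trimming can remove \emph{further} transitions---those whose endpoints have become non-useful---and one of these may well be the transition carrying the label that satisfies (C1)--(C5). Concretely, take $\Gamma=\{0,1\}$, states $s,p,f$ with $I=\{s\}$, final set $\{f\}$, and transitions $(s,\sdiff{\sone0}{\sone0},p)$ and $(p,\sany/\sany,f)$: this $\trs$ is trim, the second label satisfies (C1), yet $\rel\trs=\emptyset$ is (vacuously) an identity; after your pre-processing both transitions disappear. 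So the hypothesis does not transfer to the reduced transducer, and in fact the example shows the lemma, read literally, is false. The paper's own proof has the same blind spot: it asserts that trimness of $\trs$ yields an accepting path of $\exp\trs$ through $(p,\spp,q)$, which fails precisely when every accepting path of $\trs$ through that transition also traverses an empty-relation label. The honest repair is to add the hypothesis that no label of $\trs$ has empty relation (equivalently, that $\exp\trs$ is trim); under that hypothesis both your argument and the paper's are fine.

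With that caveat, the rest of your proof is correct, and in one place strictly sharper than the paper's. For (C2)--(C5) the paper only says ``one works as above'', meaning: exhibit two pairs in $\rel\spp$ sharing a coordinate, as in (C1). This does cover (C2), (C4), (C5) and most of (C3), but in the subcase of (C3) with $\lof F=\lof G=\{a,b\}$ one has $\rel\spp=\{(a,b),(b,a)\}$, where no two pairs share a coordinate, so the (C1) template does not apply verbatim. Your uniform word-equation argument---reduce $uxv=u'yv'$ to $x_i\,t=s\,y_i$ and compare first and last letters---handles this case cleanly and is the right way to fill in the paper's sketch. Your observation that (C2)--(C5) are exactly the cases with $\card{\rel{\sdiff FG}}\ge2$ is correct and a nice way to organise the case split. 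The complexity discussion matches the paper's.
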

\begin{proof}
	Suppose (C1) is true. We only present the subcase where $\spp=F/G$ and $\card{\lof F}>1$ (the other subcases can be dealt with similarly). Then, there are $f_1,f_2\in\lof F$, with $f_1\neq f_2$, and $y\in\lof G$. Also, $\exp\trs$ has two transitions of the form $(p,f_1/y,q)$ and $(p,f_2/y,q)$. As $\trs$ is trim, there is a path from $I$ to $p$ with some label $u/v$ and a path from $p$ to $F$ with some label $u'/v'$. As $(uf_1u',vyv'),(uf_2u',vyv')\in\rel{\exp\trs}$ and $f_1\neq f_2$, $\exp\trs$ cannot realize an identity. Now suppose one of (C2)--(C5) is true. One works as above and shows that again $\exp\trs$ cannot realize an identity.
	For the time complexity,  Lemma~\ref{LEM:ssets:algos} implies that each condition can be tested in time $O(\spp)$. For all transitions $(p,\spp,q)\in\delta$ this can be done in time $O(\szw\delta)$. 
\end{proof}

\begin{theorem}\mylabel{TH:identity}
	The question of whether a trim transducer $\trs=(Q,\srel[\Gamma],\delta,I,F)$ with set specs realizes an identity can be answered in time $O\big(\card\delta\card\Gamma\big)$.
\end{theorem}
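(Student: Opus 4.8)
The plan is to reduce the question to the identity-testing algorithm $\identityP$ of \cite{AllMoh:2003} applied to the ordinary transducer $\exp\trs$, and to use Lemma~\ref{LEM:identity} to guarantee that passing to this expansion does not blow up the instance too much.

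First I would run the test of Lemma~\ref{LEM:identity}: decide whether some label of $\trs$ satisfies one of the conditions (C1)--(C5). By that lemma this costs $O(\szw{\delta})$, which is $O(\card\delta\card\Gamma)$ by Remark~\ref{REM:deltaszw}. If some label does, I output ``$\trs$ does not realize an identity'', which is correct by Lemma~\ref{LEM:identity}.

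Otherwise no label of $\trs$ satisfies (C1)--(C5); then I would build the ordinary transducer $\exp\trs=(Q,\underline{\Gamma^*\times\Gamma^*},\delta_{\exp},I,F)$, which is a transducer in standard form with input and output alphabet $\Gamma$ (Lemma~\ref{LEM:exp}), and feed it to $\identityP$. The crucial point is a size bound: for a label $\spp$ that does not satisfy (C1)--(C5), the relation $\rel\spp$ is small. If $\spp=\ssame F$ then $\card{\rel\spp}=\card{\lof F}\le\card\Gamma$. In every other case $\card{\rel\spp}\le 1$: for $F/G$, $F/\ews$ and $\ews/G$, failure of (C1) forces $\card{\lof F}=\card{\lof G}=1$ (for whichever of $F,G$ is present); for $\sdiff F G$, failures of (C2)--(C5) leave only $(\card{\lof F},\card{\lof G})\in\{(1,1),(1,2),(2,1)\}$, with $\lof F\cap\lof G\neq\eset$ in the last two cases, and each of these describes at most one pair (the empty relations, e.g.\ $\sdiff F G$ with $\lof F=\lof G=\{g\}$, simply contribute no transition to $\exp\trs$, in accordance with Definition~\ref{DEF:graph:behave}). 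Hence $\card{\delta_{\exp}}\le\card\delta\cdot\card\Gamma$, whereas the state set $Q$ is unchanged, so $\card Q\le 2\card\delta+1$ since $\trs$ is trim, and the output alphabet has at most $\card\Gamma$ symbols. Constructing $\delta_{\exp}$ costs $O(\card\delta\card\Gamma)$: each $\ssame F$ transition is expanded by enumerating $\lof F$ in time $O(\card\Gamma)$ (by merging $\garr$ with the word of $F$ when $F=\snone w$, and reading off $w$ or $\garr$ otherwise), while each remaining transition produces its unique pair in time $O(\szabc\spp+\log\card\Gamma)$ by Lemma~\ref{LEM:ssets:algos}, for a total of $O(\szw{\delta}+\card\delta\log\card\Gamma)=O(\card\delta\card\Gamma)$.

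Finally I would run $\identityP(\exp\trs)$, invoking its running time in the form $O(\card{\delta_{\exp}}+\card Q\card\Delta)$ from \cite{AllMoh:2003} rather than the trim-specialised bound \eqref{eq:identityP}: with $\card{\delta_{\exp}}=O(\card\delta\card\Gamma)$, $\card Q=O(\card\delta)$ and $\card\Delta\le\card\Gamma$ this is $O(\card\delta\card\Gamma)$. By the Convention preceding Example~\ref{EX:sym:transd}, ``$\trs$ realizes an identity'' means ``$\exp\trs$ realizes an identity'', and $\exp\trs$ is precisely the transducer passed to $\identityP$, so its answer is the required one; together with the first phase the total time is $O(\card\delta\card\Gamma)$. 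I expect the main obstacle to be the case analysis behind the bound $\card{\delta_{\exp}}=O(\card\delta\card\Gamma)$ — one has to verify that each conjunction of ``not~(C$i$)'' leaves a label whose expansion is a single transition or a diagonal block of at most $\card\Gamma$ transitions — together with the point that one must use the refined $O(\card{\delta_{\exp}}+\card Q\card\Delta)$ running time of $\identityP$ (the coarser $O(\card{\delta_{\exp}}\card\Delta)$ bound would only yield $O(\card\delta\card\Gamma^2)$), which is legitimate exactly because $\exp\trs$ keeps the small state set of the trim transducer $\trs$.
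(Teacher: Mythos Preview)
Your proposal is correct and follows essentially the same approach as the paper's own proof: test conditions (C1)--(C5), and if none holds expand $\trs$ and invoke $\identityP(\exp\trs)$, bounding $\card{\delta_{\exp}}=O(\card\delta\card\Gamma)$ via the same case analysis and then using the $O(\card{\delta_{\exp}}+\card Q\card\Gamma)$ running time of $\identityP$. Your treatment is in fact slightly more explicit than the paper's on two points: the case split for $\sdiff F G$ when (C2)--(C5) fail, and the observation that one must invoke the general $O(\card{\delta_{\exp}}+\card Q\card\Delta)$ bound from \cite{AllMoh:2003} rather than the trim-specialised bound~\eqref{eq:identityP} (since $\exp\trs$ need not itself be trim).
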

\begin{proof}
	As $\trs$ is trim, we have that $\card{Q}\le2\card\delta+1$. First, the algorithm goes through the labels of $\trs$ and returns \False the first time a label $\spp$ satisfies one of the  conditions (C1)--(C5) in Lemma~\ref{LEM:identity}. 	Now suppose that no label $\spp$ of $\trs$ satisfies any of those conditions. Then, the algorithm  computes $\exp\trs$ and returns what $\identityP(\exp\trs)$ returns. For each transition $(p,\spp,q)\in\delta$ 	the corresponding transition(s)  $(p,x/y,q)\in\delta_{\exp}$ are computed depending on the following five \underline{cases}  about the form of $\spp$. 
	\begin{enumerate}
		\item $(\ews/\ews)$: Then, $x/y=\ews/\ews$.
		\item $(F/G)$ or $(F/\ews)$ or $(\ews/G)$: As (C1) is false, $\lof{F}=\{f\}$ and/or $\lof{G}=\{g\}$.  Then $x/y=f/g$ or $x/y=f/\ews$ or $x/y=\ews/g$, depending on whether $\spp=F/G$ or $\spp=F/\ews$ or $\spp=\ews/G$, respectively.
		\item $(\ssame{F})$: $x/y\in\{(f,f)\mid f\in\lof F\}$.
		\item $(\sdiff{F}{G})$: with $\lof{F}=\{f\}$ and $\lof{G}=\{g\}$. If $f=g$  then $\rel{\spp}=\eset$, so no label $x/y$ is defined. If $f\neq g$ then $x/y=f/g$.
		\item $(\sdiff{F}{G})$: with $\lof{F}=\{f\}$ and $\lof{G}=\{f,g\}$, or $\lof{F}=\{f,g\}$ and $\lof{G}=\{g\}$. Then $x/y=f/g$.
	\end{enumerate}	
	All cases other than the third one result in at most one transition for each $(p,\spp,q)\in\delta$. The third case results into $O(\card{\Gamma})$ transitions. Thus, $\card{\delta_{\exp}}=O(\card\delta\card\Gamma)$. Then, as $\sz{\exp\trs}=\card{\delta_{\exp}}+\card{Q}$ and $\card{Q}\le 2\card\delta+1$, we have that
	\begin{equation}\label{eq:exptrs}
			\card{\delta_{\exp}}=O(\card\delta\card\Gamma) \quad\text{and}\quad \sz{\exp\trs}=O(\card\Gamma\card\delta).
	\end{equation}
	\pnsi
	The \underline{correctness} of the algorithm follows from Lemma~\ref{LEM:identity} and the fact that $\rel{\trs}=\rel{\exp\trs}$.
	\pnsi
	Now we establish the claim about the time \underline{complexity}. The total time consists of three parts: $T_1$ = time to test conditions (C1)--(C5); $T_2$ = time to construct $\exp\trs$; and $T_3$ = time to execute $\identityP(\exp\trs)$. Lemma~\ref{LEM:identity} implies that $T_1=O(\szw\delta)$. 
	For $T_2$, we have that
	$$T_2=\sum_{e=(p,\spp,q)\in\delta}C_\spp,$$
	where $C_\spp$ is the cost of computing the set of $x/y$ for which $(p,x/y,q)\in\delta_{\exp}$. We show that $C_\spp=O(\card\Gamma)$, which implies that $T_2=O\big(\card\delta\card\Gamma\big)$. Using Lemma~\ref{LEM:ssets:algos}, testing for things like $\card{\lof F}\ge2$ can be done in time $O(\szabc{F})$ and also the same time for computing the single element of $\lof F$ when $\card{\lof F}=1$. The most time intensive task can be in the third case above: compute $\lof F$ when $F=\sone w$ and $|w|=\card{\Gamma}-1$, or $F=\snone w$ and $|w|=1$. In the former case, $\lof F$ is computed in time $O(|w|)$ by simply reading off $w$. In the latter case, we can read $\Gamma$ and make the word $u=\wo(\Gamma)$, and then use  Lemma~\ref{LEM:ssets} to compute $\sone u\cap\snone w$ in time $O(\card{\Gamma})$, which is of the form $\sone v$ and equal to $\lof F$. For $T_3$, statement \eqref{eq:identityP} implies that $\identityP(\exp\trs)$ works in time $O\big(\card{\delta_{\exp}}+\card{Q}\card\Gamma\big)$, which is $O(\card\delta\card\Gamma)$ using \eqref{eq:exptrs} and $\card{Q}\le 2\card\delta+1$. Hence, $T_3=O(\card\delta\card\Gamma)$. Thus, $T_1+T_2+T_3=O(\card\delta\card\Gamma)$ using Remark~\ref{REM:deltaszw}. $\Box$
\end{proof}

\begin{remark}\label{REM:tr:id}\prlabel{REM:tr:id}
	Consider the trim transducer $\trs$ with set specs in the above theorem. Of course one can test whether it realizes an identity by simply using $\identityP(\exp\trs)$, which would work in time $O(\card{\delta_{\exp}}\card\Gamma)$ according to \eqref{eq:identityP}. This time complexity is clearly higher than the time $O(\card\delta\card\Gamma)$ in the above theorem when $\card{\delta_{\exp}}$ is of order $\card\delta\card\Gamma$ or $\card\delta\card{\Gamma}^2$ (for example if $\trs$ involves labels $\ssame{\sany}$ or $\sany/\sany$).
\end{remark}

\begin{theorem}\mylabel{TH:funct}
	The question of whether a trim transducer $\trs=(Q,\srel[\Gamma],\delta,I,F)$ with  set specs is functional can be answered in time $O(\card\delta^2\card\Gamma)$. 
\end{theorem}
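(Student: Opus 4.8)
The plan is to reduce functionality of $\trs$ to the identity question already settled in Theorem~\ref{TH:identity}, by performing the classical \emph{square transducer} reduction directly at the level of transducers with set specs. The key remark is that the square transducer of $\trs$ is, up to behaviour, the composition $\trsinv\tcomp\trs$: by Lemma~\ref{LEM:expanded} and Theorem~\ref{TH:compose},
\[
\rel{\trsinv\tcomp\trs}=\rel{\trsinv}\circ\rel{\trs}=\rel{\trs}^{-1}\circ\rel{\trs}.
\]
Since a relation $R$ is functional exactly when $R^{-1}\circ R$ is contained in the diagonal, i.e.\ when $R^{-1}\circ R$ realizes an identity, this yields the equivalence: $\trs$ is functional iff $\trsinv\tcomp\trs$ realizes an identity. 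Both assertions are read through the standing convention that a statement about the behaviour of a transducer with set specs is a statement about its expansion, so each side refers unambiguously to the underlying relation; and computing the trim part does not change the behaviour.

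First I would build $\trt$, the inverse $\trsinv$ of $\trs$: by Lemma~\ref{LEM:sym:inverse} each label $\spp$ is inverted in time $O(\szabc\spp)$, so $\trt$ is obtained in time $O(\szw\delta)$, with $\card{\delta_\trt}=\card\delta$ and $\szw{\delta_\trt}=\szw\delta$ (inversion does not change label length). Since inversion preserves the underlying graph and its accepting paths, $\trt$ is trim. Then I would apply Theorem~\ref{TH:compose} to the trim transducers $\trt$ and $\trs$, obtaining $\trsinv\tcomp\trs$ in time $O\big(\card\Gamma+\card{\delta_\trt}\szw\delta+\card\delta\szw{\delta_\trt}\big)=O(\card\Gamma+\card\delta\szw\delta)$, which by Remark~\ref{REM:deltaszw} is $O(\card\Gamma\card\delta^2)$. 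I would also record the size of the outcome: the $\circ$-product creates $O(\card\delta^2)$ transitions (Lemma~\ref{LEM:product} together with Lemma~\ref{LEM:trim}), and splitting each $\bmoplus$-term into a separate transition (Def.~\ref{DEF:transd:compose}) multiplies this by at most $3$, since the only multi-term case in Def.~\ref{DEF:sym:comp} yields three $\bmoplus$-terms; hence $\trsinv\tcomp\trs$ has $O(\card\delta^2)$ transitions, each of which is a pairing spec respecting $\Gamma$ and so of size $O(\card\Gamma)$ (Lemma~\ref{LEM:sym:restrict}).

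Next I would take the trim part $\tru$ of $\trsinv\tcomp\trs$, in time linear in its graph size $O(\card\Gamma\card\delta^2)$; this leaves $\card{\delta_\tru}=O(\card\delta^2)$ and, by Lemma~\ref{LEM:trim}, $\card{Q_\tru}=O(\card\delta^2)$. Finally I would run the algorithm of Theorem~\ref{TH:identity} on the trim transducer with set specs $\tru$; it decides whether $\tru$ realizes an identity in time $O(\card{\delta_\tru}\card\Gamma)=O(\card\delta^2\card\Gamma)$, which by the equivalence of the first paragraph is exactly whether $\trs$ is functional. Summing the costs of the four phases --- inversion, composition, trimming, identity test --- gives the claimed bound $O(\card\delta^2\card\Gamma)$.

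The step I expect to demand the most care is the correctness equivalence, namely showing that $\trs$ being functional (which, by convention, means $\exp\trs$ being functional) is the same as $\trsinv\tcomp\trs$ realizing an identity. This combines the set-theoretic fact that $R$ is functional iff $R^{-1}\circ R$ is a subset of the diagonal, the behaviour identities $\rel{\trsinv}=\rel{\trs}^{-1}$ (Lemma~\ref{LEM:expanded}) and $\rel{\trt\tcomp\trs}=\rel{\trt}\circ\rel{\trs}$ (Theorem~\ref{TH:compose}), and the observation that all these statements are invariant under passing to expansions and to trim parts. Everything else is size bookkeeping in the style of the proof of Theorem~\ref{TH:identity}, with the final estimate furnished by Remark~\ref{REM:deltaszw}.
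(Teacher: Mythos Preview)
Your proposal is correct and follows essentially the same route as the paper: compose $\trs$ with its inverse using Theorem~\ref{TH:compose}, then apply the identity test of Theorem~\ref{TH:identity}, with the size and time bookkeeping handled exactly as you describe. The only noteworthy differences are that you compose in the order $\trsinv\tcomp\trs$ (which, under the paper's composition convention $(u,v)\circ(v,z)=(u,z)$, is indeed the order that yields $R^{-1}\circ R$ and hence characterizes functionality---the paper writes $\trs\circ\trsinv$, which under that same convention would test injectivity), and that you explicitly trim before invoking Theorem~\ref{TH:identity}, a detail the paper omits but which is required since that theorem assumes a trim input.
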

\begin{proof}
	Consider any trim transducer $\trs$ with set specs. The algorithm consists of two main parts. First, the algorithm computes $\trsinv$ and then the transducer with set specs $\tru=\trs\circ\trsinv$ using the product construction in Def.~\ref{DEF:product}. The second part is to test whether $\tru$ realizes an identity using Theorem~\ref{TH:compose}. As the composition of any two labels $\beta,\beta'$ of $\trs,\trsinv$ results in at most three labels, we have that $\tru$ has $O(\card{\delta}^2)$ transitions and is of size $O(\card\delta\szw\delta)$, and can be computed in time $O(\card\delta\szw\delta)$. Thus, testing $\tru$ for identity can be done in time $O(\card\delta^2\card\Gamma)$. So the total time of the algorithm is of order $\card\delta\szw\delta+\card\delta^2\card\Gamma$, which is $O(\card\delta^2\card\Gamma)$ by Remark~\ref{REM:deltaszw}.
	For the \underline{correctness} of the algorithm we have that 
	\begin{align}
		\rel\trs \text{ is functional } & \text{iff } \rel{\exp\trs}  \text{ is functional }\\
		& \text{iff } \rel{\exp\trs\circ(\exp\trs)^{-1}}  \text{ is an identity }\label{al:functID} \\
		& \text{iff } \rel{\exp\trs\circ(\exp\trsinv)}  \text{ is an identity }\label{al:inv} \\
		& \text{iff } \rel{\exp\trs}\circ\rel{\exp\trsinv}  \text{ is an identity }\\
		& \text{iff } \rel{\trs}\circ\rel{\trsinv}  \text{ is an identity. }
	\end{align}
	Statement \eqref{al:functID} follows from the fact that a relation $R$ is functional iff $R\circ R^{-1}$ is an identity---see also Lemma~5 of \cite{AllMoh:2003}. Statement \eqref{al:inv} follows from Lemma~\ref{LEM:expanded}.  $\Box$
\end{proof}

\begin{remark}\label{REM:tr:funct}\prlabel{REM:tr:funct}
	Consider the trim transducer $\trs$ with set specs in the above theorem. Of course one can test whether $\trs$ is functional by simply using $\functionalityP(\exp\trs)$, which would work in time $O(\card{\delta_{\exp}}^2\card\Gamma)$ according to \eqref{eq:functionalityP}. This time complexity is clearly higher than the time $O(\card\delta^2\card\Gamma)$ in the above theorem when $\card{\delta_{\exp}}$ is of order $\card\delta\card\Gamma$ or $\card\delta\card{\Gamma}^2$ (for example if $\trs$ involves labels $\ssame{\sany}$ or $\sany/\sany$).
\end{remark}

\section{Transducers and Independent Languages}\label{SEC:independence}\prlabel{SEC:independence}
Let $\trt$ be a transducer. A language $L$ is called \emdef{$\trt$-independent}, \cite{Shyr:Thierrin:relations}, if 
	\begin{equation}\mylabel{EQ:indep1}
	u,v\in L \text{ and } v\in \trt(u)\>\> \text{ implies }\> u=v.
	\end{equation}
	If the transducer $\trt$ is input-altering then, \cite{Kon:2017}, the above condition is equivalent to
	\begin{equation}\mylabel{EQ:indep2}
	\trt(L)\cap L=\emptyset.
	\end{equation}
	The \emdef{property described} by $\trt$ is the set of all $\trt$-independent languages. Main examples of such properties are code-related properties. For example, the transducer $\trt_{\rm sub2}$ describes all the 1-substitution error-detecting languages and $\trt_{\rm px}$ describes all prefix codes. The \emdef{property satisfaction} question is whether, for given transducer $\trt$ and regular language $L$, the language $L$ is $\trt$-independent. The \emdef{witness} version of this question is to compute a pair $(u,v)$ of different $L$-words (if exists) violating  condition~\eqref{EQ:indep1}.

\begin{remark}\label{REM:sat:question}\prlabel{REM:sat:question}
	The witness version of the property satisfaction question for input-altering transducers $\trs$ (see Eq.~\eqref{EQ:indep2}) can be answered in time $O(\szabc\trs\cdot\szabc\auta^2)$, where $\auta$ is the given $\ew$-NFA accepting $L$  (see \cite{Kon:2017}). This can be done using the function call
	\[\nonEmptyW(\trs\rinp\auta\rout\auta) .\]
	  Further below we show that the same question can be answered  even when $\trs$ has set specs, and this could lead to time savings. 
\end{remark}

\begin{corollary}\mylabel{LEM:transd:restr}
	Let $\trs=(Q,\srel[\Gamma],\delta,I,F)$ be a transducer with set specs and let $\autb=(Q',\Gamma_{\ews},\delta',I',F')$ be an $\ew$-NFA. Each transducer $\trs\rinp\autb$ and $\trs\rout\autb$ can be computed in time $O(\card\Gamma+ \card\delta\szw{\delta'}+\card{\delta'}\szw{\delta})$. Moreover, we have that
	\[
	\rel{\trs\rinp\autb} = \rel{\trs}\rinp\lof{\autb}\quad\text{and}\quad \rel{\trs\rout\autb}=\rel{\trs}\rout\lof{\autb}.
	\]
\end{corollary}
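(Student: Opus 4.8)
The plan is to reduce this corollary almost entirely to results already in hand. The assertion about $\trs\rinp\autb$ --- both the time bound and the identity $\rel{\trs\rinp\autb}=\rel{\trs}\rinp\lof{\autb}$ --- is exactly Lemma~\ref{LEM:transd}.2, with $\autb$ playing the role of the trim $\ew$-NFA $\auta$ there, so it suffices to cite that lemma. Thus the only genuinely new content is the parallel statement for $\trs\rout\autb$, which I would obtain by rerunning the proof of Lemma~\ref{LEM:transd}.2 with the polymorphic operation ``$\rout:\srel[\Gamma]\times\Delta_{\ews}\Rightarrow\srel[\Gamma]$'' of Ex.~\ref{EX:polymor:new} (taking $\Delta=\Gamma$, so that the $\ew$-NFA $\autb$, a type $\Gamma_{\ews}$ labelled graph, has the required second type) in place of ``$\rinp$''.

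First I would record that $\autb$ is a type $\Delta_{\ews}$ labelled graph with behaviour $\call$ and that $\mon\Delta_{\ews}=\Gamma^*$ is the output component of $\mon\srel[\Gamma]=\Gamma^*\times\Gamma^*$, so that Def.~\ref{DEF:product} produces a well-defined type $\srel[\Gamma]$ graph $\trs\rout\autb$. For the complexity claim I would apply Lemma~\ref{LEM:product}.2: by Ex.~\ref{EX:polymor:new} the label value $\spp\rout x=(\sppinv\rinp x)^{-1}$ can be computed from $\spp$ and $x$ in time, and is of size, $O(\szabc\spp)$, so the transition set of $\trs\rout\autb$ has size $O(\card\delta\szw{\delta'}+\card{\delta'}\szw\delta)$ and is computable within time of the same order; the extra additive $O(\card\Gamma)$ is the same one-time cost of accessing $\Gamma$ (the sorted array $\garr$) that already appears in Lemma~\ref{LEM:aut}.2 and Lemma~\ref{LEM:transd}.2.

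For the behaviour identity I would reproduce, verbatim, the three-step chain used for the $\rinp$ case:
\begin{align*}
\rel{\trs\rout\autb} &= \rel{\exp\trs\rout\exp\autb} \\
&= \rel{\exp\trs}\rout\lof{\exp\autb} \\
&= \rel{\trs}\rout\lof{\autb},
\end{align*}
where the first equality is Theorem~\ref{TH:product} applied to the polymorphic operation $\rout$, the second holds because $\exp\trs$ is a transducer in standard form, $\exp\autb$ is an $\ew$-NFA, and the classical restriction product $\rout$ on such objects satisfies $\rel{\trt\rout\auta}=\rel\trt\rout\lof\auta$ as recalled in Ex.~\ref{EX:product}, and the third uses $\rel{\exp\trs}=\rel{\trs}$ and $\lof{\exp\autb}=\lof{\autb}$ from Lemma~\ref{LEM:exp}.

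Since every ingredient is already available, I do not expect a real obstacle; the only points needing a little care are the routine ones also met in the proof of Lemma~\ref{LEM:transd}.2 --- checking that $\exp\trs$ and $\exp\autb$ are precisely the objects on which the restriction products of \cite{Kon:2002} recalled in Ex.~\ref{EX:product} are defined, so that they can be invoked, and tracking why the additive $O(\card\Gamma)$ term in the time bound cannot be dropped.
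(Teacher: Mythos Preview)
Your proposal is correct and follows essentially the same route as the paper: invoke Lemma~\ref{LEM:product} for the time bound, then establish the behaviour identity via the three-step chain $\rel{\trs\rout\autb}=\rel{\exp\trs\rout\exp\autb}=\rel{\exp\trs}\rout\lof{\exp\autb}=\rel{\trs}\rout\lof{\autb}$ using Theorem~\ref{TH:product}, Ex.~\ref{EX:product}, and the expansion lemmas. Your observation that the $\rinp$ case is already Lemma~\ref{LEM:transd}.2 is a valid shortcut the paper does not take (it simply repeats the argument); one minor slip is that the equality $\rel{\exp\trs}=\rel{\trs}$ comes from Lemma~\ref{LEM:expanded}, not Lemma~\ref{LEM:exp}.
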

\begin{proof}
	The statement about the complexity follows from Lemma~\ref{LEM:product}. Then, we have
	\begin{align}
		\rel{\trs\rinp\autb} &= \rel{\exp\trs\rinp\exp\autb}\label{AL:trrestrA} \\
		&=\rel{\exp\trs}\rinp\lof{\exp\autb}\label{AL:trrestrB} \\
		&= \rel{\trs}\rinp \lof{\autb}.
	\end{align}
	Statement \eqref{AL:trrestrA} follows from Theorem~\ref{TH:product} and Ex.~\ref{EX:product}, and statement \eqref{AL:trrestrB} follows from Lemma~\ref{LEM:exp}.
\end{proof}

\begin{corollary}\label{COR:sat}\prlabel{COR:sat}
	Consider the witness version of the property satisfaction question for input-altering transducers $\trs$. The question can be answered in time $O(\szabc\trs\cdot\szabc\auta^2)$ even when the transducer $\trs$ involved has set specs. 
\end{corollary}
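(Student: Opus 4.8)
The plan is to follow the route of Remark~\ref{REM:sat:question} but to carry it out directly on the transducer with set specs, so that $\trs$ is never expanded. Let $\auta$ be the given $\ew$-NFA with $\lof{\auta}=L$. First I would replace $\trs$ and $\auta$ by their trim parts (linear time, and behaviour-preserving). Then, using the product construction of Corollary~\ref{LEM:transd:restr}, I would build
\[
\trt_1 \;:=\; \trs\rinp\auta, \qquad\qquad \trt \;:=\; \trt_1\rout\auta,
\]
taking the trim part after each product; note that by Ex.~\ref{EX:polymor:new} the operation ``$\rinp:\srel[\Gamma]\times\Gamma_{\ews}\Rightarrow\srel[\Gamma]$'' outputs labels in $\srel[\Gamma]$, so $\trt_1$ is again a transducer with set specs and the second product is well defined. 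Finally I would return the value of $\nonEmptyW(\trt)$ (Lemma~\ref{LEM:transd}.1): a returned pair $(u,v)$ is the desired witness, and the answer \None certifies that $L$ is $\trs$-independent.

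For correctness I would apply Corollary~\ref{LEM:transd:restr} twice and use $R\rinp L = R\cap(L\times\Gamma^*)$ and $R\rout L = R\cap(\Gamma^*\times L)$ from Ex.~\ref{EX:monoidops} (here $\Sigma=\Delta=\Gamma$):
\begin{align*}
\rel{\trt} = \rel{\trt_1}\rout\lof{\auta} &= \bigl(\rel{\trs}\rinp\lof{\auta}\bigr)\rout\lof{\auta} \\
&= \rel{\trs}\cap(L\times\Gamma^*)\cap(\Gamma^*\times L) = \rel{\trs}\cap(L\times L).
\end{align*}
So $(u,v)\in\rel{\trt}$ iff $u,v\in L$ and $(u,v)\in\rel{\trs}=\rel{\exp\trs}$, i.e.\ $v\in\trs(u)$. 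Since, by the Convention, ``$\trs$ is input-altering'' means $\exp\trs$ is input-altering, every pair in $\rel{\trt}$ automatically has $u\neq v$; hence the pairs in $\rel{\trt}$ are exactly the pairs of distinct $L$-words violating~\eqref{EQ:indep1}. Therefore $\nonEmptyW(\trt)$ answers the witness version of the property satisfaction question, exactly as $\nonEmptyW(\trs\rinp\auta\rout\auta)$ does for an ordinary input-altering transducer.

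For the \underline{complexity}, let $\delta,\delta'$ be the transition sets of $\trs,\auta$. Since $\auta$ is an $\ew$-NFA, its labels have size $O(1)$, so $\szw{\delta'}=O(\card{\delta'})\le O(\szabc{\auta})$, while $\szw{\delta}\le\szabc{\trs}$. As $\spp\rinp x$ is computable in time and size $O(\szabc\spp)$ (Ex.~\ref{EX:polymor:new}), Lemma~\ref{LEM:product} gives, for $\trt_1$, $\card{\delta_1}=O(\card\delta\,\card{\delta'})$ and $\szw{\delta_1}=O(\card\delta\,\szw{\delta'}+\card{\delta'}\,\szw\delta)=O(\card{\delta'}\,\szw\delta)$; hence $\szabc{\trt_1}=O(\szabc{\trs}\cdot\szabc{\auta})$, computable by Corollary~\ref{LEM:transd:restr} in time $O(\card\Gamma+\szabc{\trs}\cdot\szabc{\auta})$. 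Running the same estimate for $\trt=\trt_1\rout\auta$ yields $\card{\delta_{\trt}}=O(\card{\delta_1}\,\card{\delta'})$ and $\szw{\delta_{\trt}}=O(\card{\delta_1}\,\szw{\delta'}+\card{\delta'}\,\szw{\delta_1})=O(\card{\delta'}^2\,\szw\delta)$, so $\szabc{\trt}=O(\szabc{\trs}\cdot\szabc{\auta}^2)$ and $\trt$ is computed in time $O(\card\Gamma+\szabc{\trs}\cdot\szabc{\auta}^2)$. The trimming steps and the final call $\nonEmptyW(\trt)$ run in time $O(\szabc{\trt})$. Absorbing the additive $O(\card\Gamma)$ terms (a one-time preprocessing cost, as in Remark~\ref{REM:sat:question}; e.g.\ $\card\Gamma=O(\szabc{\auta})$ when $L$ involves all of $\Gamma$), the total is $O(\szabc{\trs}\cdot\szabc{\auta}^2)$.

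The step I expect to be the main obstacle is the size accounting for the \emph{iterated} product: one must check that applying $\rout\auta$ to the already-inflated $\trt_1$ multiplies the size by only one further factor $\szabc{\auta}$ and not more, which relies both on the NFA labels being of constant size (so that $\szw{\delta'}=O(\card{\delta'})$ and no extra $\szabc{\auta}$ creeps in) and on the output-size bound $\szabc{\spp\rout x}=O(\szabc\spp)$ of Ex.~\ref{EX:polymor:new}; one should also verify carefully that the $O(\card\Gamma)$ preprocessing terms of Corollary~\ref{LEM:transd:restr} stay subsumed. Beyond this, the argument is a direct composition of Corollary~\ref{LEM:transd:restr}, Lemma~\ref{LEM:transd}.1, and the input-altering convention, requiring no new ideas.
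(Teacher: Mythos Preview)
Your proposal is correct and follows exactly the approach the paper intends: the paper states Corollary~\ref{COR:sat} without an explicit proof, as an immediate consequence of Remark~\ref{REM:sat:question}, Corollary~\ref{LEM:transd:restr}, and Lemma~\ref{LEM:transd}.1, and your argument is a faithful and careful expansion of precisely that chain (build $\trs\rinp\auta\rout\auta$ via two applications of Corollary~\ref{LEM:transd:restr}, then call $\nonEmptyW$). Your size accounting for the iterated product is right, and the handling of the additive $O(\card\Gamma)$ term matches the paper's own convention of suppressing it in the final bound.
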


\begin{example}\label{EX:sat}\prlabel{EX:sat}
	We can apply the above corollary to the transducer $\trt_{\rm sub2}[\Gamma]$ of Example~\ref{EX:sym:transd}, where $\Gamma$ is the alphabet of $\autb$, so that we can decide whether a regular language is 1-substitution error-detecting in time $O(|\autb|^2)$. On the other hand, if we used the ordinary transducer $\exp \trt_{\rm sub2}[\Gamma]$ to decide the question, the required time would be $O(|\Gamma|^2\cdot|\autb|^2)$.
\end{example}
%
%

\section{Concluding Remarks}\label{SEC:conclude}\prlabel{SEC:conclude}
Regular expressions and transducers over pairing specs  allow us to describe many independence properties in a simple, alphabet invariant, way and such that these alphabet invariant objects can be processed as efficiently as their ordinary (alphabet dependent) counterparts. This is possible due to the efficiency of basic algorithms on these objects presented here. A direction for further research is to investigate how algorithms not considered here can be extended to regular expressions and transducers over pairing specs; for example, algorithms involving transducers that realize synchronous relations.
\pnsi
Algorithms on \emph{deterministic} machines with set specs might not work as efficiently as their alphabet dependent counterparts. For example the question of whether $w\in\lof{\autb}$, for given word $w$ and  DFA $\autb$ with set specs, is probably not decidable efficiently within time $O(|w|)$---see for instance the DFA with set specs in Fig.~\ref{FIG:sym:autom}. Despite this, it might be of interest to investigate this question further.
\pnsi   
Label sets can have any format as long as one provides their behaviour. For example, a label can be a string representation of a FAdo automaton, \cite{Fado}, whose behaviour of course is a regular language. At this broad level, we were able to obtain a few  results like the product construction in Theorem~\ref{TH:product}. A research direction is to investigate whether more results can be obtained at this level, or even for label sets satisfying some constraint. For example, whether partial derivatives can be defined for regular expressions involving labels other than set and pairing specs\footnote{While we have not obtained in this work the partial derivative transducer corresponding to a regular expression involving pairing specs, it is our immediate plan to do so.}.
\if\DRAFT1\textbf{\color{blue}\pnsn Anything to add?}\fi

\if\DRAFT1\pbsn\pbsn\textbf{\color{blue} Add more references ?}\fi

\bibliographystyle{plain}
\bibliography{refs}

\end{document}